\documentclass{amsart}
\usepackage{ulem}
\usepackage{amsfonts}
\usepackage{amsmath}
\usepackage{amssymb}
\usepackage{geometry}
\usepackage{mathtools}
\usepackage{color}
\usepackage{amsthm}
\usepackage{ifpdf}
\usepackage{todonotes}
\usepackage{graphicx, caption}
\usepackage{bm}
\usepackage{comment}
\usepackage{pdflscape,afterpage}
\usepackage[pdftex,
  	bookmarks=true,
	unicode=true,
	pdfborder={0 0 0},
	pdfstartview={FitV},
	pdftitle={},
	colorlinks=false,
	linktoc=all,
	plainpages=false,
	pdfpagelabels,
	pdffitwindow=true]{hyperref}

\makeatletter
\@namedef{subjclassname@2020}{2020 MSC}
\makeatother

\setcounter{MaxMatrixCols}{10}

\ifpdf
\else
\fi
\theoremstyle{plain}
\newtheorem{theorem}{Theorem}[section]
\newtheorem{corollary}[theorem]{Corollary}
\newtheorem{lemma}[theorem]{Lemma}
\newtheorem{proposition}[theorem]{Proposition}
\theoremstyle{definition}

\newtheorem{definition}[theorem]{Definition}
\newtheorem{example}[theorem]{Example}
\theoremstyle{remark}
\newtheorem{remark}[theorem]{Remark}
\newcommand{\R}{\mathbb{R}}

\renewcommand{\bar}{\overline}
\renewcommand{\tilde}{\widetilde}
\renewcommand{\hat}{\widehat}

\numberwithin{equation}{section}

\typeout{TCILATEX Macros for Scientific Word 4.0 <12 Mar 2002>.}
\typeout{NOTICE:  This macro file is NOT proprietary and may be
freely copied and distributed.}
\makeatletter

%

%
\newcount\@hour\newcount\@minute\chardef\@x10\chardef\@xv60
\def\tcitime{
\def\@time{%
  \@minute\time\@hour\@minute\divide\@hour\@xv
  \ifnum\@hour<\@x 0\fi\the\@hour:%
  \multiply\@hour\@xv\advance\@minute-\@hour
  \ifnum\@minute<\@x 0\fi\the\@minute
  }}%


\def\x@hyperref#1#2#3{%
   \catcode`\~ = 12
   \catcode`\$ = 12
   \catcode`\_ = 12
   \catcode`\# = 12
   \catcode`\& = 12
   \y@hyperref{#1}{#2}{#3}%
}

\def\y@hyperref#1#2#3#4{%
   #2\ref{#4}#3
   \catcode`\~ = 13
   \catcode`\$ = 3
   \catcode`\_ = 8
   \catcode`\# = 6
   \catcode`\& = 4
}

\@ifundefined{hyperref}{\let\hyperref\x@hyperref}{}
\@ifundefined{msihyperref}{\let\msihyperref\x@hyperref}{}

\@ifundefined{qExtProgCall}{\def\qExtProgCall#1#2#3#4#5#6{\relax}}{}
%
%
%
%
\def\QCTOpt[#1]#2{%
  \def\QCTOptB{#1}
  \def\QCTOptA{#2}
}
\def\QCTNOpt#1{%
  \def\QCTOptA{#1}
  \let\QCTOptB\empty
}
\def\Qct{%
  \@ifnextchar[{%
    \QCTOpt}{\QCTNOpt}
}
\def\QCBOpt[#1]#2{%
  \def\QCBOptB{#1}%
  \def\QCBOptA{#2}%
}
\def\QCBNOpt#1{%
  \def\QCBOptA{#1}%
  \let\QCBOptB\empty
}
\def\Qcb{%
  \@ifnextchar[{%
    \QCBOpt}{\QCBNOpt}%
}
\def\PrepCapArgs{%
  \ifx\QCBOptA\empty
    \ifx\QCTOptA\empty
      {}%
    \else
      \ifx\QCTOptB\empty
        {\QCTOptA}%
      \else
        [\QCTOptB]{\QCTOptA}%
      \fi
    \fi
  \else
    \ifx\QCBOptA\empty
      {}%
    \else
      \ifx\QCBOptB\empty
        {\QCBOptA}%
      \else
        [\QCBOptB]{\QCBOptA}%
      \fi
    \fi
  \fi
}
\newcount\GRAPHICSTYPE
\GRAPHICSTYPE=\z@
\def\GRAPHICSPS#1{%
 \ifcase\GRAPHICSTYPE
   \special{ps: #1}%
 \or
   \special{language "PS", include "#1"}%
 \fi
}%
%
%
%

\def\graffile#1#2#3#4{%
    \bgroup
       \@inlabelfalse
       \leavevmode
       \@ifundefined{bbl@deactivate}{\def~{\string~}}{\activesoff}%
        \raise -#4 \BOXTHEFRAME{%
           \hbox to #2{\raise #3\hbox to #2{\null #1\hfil}}}%
    \egroup
}%
%
\def\draftbox#1#2#3#4{%
 \leavevmode\raise -#4 \hbox{%
  \frame{\rlap{\protect\tiny #1}\hbox to #2%
   {\vrule height#3 width\z@ depth\z@\hfil}%
  }%
 }%
}%
\newcount\@msidraft
\@msidraft=\z@
\let\nographics=\@msidraft
\newif\ifwasdraft
\wasdraftfalse

\def\GRAPHIC#1#2#3#4#5{%
   \ifnum\@msidraft=\@ne\draftbox{#2}{#3}{#4}{#5}%
   \else\graffile{#1}{#3}{#4}{#5}%
   \fi
}
\def\addtoLaTeXparams#1{%
    \edef\LaTeXparams{\LaTeXparams #1}}%
%

\newif\ifBoxFrame \BoxFramefalse
\newif\ifOverFrame \OverFramefalse
\newif\ifUnderFrame \UnderFramefalse

\def\BOXTHEFRAME#1{%
   \hbox{%
      \ifBoxFrame
         \frame{#1}%
      \else
         {#1}%
      \fi
   }%
}

\def\doFRAMEparams#1{\BoxFramefalse\OverFramefalse\UnderFramefalse\readFRAMEparams#1\end}%
\def\readFRAMEparams#1{%
 \ifx#1\end%
  \let\next=\relax
  \else
  \ifx#1i\dispkind=\z@\fi
  \ifx#1d\dispkind=\@ne\fi
  \ifx#1f\dispkind=\tw@\fi
  \ifx#1t\addtoLaTeXparams{t}\fi
  \ifx#1b\addtoLaTeXparams{b}\fi
  \ifx#1p\addtoLaTeXparams{p}\fi
  \ifx#1h\addtoLaTeXparams{h}\fi
  \ifx#1X\BoxFrametrue\fi
  \ifx#1O\OverFrametrue\fi
  \ifx#1U\UnderFrametrue\fi
  \ifx#1w
    \ifnum\@msidraft=1\wasdrafttrue\else\wasdraftfalse\fi
    \@msidraft=\@ne
  \fi
  \let\next=\readFRAMEparams
  \fi
 \next
 }%
%

\def\IFRAME#1#2#3#4#5#6{%
      \bgroup
      \let\QCTOptA\empty
      \let\QCTOptB\empty
      \let\QCBOptA\empty
      \let\QCBOptB\empty
      #6%
      \parindent=0pt
      \leftskip=0pt
      \rightskip=0pt
      \setbox0=\hbox{\QCBOptA}%
      \@tempdima=#1\relax
      \ifOverFrame
          \typeout{This is not implemented yet}%
          \show\HELP
      \else
         \ifdim\wd0>\@tempdima
            \advance\@tempdima by \@tempdima
            \ifdim\wd0 >\@tempdima
               \setbox1 =\vbox{%
                  \unskip\hbox to \@tempdima{\hfill\GRAPHIC{#5}{#4}{#1}{#2}{#3}\hfill}%
                  \unskip\hbox to \@tempdima{\parbox[b]{\@tempdima}{\QCBOptA}}%
               }%
               \wd1=\@tempdima
            \else
               \textwidth=\wd0
               \setbox1 =\vbox{%
                 \noindent\hbox to \wd0{\hfill\GRAPHIC{#5}{#4}{#1}{#2}{#3}\hfill}\\%
                 \noindent\hbox{\QCBOptA}%
               }%
               \wd1=\wd0
            \fi
         \else
            \ifdim\wd0>0pt
              \hsize=\@tempdima
              \setbox1=\vbox{%
                \unskip\GRAPHIC{#5}{#4}{#1}{#2}{0pt}%
                \break
                \unskip\hbox to \@tempdima{\hfill \QCBOptA\hfill}%
              }%
              \wd1=\@tempdima
           \else
              \hsize=\@tempdima
              \setbox1=\vbox{%
                \unskip\GRAPHIC{#5}{#4}{#1}{#2}{0pt}%
              }%
              \wd1=\@tempdima
           \fi
         \fi
         \@tempdimb=\ht1
         \advance\@tempdimb by -#2
         \advance\@tempdimb by #3
         \leavevmode
         \raise -\@tempdimb \hbox{\box1}%
      \fi
      \egroup%
}%
%
\def\DFRAME#1#2#3#4#5{%
  \hfil\break
  \bgroup
     \leftskip\@flushglue
     \rightskip\@flushglue
     \parindent\z@
     \parfillskip\z@skip
     \let\QCTOptA\empty
     \let\QCTOptB\empty
     \let\QCBOptA\empty
     \let\QCBOptB\empty
     \vbox\bgroup
        \ifOverFrame
           #5\QCTOptA\par
        \fi
        \GRAPHIC{#4}{#3}{#1}{#2}{\z@}%
        \ifUnderFrame
           \break#5\QCBOptA
        \fi
     \egroup
   \egroup
   \break
}%
%
\def\FFRAME#1#2#3#4#5#6#7{%
  \@ifundefined{floatstyle}
    {
     \begin{figure}[#1]%
    }
    {
     \ifx#1h
      \begin{figure}[H]%
     \else
      \begin{figure}[#1]%
     \fi
    }
  \let\QCTOptA\empty
  \let\QCTOptB\empty
  \let\QCBOptA\empty
  \let\QCBOptB\empty
  \ifOverFrame
    #4
    \ifx\QCTOptA\empty
    \else
      \ifx\QCTOptB\empty
        \caption{\QCTOptA}%
      \else
        \caption[\QCTOptB]{\QCTOptA}%
      \fi
    \fi
    \ifUnderFrame\else
      \label{#5}%
    \fi
  \else
    \UnderFrametrue%
  \fi
  \begin{center}\GRAPHIC{#7}{#6}{#2}{#3}{\z@}\end{center}%
  \ifUnderFrame
    #4
    \ifx\QCBOptA\empty
      \caption{}%
    \else
      \ifx\QCBOptB\empty
        \caption{\QCBOptA}%
      \else
        \caption[\QCBOptB]{\QCBOptA}%
      \fi
    \fi
    \label{#5}%
  \fi
  \end{figure}%
 }%
%
%
%
%
%
\newcount\dispkind%

\def\makeactives{
  \catcode`\"=\active
  \catcode`\;=\active
  \catcode`\:=\active
  \catcode`\'=\active
  \catcode`\~=\active
}
\bgroup
   \makeactives
   \gdef\activesoff{%
      \def"{\string"}%
      \def;{\string;}%
      \def:{\string:}%
      \def'{\string'}%
      \def~{\string~}%
    }
\egroup

\def\FRAME#1#2#3#4#5#6#7#8{%
 \bgroup
 \ifnum\@msidraft=\@ne
   \wasdrafttrue
 \else
   \wasdraftfalse%
 \fi
 \def\LaTeXparams{}%
 \dispkind=\z@
 \def\LaTeXparams{}%
 \doFRAMEparams{#1}%
 \ifnum\dispkind=\z@\IFRAME{#2}{#3}{#4}{#7}{#8}{#5}\else
  \ifnum\dispkind=\@ne\DFRAME{#2}{#3}{#7}{#8}{#5}\else
   \ifnum\dispkind=\tw@
    \edef\@tempa{\noexpand\FFRAME{\LaTeXparams}}%
    \@tempa{#2}{#3}{#5}{#6}{#7}{#8}%
    \fi
   \fi
  \fi
  \ifwasdraft\@msidraft=1\else\@msidraft=0\fi{}%
  \egroup
 }%
%

\def\TEXUX#1{"texux"}

%
%
%
%
%
%
%
%
%
%

%
\long\def\QQQ#1#2{%
     \long\expandafter\def\csname#1\endcsname{#2}}%
\@ifundefined{QTP}{\def\QTP#1{}}{}
\@ifundefined{QEXCLUDE}{\def\QEXCLUDE#1{}}{}
\@ifundefined{Qlb}{}{}
\@ifundefined{Qlt}{}{}
\long\def\QQA#1#2{}%
\def\QTR#1#2{{\csname#1\endcsname {#2}}}%
\def\EXPAND#1[#2]#3{}%
\def\NOEXPAND#1[#2]#3{}%
\def\LaTeXparent#1{}%
\def\ChildStyles#1{}%
\def\ChildDefaults#1{}%
\def\QTagDef#1#2#3{}%

\@ifundefined{correctchoice}{}{}
\@ifundefined{HTML}{\def\HTML#1{\relax}}{}
\@ifundefined{TCIIcon}{\def\TCIIcon#1#2#3#4{\relax}}{}
\if@compatibility
  \typeout{Not defining UNICODE  U or CustomNote commands for LaTeX 2.09.}
\else
  \providecommand{\UNICODE}[2][]{\protect\rule{.1in}{.1in}}
  \providecommand{\U}[1]{\protect\rule{.1in}{.1in}}
  
\fi

\@ifundefined{lambdabar}{
      
   }{}

%
\@ifundefined{StyleEditBeginDoc}{}{}
%
\def\QQfnmark#1{\footnotemark}

%
%
\@ifundefined{TCIMAKEINDEX}{}{\makeindex}%
%
\@ifundefined{abstract}{%
 \def\abstract{%
  \if@twocolumn
   \section*{Abstract (Not appropriate in this style!)}%
   \else \small
   \begin{center}{\bf Abstract\vspace{-.5em}\vspace{\z@}}\end{center}%
   \quotation
   \fi
  }%
 }{%
 }%
\@ifundefined{endabstract}{\def\endabstract
  {\if@twocolumn\else\endquotation\fi}}{}%
\@ifundefined{maketitle}{\def\maketitle#1{}}{}%
\@ifundefined{affiliation}{\def\affiliation#1{}}{}%
\@ifundefined{proof}{}{}%
\@ifundefined{endproof}{}{}%
\@ifundefined{newfield}{\def\newfield#1#2{}}{}%
\@ifundefined{chapter}{\def\chapter#1{\par(Chapter head:)#1\par }%
 \newcount\c@chapter}{}%
\@ifundefined{part}{\def\part#1{\par(Part head:)#1\par }}{}%
\@ifundefined{section}{\def\section#1{\par(Section head:)#1\par }}{}%
\@ifundefined{subsection}{\def\subsection#1%
 {\par(Subsection head:)#1\par }}{}%
\@ifundefined{subsubsection}{\def\subsubsection#1%
 {\par(Subsubsection head:)#1\par }}{}%
\@ifundefined{paragraph}{\def\paragraph#1%
 {\par(Subsubsubsection head:)#1\par }}{}%
\@ifundefined{subparagraph}{\def\subparagraph#1%
 {\par(Subsubsubsubsection head:)#1\par }}{}%
\@ifundefined{therefore}{}{}%
\@ifundefined{backepsilon}{}{}%
\@ifundefined{yen}{}{}%
\@ifundefined{registered}{%
   \def\registered{\relax\ifmmode{}\r@gistered
                    \else$\m@th\r@gistered$\fi}%
 \def\r@gistered{^{\ooalign
  {\hfil\raise.07ex\hbox{$\scriptstyle\rm\text{R}$}\hfil\crcr
  \mathhexbox20D}}}}{}%
\@ifundefined{Eth}{}{}%
\@ifundefined{eth}{}{}%
\@ifundefined{Thorn}{}{}%
\@ifundefined{thorn}{}{}%
%
\@ifundefined{degree}{}{}%
%
\newdimen\theight
\@ifundefined{Column}{\def\Column{%
 \vadjust{\setbox\z@=\hbox{\scriptsize\quad\quad tcol}%
  \theight=\ht\z@\advance\theight by \dp\z@\advance\theight by \lineskip
  \kern -\theight \vbox to \theight{%
   \rightline{\rlap{\box\z@}}%
   \vss
   }%
  }%
 }}{}%
\@ifundefined{qed}{\def\qed{%
 \ifhmode\unskip\nobreak\fi\ifmmode\ifinner\else\hskip5\p@\fi\fi
 \hbox{\hskip5\p@\vrule width4\p@ height6\p@ depth1.5\p@\hskip\p@}%
 }}{}%
\@ifundefined{cents}{}{}%
\@ifundefined{tciLaplace}{}{}%
\@ifundefined{tciFourier}{}{}%
\@ifundefined{textcurrency}{}{}%
\@ifundefined{texteuro}{}{}%
\@ifundefined{textfranc}{}{}%
\@ifundefined{textlira}{}{}%
\@ifundefined{textpeseta}{}{}%
\@ifundefined{miss}{\def\miss{\hbox{\vrule height2\p@ width 2\p@ depth\z@}}}{}%
\@ifundefined{vvert}{}{}
\@ifundefined{tcol}{\def\tcol#1{{\baselineskip=6\p@ \vcenter{#1}} \Column}}{}%
\@ifundefined{dB}{}{}
\@ifundefined{mB}{}{}
\@ifundefined{nB}{}{}
\@ifundefined{note}{}{}%
\def\newfmtname{LaTeX2e}
%
\ifx\fmtname\newfmtname
  \DeclareOldFontCommand{\rm}{\normalfont\rmfamily}{\mathrm}
  \DeclareOldFontCommand{\sf}{\normalfont\sffamily}{\mathsf}
  \DeclareOldFontCommand{\tt}{\normalfont\ttfamily}{\mathtt}
  \DeclareOldFontCommand{\bf}{\normalfont\bfseries}{\mathbf}
  \DeclareOldFontCommand{\it}{\normalfont\itshape}{\mathit}
  \DeclareOldFontCommand{\sl}{\normalfont\slshape}{\@nomath\sl}
  \DeclareOldFontCommand{\sc}{\normalfont\scshape}{\@nomath\sc}
\fi

%

\def\alpha{{\Greekmath 010B}}%
\def\beta{{\Greekmath 010C}}%
\def\gamma{{\Greekmath 010D}}%
\def\delta{{\Greekmath 010E}}%
\def\epsilon{{\Greekmath 010F}}%
\def\zeta{{\Greekmath 0110}}%
\def\eta{{\Greekmath 0111}}%
\def\theta{{\Greekmath 0112}}%
\def\iota{{\Greekmath 0113}}%
\def\kappa{{\Greekmath 0114}}%
\def\lambda{{\Greekmath 0115}}%
\def\mu{{\Greekmath 0116}}%
\def\nu{{\Greekmath 0117}}%
\def\xi{{\Greekmath 0118}}%
\def\pi{{\Greekmath 0119}}%
\def\rho{{\Greekmath 011A}}%
\def\sigma{{\Greekmath 011B}}%
\def\tau{{\Greekmath 011C}}%
\def\upsilon{{\Greekmath 011D}}%
\def\phi{{\Greekmath 011E}}%
\def\chi{{\Greekmath 011F}}%
\def\psi{{\Greekmath 0120}}%
\def\omega{{\Greekmath 0121}}%
\def\varepsilon{{\Greekmath 0122}}%
\def\vartheta{{\Greekmath 0123}}%
\def\varpi{{\Greekmath 0124}}%
\def\varrho{{\Greekmath 0125}}%
\def\varsigma{{\Greekmath 0126}}%
\def\varphi{{\Greekmath 0127}}%

\def\nabla{{\Greekmath 0272}}
\def\FindBoldGroup{%
   {\setbox0=\hbox{$\mathbf{x\global\edef\theboldgroup{\the\mathgroup}}$}}%
}

\def\Greekmath#1#2#3#4{%
    \if@compatibility
        \ifnum\mathgroup=\symbold
           \mathchoice{\mbox{\boldmath$\displaystyle\mathchar"#1#2#3#4$}}%
                      {\mbox{\boldmath$\textstyle\mathchar"#1#2#3#4$}}%
                      {\mbox{\boldmath$\scriptstyle\mathchar"#1#2#3#4$}}%
                      {\mbox{\boldmath$\scriptscriptstyle\mathchar"#1#2#3#4$}}%
        \else
           \mathchar"#1#2#3#4%
        \fi
    \else
        \FindBoldGroup
        \ifnum\mathgroup=\theboldgroup 
           \mathchoice{\mbox{\boldmath$\displaystyle\mathchar"#1#2#3#4$}}%
                      {\mbox{\boldmath$\textstyle\mathchar"#1#2#3#4$}}%
                      {\mbox{\boldmath$\scriptstyle\mathchar"#1#2#3#4$}}%
                      {\mbox{\boldmath$\scriptscriptstyle\mathchar"#1#2#3#4$}}%
        \else
           \mathchar"#1#2#3#4%
        \fi
      \fi}

\newif\ifGreekBold  \GreekBoldfalse
\let\SAVEPBF=\pbf
\def\pbf{\GreekBoldtrue\SAVEPBF}%

\@ifundefined{theorem}{\newtheorem{theorem}{Theorem}}{}
\@ifundefined{lemma}{\newtheorem{lemma}[theorem]{Lemma}}{}
\@ifundefined{corollary}{\newtheorem{corollary}[theorem]{Corollary}}{}
\@ifundefined{conjecture}{}{}
\@ifundefined{proposition}{\newtheorem{proposition}[theorem]{Proposition}}{}
\@ifundefined{axiom}{}{}
\@ifundefined{remark}{\newtheorem{remark}{Remark}}{}
\@ifundefined{example}{\newtheorem{example}{Example}}{}
\@ifundefined{exercise}{}{}
\@ifundefined{definition}{}{}

\@ifundefined{mathletters}{%
  \newcounter{equationnumber}
  \def\mathletters{%
     \addtocounter{equation}{1}
     \edef\@currentlabel{\theequation}%
     \setcounter{equationnumber}{\c@equation}
     \setcounter{equation}{0}%
     \edef\theequation{\@currentlabel\noexpand\alph{equation}}%
  }
  
}{}

\@ifundefined{BibTeX}{%
    \def\BibTeX{{\rm B\kern-.05em{\sc i\kern-.025em b}\kern-.08em
                 T\kern-.1667em\lower.7ex\hbox{E}\kern-.125emX}}}{}%
\@ifundefined{AmS}%
    {\def\AmS{{\protect\usefont{OMS}{cmsy}{m}{n}%
                A\kern-.1667em\lower.5ex\hbox{M}\kern-.125emS}}}{}%
\@ifundefined{AmSTeX}{}{}%
%

\def\@@eqncr{\let\@tempa\relax
    \ifcase\@eqcnt \def\@tempa{& & &}\or \def\@tempa{& &}%
      \else \def\@tempa{&}\fi
     \@tempa
     \if@eqnsw
        \iftag@
           \@taggnum
        \else
           \@eqnnum\stepcounter{equation}%
        \fi
     \fi
     \global\tag@false
     \global\@eqnswtrue
     \global\@eqcnt\z@\cr}

\def\TCItag{\@ifnextchar*{\@TCItagstar}{\@TCItag}}
\def\@TCItag#1{%
    \global\tag@true
    \global\def\@taggnum{(#1)}}
\def\@TCItagstar*#1{%
    \global\tag@true
    \global\def\@taggnum{#1}}
%
%
%
%
%
%
%
%
%
%
%
%
%
%
%
%
%
%
%
%
%
%
%
%
%
%
%
%
%
%
%
%
%
%
%
%
%
%
%
%
%
%
%
%
%
%
%
%
%
%
%
%
%
%
%
%
%
%

\if@compatibility\else
  \RequirePackage{amsmath}
  \makeatother
   
\fi

\typeout{TCILATEX defining AMS-like constructs in LaTeX 2.09 COMPATIBILITY MODE}
\def\ExitTCILatex{\makeatother }

\bgroup
\ifx\ds@amstex\relax
   \message{amstex already loaded}\aftergroup\ExitTCILatex
\else
   \@ifpackageloaded{amsmath}%
      {\message{amsmath already loaded}\aftergroup\ExitTCILatex}
      {}
   \@ifpackageloaded{amstex}%
      {\message{amstex already loaded}\aftergroup\ExitTCILatex}
      {}
   \@ifpackageloaded{amsgen}%
      {\message{amsgen already loaded}\aftergroup\ExitTCILatex}
      {}
\fi
\egroup

%
%
\let\DOTSI\relax
\def\RIfM@{\relax\ifmmode}%
\def\FN@{\futurelet\next}%
\newcount\intno@
\def\iint{\DOTSI\intno@\tw@\FN@\ints@}%
\def\iiint{\DOTSI\intno@\thr@@\FN@\ints@}%
\def\iiiint{\DOTSI\intno@4 \FN@\ints@}%
\def\idotsint{\DOTSI\intno@\z@\FN@\ints@}%
\def\ints@{\findlimits@\ints@@}%
\newif\iflimtoken@
\newif\iflimits@
\def\findlimits@{\limtoken@true\ifx\next\limits\limits@true
 \else\ifx\next\nolimits\limits@false\else
 \limtoken@false\ifx\ilimits@\nolimits\limits@false\else
 \ifinner\limits@false\else\limits@true\fi\fi\fi\fi}%
\def\multint@{\int\ifnum\intno@=\z@\intdots@                          
 \else\intkern@\fi                                                    
 \ifnum\intno@>\tw@\int\intkern@\fi                                   
 \ifnum\intno@>\thr@@\int\intkern@\fi                                 
 \int}
\def\multintlimits@{\intop\ifnum\intno@=\z@\intdots@\else\intkern@\fi
 \ifnum\intno@>\tw@\intop\intkern@\fi
 \ifnum\intno@>\thr@@\intop\intkern@\fi\intop}%
\def\intic@{%
    \mathchoice{\hskip.5em}{\hskip.4em}{\hskip.4em}{\hskip.4em}}%
\def\negintic@{\mathchoice
 {\hskip-.5em}{\hskip-.4em}{\hskip-.4em}{\hskip-.4em}}%
\def\ints@@{\iflimtoken@                                              
 \def\ints@@@{\iflimits@\negintic@
   \mathop{\intic@\multintlimits@}\limits                             
  \else\multint@\nolimits\fi                                          
  \eat@}
 \else                                                                
 \def\ints@@@{\iflimits@\negintic@
  \mathop{\intic@\multintlimits@}\limits\else
  \multint@\nolimits\fi}\fi\ints@@@}%
\def\intkern@{\mathchoice{\!\!\!}{\!\!}{\!\!}{\!\!}}%
\def\plaincdots@{\mathinner{\cdotp\cdotp\cdotp}}%
\def\intdots@{\mathchoice{\plaincdots@}%
 {{\cdotp}\mkern1.5mu{\cdotp}\mkern1.5mu{\cdotp}}%
 {{\cdotp}\mkern1mu{\cdotp}\mkern1mu{\cdotp}}%
 {{\cdotp}\mkern1mu{\cdotp}\mkern1mu{\cdotp}}}%
%
%
%
\def\RIfM@{\relax\protect\ifmmode}
\def\text{\RIfM@\expandafter\text@\else\expandafter\mbox\fi}
\let\nfss@text\text
\def\text@#1{\mathchoice
   {\textdef@\displaystyle\f@size{#1}}%
   {\textdef@\textstyle\tf@size{\firstchoice@false #1}}%
   {\textdef@\textstyle\sf@size{\firstchoice@false #1}}%
   {\textdef@\textstyle \ssf@size{\firstchoice@false #1}}%
   \glb@settings}

\def\textdef@#1#2#3{\hbox{{%
                    \everymath{#1}%
                    \let\f@size#2\selectfont
                    #3}}}
\newif\iffirstchoice@
\firstchoice@true
%
%
\def\Let@{\relax\iffalse{\fi\let\\=\cr\iffalse}\fi}%
\def\vspace@{\def\vspace##1{\crcr\noalign{\vskip##1\relax}}}%
\def\multilimits@{\bgroup\vspace@\Let@
 \baselineskip\fontdimen10 \scriptfont\tw@
 \advance\baselineskip\fontdimen12 \scriptfont\tw@
 \lineskip\thr@@\fontdimen8 \scriptfont\thr@@
 \lineskiplimit\lineskip
 \vbox\bgroup\ialign\bgroup\hfil$\m@th\scriptstyle{##}$\hfil\crcr}%
\def\Sb{_\multilimits@}%
\def\endSb{\crcr\egroup\egroup\egroup}%
\def\Sp{^\multilimits@}%

%
%
%
\newdimen\ex@
\ex@.2326ex
\def\rightarrowfill@#1{$#1\m@th\mathord-\mkern-6mu\cleaders
 \hbox{$#1\mkern-2mu\mathord-\mkern-2mu$}\hfill
 \mkern-6mu\mathord\rightarrow$}%
\def\leftarrowfill@#1{$#1\m@th\mathord\leftarrow\mkern-6mu\cleaders
 \hbox{$#1\mkern-2mu\mathord-\mkern-2mu$}\hfill\mkern-6mu\mathord-$}%
\def\leftrightarrowfill@#1{$#1\m@th\mathord\leftarrow
\mkern-6mu\cleaders
 \hbox{$#1\mkern-2mu\mathord-\mkern-2mu$}\hfill
 \mkern-6mu\mathord\rightarrow$}%
\def\overrightarrow{\mathpalette\overrightarrow@}%
\def\overrightarrow@#1#2{\vbox{\ialign{##\crcr\rightarrowfill@#1\crcr
 \noalign{\kern-\ex@\nointerlineskip}$\m@th\hfil#1#2\hfil$\crcr}}}%

\def\overleftarrow{\mathpalette\overleftarrow@}%
\def\overleftarrow@#1#2{\vbox{\ialign{##\crcr\leftarrowfill@#1\crcr
 \noalign{\kern-\ex@\nointerlineskip}$\m@th\hfil#1#2\hfil$\crcr}}}%
\def\overleftrightarrow{\mathpalette\overleftrightarrow@}%
\def\overleftrightarrow@#1#2{\vbox{\ialign{##\crcr
   \leftrightarrowfill@#1\crcr
 \noalign{\kern-\ex@\nointerlineskip}$\m@th\hfil#1#2\hfil$\crcr}}}%
\def\underrightarrow{\mathpalette\underrightarrow@}%
\def\underrightarrow@#1#2{\vtop{\ialign{##\crcr$\m@th\hfil#1#2\hfil
  $\crcr\noalign{\nointerlineskip}\rightarrowfill@#1\crcr}}}%

\def\underleftarrow{\mathpalette\underleftarrow@}%
\def\underleftarrow@#1#2{\vtop{\ialign{##\crcr$\m@th\hfil#1#2\hfil
  $\crcr\noalign{\nointerlineskip}\leftarrowfill@#1\crcr}}}%
\def\underleftrightarrow{\mathpalette\underleftrightarrow@}%
\def\underleftrightarrow@#1#2{\vtop{\ialign{##\crcr$\m@th
  \hfil#1#2\hfil$\crcr
 \noalign{\nointerlineskip}\leftrightarrowfill@#1\crcr}}}%

\def\qopnamewl@#1{\mathop{\operator@font#1}\nlimits@}
\let\nlimits@\displaylimits
\def\setboxz@h{\setbox\z@\hbox}

\def\varlim@#1#2{\mathop{\vtop{\ialign{##\crcr
 \hfil$#1\m@th\operator@font lim$\hfil\crcr
 \noalign{\nointerlineskip}#2#1\crcr
 \noalign{\nointerlineskip\kern-\ex@}\crcr}}}}

 \def\rightarrowfill@#1{\m@th\setboxz@h{$#1-$}\ht\z@\z@
  $#1\copy\z@\mkern-6mu\cleaders
  \hbox{$#1\mkern-2mu\box\z@\mkern-2mu$}\hfill
  \mkern-6mu\mathord\rightarrow$}
\def\leftarrowfill@#1{\m@th\setboxz@h{$#1-$}\ht\z@\z@
  $#1\mathord\leftarrow\mkern-6mu\cleaders
  \hbox{$#1\mkern-2mu\copy\z@\mkern-2mu$}\hfill
  \mkern-6mu\box\z@$}

\def\projlim{\qopnamewl@{proj\,lim}}
\def\injlim{\qopnamewl@{inj\,lim}}
\def\varinjlim{\mathpalette\varlim@\rightarrowfill@}
\def\varprojlim{\mathpalette\varlim@\leftarrowfill@}
\def\varliminf{\mathpalette\varliminf@{}}
\def\varliminf@#1{\mathop{\underline{\vrule\@depth.2\ex@\@width\z@
   \hbox{$#1\m@th\operator@font lim$}}}}
\def\varlimsup{\mathpalette\varlimsup@{}}
\def\varlimsup@#1{\mathop{\overline
  {\hbox{$#1\m@th\operator@font lim$}}}}

%
%
%
%
%
%
\begingroup \catcode `|=0 \catcode `[= 1
\catcode`]=2 \catcode `\{=12 \catcode `\}=12
\catcode`\\=12
|gdef|@alignverbatim#1\end{align}[#1|end[align]]
|gdef|@salignverbatim#1\end{align*}[#1|end[align*]]

|gdef|@alignatverbatim#1\end{alignat}[#1|end[alignat]]
|gdef|@salignatverbatim#1\end{alignat*}[#1|end[alignat*]]

|gdef|@xalignatverbatim#1\end{xalignat}[#1|end[xalignat]]
|gdef|@sxalignatverbatim#1\end{xalignat*}[#1|end[xalignat*]]

|gdef|@gatherverbatim#1\end{gather}[#1|end[gather]]
|gdef|@sgatherverbatim#1\end{gather*}[#1|end[gather*]]

|gdef|@gatherverbatim#1\end{gather}[#1|end[gather]]
|gdef|@sgatherverbatim#1\end{gather*}[#1|end[gather*]]

|gdef|@multilineverbatim#1\end{multiline}[#1|end[multiline]]
|gdef|@smultilineverbatim#1\end{multiline*}[#1|end[multiline*]]

|gdef|@arraxverbatim#1\end{arrax}[#1|end[arrax]]
|gdef|@sarraxverbatim#1\end{arrax*}[#1|end[arrax*]]

|gdef|@tabulaxverbatim#1\end{tabulax}[#1|end[tabulax]]
|gdef|@stabulaxverbatim#1\end{tabulax*}[#1|end[tabulax*]]

|endgroup

\def\align{\@verbatim \frenchspacing\@vobeyspaces \@alignverbatim
You are using the "align" environment in a style in which it is not defined.}

\@namedef{align*}{\@verbatim\@salignverbatim
You are using the "align*" environment in a style in which it is not defined.}
\expandafter\let\csname endalign*\endcsname =\endtrivlist

\def\alignat{\@verbatim \frenchspacing\@vobeyspaces \@alignatverbatim
You are using the "alignat" environment in a style in which it is not defined.}

\@namedef{alignat*}{\@verbatim\@salignatverbatim
You are using the "alignat*" environment in a style in which it is not defined.}
\expandafter\let\csname endalignat*\endcsname =\endtrivlist

\def\xalignat{\@verbatim \frenchspacing\@vobeyspaces \@xalignatverbatim
You are using the "xalignat" environment in a style in which it is not defined.}

\@namedef{xalignat*}{\@verbatim\@sxalignatverbatim
You are using the "xalignat*" environment in a style in which it is not defined.}
\expandafter\let\csname endxalignat*\endcsname =\endtrivlist

\def\gather{\@verbatim \frenchspacing\@vobeyspaces \@gatherverbatim
You are using the "gather" environment in a style in which it is not defined.}

\@namedef{gather*}{\@verbatim\@sgatherverbatim
You are using the "gather*" environment in a style in which it is not defined.}
\expandafter\let\csname endgather*\endcsname =\endtrivlist

\def\multiline{\@verbatim \frenchspacing\@vobeyspaces \@multilineverbatim
You are using the "multiline" environment in a style in which it is not defined.}

\@namedef{multiline*}{\@verbatim\@smultilineverbatim
You are using the "multiline*" environment in a style in which it is not defined.}
\expandafter\let\csname endmultiline*\endcsname =\endtrivlist

\def\arrax{\@verbatim \frenchspacing\@vobeyspaces \@arraxverbatim
You are using a type of "array" construct that is only allowed in AmS-LaTeX.}

\def\tabulax{\@verbatim \frenchspacing\@vobeyspaces \@tabulaxverbatim
You are using a type of "tabular" construct that is only allowed in AmS-LaTeX.}

\@namedef{arrax*}{\@verbatim\@sarraxverbatim
You are using a type of "array*" construct that is only allowed in AmS-LaTeX.}
\expandafter\let\csname endarrax*\endcsname =\endtrivlist

\@namedef{tabulax*}{\@verbatim\@stabulaxverbatim
You are using a type of "tabular*" construct that is only allowed in AmS-LaTeX.}
\expandafter\let\csname endtabulax*\endcsname =\endtrivlist


 \def\endequation{%
     \ifmmode\ifinner 
      \iftag@
        \addtocounter{equation}{-1} 
        $\hfil
           \displaywidth\linewidth\@taggnum\egroup \endtrivlist
        \global\tag@false
        \global\@ignoretrue
      \else
        $\hfil
           \displaywidth\linewidth\@eqnnum\egroup \endtrivlist
        \global\tag@false
        \global\@ignoretrue
      \fi
     \else
      \iftag@
        \addtocounter{equation}{-1} 
        \eqno \hbox{\@taggnum}
        \global\tag@false%
        $$\global\@ignoretrue
      \else
        \eqno \hbox{\@eqnnum}
        $$\global\@ignoretrue
      \fi
     \fi\fi
 }

 \newif\iftag@ \tag@false

 \def\TCItag{\@ifnextchar*{\@TCItagstar}{\@TCItag}}
 \def\@TCItag#1{%
     \global\tag@true
     \global\def\@taggnum{(#1)}}
 \def\@TCItagstar*#1{%
     \global\tag@true
     \global\def\@taggnum{#1}}

  \@ifundefined{tag}{
     \def\tag{\@ifnextchar*{\@tagstar}{\@tag}}
     \def\@tag#1{%
         \global\tag@true
         \global\def\@taggnum{(#1)}}
     \def\@tagstar*#1{%
         \global\tag@true
         \global\def\@taggnum{#1}}
  }{}

\def\tfrac#1#2{{\textstyle {#1 \over #2}}}%
%
%
%
%

\makeatother

\def\bh{\bar{h}}
\def\eps{\varepsilon}
\def\I{\Lambda}
\def\J{{\mathcal{J}}}
\def\s{\sigma}
\def\W{\mathrm W}
\def\A{\mathrm A}
\def\V{\mathrm V}
\def\v{\mathrm v}
\def\X{\mathrm X}
\def\WW{\mathbf W}
\def\VV{\mathbf V}
\def\vv{\mathbf v}
\def\h{\mathrm h}
\def\k{\mathrm k}
\def\l{\mathrm l}
\def\H{\mathrm H}
\def\C{\mathcal C}
\def\hh{\mathbf h}
\def\kk{\mathbf k}

\def\beps{\bar \eps}
\def\RR{\mathbb{R}}
\def\MM{\mathbf M}
\def\M{\mathcal M}
\def\TH{{\rm H_0}}
\def\CM{{\rm H}}

\begin{document}
\title[Rough vol asymptotics]{Precise asymptotics: robust stochastic volatility models}
\author{ P. K. Friz, P. Gassiat, P. Pigato}
\address{TU and WIAS Berlin, U\ Paris Dauphine, WIAS Berlin}
\date{\today }

\begin{abstract}
We present a new methodology to analyze large classes of (classical and rough) stochastic volatility models, with special regard to short-time and small noise formulae for option prices. Our main tool is the theory of regularity structures, which we use in the form of [Bayer et al.; A regularity structure for rough volatility, Math. Finance 2019].  In essence, we implement a Laplace method on the space of models (in the sense of Hairer), which generalizes classical works of Azencott and Ben Arous on path space and then Aida, Inahama--Kawabi on rough path space. When applied to rough volatility models, e.g. in the setting of [Bayer et al.; Pricing under rough volatility, Quant. Finance 2016] and [Forde-Zhang, Asymptotics for rough stochastic volatility models, SIAM J. Financial Math. 2017], one obtains precise asymptotics for European options which refine known large deviation asymptotics. 
\end{abstract}

\thanks{We gratefully acknowledge financial support of European Research Council Grant CoG-683164 (PKF
and PP) and ANR-16-CE40-0020-01 (PG). We are also grateful to the anonymous reviewer for his/her careful reading.}

\keywords{Rough volatility, European option pricing, small-time asymptotics,
rough paths, regularity structures}
\subjclass[2020]{60L30, 60L90, 91G20, 60H30, 60F10, 60G22, 60G18}
\maketitle
\tableofcontents

\section{Introduction}
Consider a generic stochastic volatility model, written in the form
$$
dS_t/S_t = \sigma (t,\omega) d\tilde W _t + \mu (t,\omega)dt \;.
$$
This includes the Black--Scholes model with constant $\sigma, \mu$, and many ``classical'' (Markovian) stochastic volatility (short: StochVol) models such as the Heston model \cite{heston1993closed}, the Stein--Stein model \cite{stein1991stock} or the SABR model \cite{hagan2002managing, hagan2015probability}. The Brownian motion $\tilde W$ is typically decomposed in $\rho W + \bar{\rho} \bar{W}$ such that $\sigma$ is a $W$-adapted diffusion in its own right. Following e.g. \cite{berestycki2004computing, deuschel2014marginalb}, by a general (classical) StochVol model we mean an $n$-dimensional diffusion, whose first component has the interpretation of (log)price.

\medskip 

In contrast, the recent class of rough 
volatility models (in short: RoughVol) assumes that $\sigma$ is an anomalous diffusion with scaling (Hurst) parameter $H < 1/2$. First introduced by Al\`os et al. \cite{alos2007short}, the interest 
in these models has exploded over recent years, starting with Gatheral et al. \cite{gatheral2018volatility} which not only presents statistical evidence but 
also hints to a market microstructure foundation (which subsequently led to the RoughHeston model \cite{el2019characteristic}). On the pricing side, one needs to fit these (parametric) models to given market data.  A well-documented problem, see e.g. \cite{fouque2000derivatives, gatheral2011volatility}, is to fit short-dated option prices (equivalently: volatility smiles), which in turn explains the interest of practioners in asymptotic formulae for short-dated options.  Some of these, like Hagan's SABR formula, have found widespread use in markets. 

\medskip 
Let us quickly review the case of {\bf classical StochVol}. Mathematically, the study of option prices 
(equivalently: Black--Scholes implied volatilities)  in diffusion based models in the short-time limit is intimately related to {\it large deviations}. To appreciate this fundamental link, the following heuristic
is useful. Take $K < S_0, \mu \equiv 0$ and consider the (out-of-the-money) {\it digital put} with price
\begin{equation}
            P [ S_t < K ] = P [ \log (S_t / S_0) < \log (K/S_0)] \approx e^{-\mathcal{I}(k)/ t} \; \text{ as $t \to 0$} \;.
\end{equation}
Here $k = \log (K/S_0)< 0 $ is called log-strike, $\mathcal{I}(k)$ energy (or rate) function. This is a typical (Freidlin--Wentzell) large deviation statement, valid as $t \to 0$, and $\mathcal{I}(k)$ has the geometric interpretation of shortest square-distance to some arrival manifold determined by $k$. It is not hard to check that put option prices have similar behaviour,
\begin{equation} \label{equ:PUT0}
         E [ (K - S_t)^+] \approx e^{-\mathcal{I}(k)/ t} \; \text{ as $t \to 0$} \;.
\end{equation}
In the Black-Scholes model, the (Gaussian) rate function is given by $k^2 / (2 \sigma^2)$; equating this expression with $\mathcal{I}(k)$, leads to the notion of effective (or Black--Scholes) implied volatility $\sigma_I (t,k)$ so that, as $t \to 0$, 
\begin{equation} \label{equ:IVexp}
                 \frac{k^2}{2 \sigma^2_I (t,k)} \sim  \mathcal{I}(k)   \text{ or equivalently:} \ \  \sigma^2_I (t,k) \sim \frac{k^2}{{2 \mathcal{I}(k)}} =: \Sigma_0(k) \;.
\end{equation}
This final relation is known as BBF formula  \cite{berestycki2004computing}; see also Pham \cite{pham2010large} for a derivation.
There is much interest from practitioners in refined asymptotics such as\footnote{This is no replacement for numerical methods, but allows to understand model parameters, helps to find good ``starting points'' for calibration procedures, and has led
to widely used parametrisations of the implied volatility surface.}
$$
\sigma^2_I (t,k) =  \Sigma_0(k) + t  \Sigma_1(k) + ... .
$$
which in turn requires a refinement of the (large deviation) aysmptotics (\ref{equ:PUT0}) to what may be called ``precise asymptotics''. We note that ``(any-order) call / put price asymptotics $\implies$(any order) implied volatiliy asymptotics'' amounts to an asymptotic understanding of the Black--Scholes formula and is the content of the decisive work of Gao--Lee \cite{gao2014asymptotics}. The mathematical problem is hence reduced to option price asymptotics.
In the Heston model, for instance, such an expansion was established in \cite{forde2012small}, using saddle point methods to obtain the necessary option price asymptotics from the known Heston characteristic function. In an absence of such knowledge, this approach fails for general (classical) StochVol models. Many authors have focused on density expansions instead, building on the vast existing literature on heat-kernels. For an overview, the practioners's textbook \cite{henry2008analysis} has numerous references, see also \cite{friz2015large} for a recent collection of works in this area. 


\bigskip

Let us turn to the case of {\bf RoughVol}. Following {\cite{alos2007short, fukasawa2011asymptotic, bayer2016pricing, fukasawa2017short, forde2017asymptotics} we consider the model case 
\begin{equation} \label{equ:simpleRV}
       \sigma (t,\omega) = \sigma ( W^H_t) 
\end{equation} 
where, in abusive notation, $\sigma$ also denotes a volatility function, taken of (at most) linear growth in \cite{forde2017asymptotics} or in exponential form in \cite{bayer2016pricing}, dubbed RoughBergomi.\footnote{A time-dependent volatility function would accomodate that case of non-constant forward variance curve, but in view of our focus on short-time, this extension adds little to our discussion.} 
Here $W^H$ is a fractional Brownian motion driven by a Brownian motion $W$, $\rho$-correlated with $\tilde{W}$ driving the price $S$.
Another popular instance in this family is given by RoughHeston, in which case $\sigma(t,\omega)$ is non-triviallly given as solution to a stochastic Volterra equation with singular kernel of the form $|t-s|^{H-1/2}$, so that $\sigma (t,\omega)$ again has $H^-$-H\"older sample paths. An appealing feature of RoughHeston is again information about its characteristic function \cite{el2019characteristic} which suggests a (finite-dimensional) asymptotic analysis approach to short-dated options under RoughHeston; this has been carried out in a recent preprint by Forde et al. \cite{forde2019small}. The main difference between StochVol and RoughVol is of course (local) scaling $H=1/2$ vs. $H < 1/2$. This already affects the basic large deviations: there is no LDP, as $t \to 0$ (equivalently: $\eps\to 0$) for
$$
                 \int_0^t \sigma ( W^H_s) d( \rho W_s + \bar{\rho} \bar{W}_s) \;;  \ \ \ \ \text{equivalently: } 
                 \ \int_0^1 \sigma ( \eps^{2H} W^H_s) \eps d( \rho W_s + \bar{\rho} \bar{W}_s) \;.
$$
Only if one fixes the scaling (in the form of an additonal factor $t^{H-1/2}$, equivalently: $\eps^{2H-1}$) are we in a small noise setting and can expect a large deviation principle with speed $t^{2H}$ or, equivalently, $\eps^{4H}$. 
This was first shown by \cite{forde2017asymptotics} and then, with no growth restriction on $\sigma$, in 
\cite{bayer2019regularity}, see also \cite{jacquier2018pathwise, gulisashvili2017large}.
Applied to the pricing of put options one then obtains \cite[Cor. 4.9]{forde2017asymptotics},
\begin{equation} \label{equ:PUT}
         E [ (K_t - S_t)^+] \approx e^{-\J(x)\,/ t^{2H}} \; \text{ as $t \downarrow 0$} \;
\end{equation}
with time-dependent strike $K_t = S_0 \exp ( xt^{1/2-H}) < S_0$ and energy function $\J$ specific to (\ref{equ:simpleRV}).
As was noted in \cite{forde2017asymptotics} such option price asymptotics imply a ``rough BBF formula'' of the form
\begin{equation} \label{equ:roughBFF}
                  \sigma_I^2 (t,xt^{1/2-H}) \sim \frac{x^2}{{2 \J(x)}} = \Sigma_0(x) \;.
\end{equation}

\bigskip 

\noindent {\bf Contributions of this paper.}

\smallskip 

\noindent ${\bf (I)}$ Our general main result provides a simultaneous refinement of (\ref{equ:PUT0}) and (\ref{equ:PUT}) as follows.  For a concise formulation we introduce normalized log-price\footnote{Financially minded readers might prefer $S_t / F_t$ where $F_t$ is the time-$t$ forward price, such as $S_0 e^{rt}$, but this adds very little to our discussion of short-time asymptotics.} 
$$
		X_t := \log (S_t / S_0)
$$
and (put, call) option prices, with log-strike $k$, as follows,
$$
                p(t,k) :=  E [ (e^k - e^{X_t})^+]  , \qquad c(t,k)  :=  E [ ( e^{X_t}- e^k)^+] \;.
$$

\begin{theorem}\label{thm:main000} (See Theorem \ref{thm:main0} for a precise formulation.) Consider a ``robust'' (classical or rough) stochastic volatility model with log-price process $(X_t)$ with spot-vol $\sigma_0>0$. Consider European put prices $p=p(t,k)$ with (out-of-money) log-strikes $k_\eps  = x  \eps^{1-2H} < 0$. Under a non-degeneracy assumption for the most-likely path, there exists a rate function $\I=\I(x)$, regular near $x$, and a function $A=A(x) \sim 1$ as $x \to 0$, such that, with $\sigma_x^2=2\I(x) / \I'(x)^2$, we have small noise asymptotic (out-of-money) put option prices as follows,
\begin{equation} \label{equ:main0000}
       p (\eps^2, k_\eps) 
       \sim \exp \left( {  - \frac{\I(x)}{\eps^{4H}}  } \right)  \eps^{1+4H} \, 
       \frac{A(x)} 
       { (\I'(x))^2  \sigma_x \sqrt{2\pi}}\ \ \ \text{as $\eps \downarrow0$} .
\end{equation}
Under a ``$1^+$ moment assumption'' a similar results holds for out-of-the money call option prices. 
\end{theorem}
\noindent Several remarks are in order. 

\smallskip

\noindent {\bf(i)}  Existence of a first moment for $\exp(X_t)$ is a ``conditio sine qua non'' for martingale based pricing theory. In particular, put-call parity holds which one can use to trivially deduce asymptotic formulae for in-the-money options. For the call expansion only, we assume existence of ``$1^+$ moments''. This is a rather mild condition but see also Remark \ref{rem:A2}.

\noindent  {\bf(ii)} We can switch from small-noise to short time by setting $\eps^2 \equiv t$. 

\noindent  {\bf(iii)} A (standard) application of Gao--Lee \cite{gao2014asymptotics}, detailed in Appendix D for the reader's convenience, shows that our price expansion implies a short-dated (squared) implied volatility approximation of the form
\begin{equation}\label{e:IVI}
\sigma^2_I (t,xt^{1/2-H}) =  \Sigma_0(x) + t^{2H}  \Sigma_1(x) + o(t^{2H}) \;,
\end{equation}
where $\Sigma_0$ is given in (\ref{equ:roughBFF}) and we have the explicit refinement
\begin{equation}\label{def:a}
\Sigma_1(x)
=
\begin{cases}
\frac{
x^2 }{
2\Lambda(x)^2
  }
 \log \bigg( \frac{ 2A(x)\Lambda(x) }{ \Lambda'(x) x }
\bigg) & \mbox { if } H<1/2 \\
\frac{
x^2 }{
2\Lambda(x)^2
  }
 \log \bigg( \frac{ 2A(x)\Lambda(x) }{ \Lambda'(x) x \exp(x/{2})}
\bigg) & \mbox { if } H = 1/2 \;.
\end{cases}
\end{equation}

\noindent  {\bf(iv)} In the generality of Theorem \ref{thm:main000} one cannot hope for explicit formulae of $\Lambda, A$ (and then $\Sigma_0,\Sigma_1$). As is typical for such rate functions, $\Lambda=\Lambda(x)$ can be written as variational problem, cf. (\ref{equ:inducedRF}).  On the other hand, we will see in the later proof, cf. (\ref{equ:Aabstract}),
\begin{equation} \label{equ:Aintro}
 A(x) = 
\begin{cases}
    E \big[ \exp ( \I'(x) \Delta_2^x ) \big],& \text{if } H < 1/2\\
    e^x E \big[ \exp ( \I'(x) \Delta_2^x ) \big],& \text{if } H=1/2
\end{cases}
\end{equation}
where $\Delta_2^x$ is a certain quadratic Wiener functional (that can also be further described). All this means that - in specific models - further computations are possible, notably expansions around $x=0$. This in turn allows to further quantify and put numerically to test \cite{friz2019explicit} the general higher order implied volatility expansion given in (\ref{e:IVI}).

\noindent  {\bf(v)} Many density (and then option price, implied volatility) expansions for generic classical StochVol models depend on existing heat-kernel expansions and thus
do not extend to the non-Markovian case of $H<1/2$. We recall that Malliavin calculus provides another route to heat-kernel expansions and this underlies the work of Kusuoka--Osajima \cite{kusuoka2008remark, osajima2015general} and also Deuschel et al. \cite{deuschel2014marginala, deuschel2014marginalb} which focus on density expansions -  option price expansions can be obtained upon suitable integration thereof. Our methods are different and allow us to tackle the option prices directly; Malliavin calculus, H\"ormander-type arguments and densities are not required. 




\smallskip 

\noindent {\bf(vi)}  Our ``robustness'' assumption'', in case $H=1/2$, means that the (log-)price process is the continuous image of the {\it It\^o Brownian rough path} and generically satisfied for SDE models with sufficiently smooth (say $C^{2^+}$) coefficients. (In a sense, this supersedes the notation of ``regular Wiener functional'' central to \cite{kusuoka2008remark}.) Despite some recalls in Section \ref{sec:SDEs} below, we assume some familiarity with basic ideas of rough paths, as found e.g. in \cite{friz2014course}. Interestingly, when
$H<1/2$, (standard) rough paths are ill-suited for RoughVol models\footnote{When $H<1/2$, the integrand $\sigma(W^H)$ is not controlled by the Brownian
 noise.} but the theory of regularity structures \cite{hairer2014theory} comes as a perfect substitute, as was seen in \cite{bayer2019regularity}.
 The assumption now means that the price process can be constructed as continuous image of a suitable {\it It\^o model}, where (in Hairer's
 terminology) ``models'' generalize rough paths. For the reader's convenience, the construction is reviewed in the Appendix. 

\smallskip 

\noindent {\bf(vii)} (Aida, Inahama-Kawabi) In the SDE context, the rough path approach is known to offer many benefits, most prominently continuous dependence on the noise (viewed as random rough path), which in turn is of immediate interest in establishing Freidlin--Wentzell type large deviations, see e.g. \cite[Ch.10]{friz2014course}.  In his work on semiclassical limits \cite{aida2007semi}, see also \cite{inahama2007asymptotic, inahama2008laplace, inahama2010stochastic}, Aida noted that rough path spaces are also most convenient to establish {\it precise Laplace asymptotics} for Wiener functional of the form $ E(\exp F(X^\eps) / \eps^2) \sim \text{(const)} \exp (-\frac{\Lambda_F(\gamma)}{\eps^2}), $
previously considered (in a classical path space setting) by Ben Arous \cite{arous1988methods} and others. Our theorem deals with different Wiener functionals, and produces different expansions, even in case $H=1/2$, and we rely on a non-trivial variation of the Laplace method, cf.  {(viii)} below. On a conceptual level, cf. {(vi)}, we replace rough paths with models. 

\smallskip 

\noindent {\bf(viii)} (Azencott) Perhaps closest to our main result, albeit in a classical diffusion ($H=1/2$) setting, Azencott \cite{azencott1985petites} gives asymptotic expansions for probabilities such as $P(X_1^\eps >k)$ using an intricate variation of the Laplace method on path space.  As Azencott himself laments, his approach is (very) technical but expresses hope that this may change: ``{\it Il serait tr\`es int\'eressant d'avoir une 
 justification syst\'ematique g\'en\'erale de la validit\'e des d\'eveloppements formels de ce type  et nous avons (\`a moyen terme!) une vue assez optimiste sur l'existence d'un formalisme indolore et garanti math\'ematiquement.}''.  Our proof of Theorem \ref{thm:main000}, if restricted to a classical diffusion setting, and adapted to ``digital payoffs'', essentially shows that rough paths, and for $H<1/2$: regularity structures,  are the tool that Azencott was hoping for. 

\smallskip 

\noindent {\bf(ix)}  (Heston) The ``robustness'' assumption essentially requires some smoothness of the coefficients in the model, which seems to rule out (classical and rough) Heston-type model because of square-root coefficients. While such degeneracies are not (at all) the focus of this work, we point out that that our expansion is determined by a neighbourhood of the most-likely path, in uniform (and even stronger) metrics. Hence, any ``initial'' localization to a uniform neighbourhood of the most-likely path, obtained by pathwise large deviations (such as \cite{robertson2010sample} in the Heston case), essentially allows to ignore the square-root issues and to apply our theorem, consistent with known Heston results \cite[Thm 3.1]{forde2012small}, and more recently \cite{forde2019small}. That said, the available (affine) structure of both Heston and RoughHeston (characteristic function methods ...) leaves little motivation to work out the details of this remark.

\noindent {\bf(x)} (Local volatility) The asymptotic behavior of local volatility under RoughVol can be analyzed by combining the methods of this paper with Malliavin calculus (integration by parts), details are left to a forthcoming note. This allows to generalize the results of De Marco and one of us \cite{de2018local} to the rough volatility setting. 

\smallskip 

\noindent ${\bf (II)}$ {\bf Moderate Deviation Pricing.}  Replace $x$  in Theorem \ref{thm:main000} by $x \eps^{2\beta}$ and consequently consider out-of-the-money log-strikes $$k_\eps = x \eps^{1+2(\beta-H)}.$$ Since $\I (x \eps^{2\beta}) \sim $(const)$x^2 \eps^{4\beta}$ we see that exponential concentration is lost in (\ref{equ:main0000}) when $\beta \ge H$. The case $\beta = H$ is precisely the central limit regime in this context. Of course, the case $\beta=0$ is the large deviation case treated in ${\bf (II)}$. Inbetween, when  $\beta \in (0,H)$, we have the moderate deviation regime. Classic moderate deviations are a kind of refinement of the CLT and often not very informative, higher order moderate deviation on the other hand can contain interesting information as was seen in a classical StochVol (resp. RoughVol) setting in \cite{friz2017option} and \cite{bayer2017short}. The following ``precise'' moderate deviation result refines simultaneously the respective results in {\cite{friz2017option, bayer2017short},
\begin{equation} \label{equ:main000mod}
       p(\eps^2,k_\eps) \sim \exp \left( {  - \frac{\I(x)}{\eps^{4(H-\beta)}}  } \right)  \eps^{1+4(H-\beta)} \, 
       \frac{\sigma_0^3} 
       { x^2 \sqrt{2\pi}}\ \ \ \text{as $\eps \downarrow0$} .
\end{equation}
We do not spell out a similar call price formula. 
Moderate deviations in fact offer an appealing way to understand the rough BBF formula (\ref{equ:roughBFF}): it suffices to ``compare'' moderately out-of-the-money options in Black--Scholes (struck at $k_\eps = x \eps^{2\beta}$) with the honest (large deviation) out-of-the-money price of an option under RoughVol. Provided that $\eps^{4(1/2 - \beta)}$ matches $\eps^{4H}$, so that we must have $\beta \equiv 1/2 -H$, we can again deduce the leading order effective behaviour (\ref{equ:roughBFF}) of the Black-Scholes (implied) volatility (struck at $x \eps^{2\beta} = x \eps^{1-2H}$) as $\eps$ (equivalently: $t$) goes to zero. 
 
\bigskip 

\noindent ${\bf (III)}$ {\bf RoughVol}. The conditions of our main theorem can be verified for the RoughVol model  (\ref{equ:simpleRV}), including the case of lognormal volatility as seen in RoughBergomi \cite{forde2017asymptotics, bayer2016pricing}.  

\begin{corollary}[RoughVol] \label{thm:main} (cf. Corollary \ref{thm:main44} in Section 7.) Let $H \in (0,1/2]$ and $k_\eps = x  \eps^{1-2H} < 0$. 
Then, for $x$ small enough, $\J = \J(x)$ is continuously differentiable and we have (out-of-the-money) put prices with
\begin{equation}
       p(\eps^2,k_\eps) \sim \exp \left( {  - \frac{\J(x)}{\eps^{4H}}  } \right)  \varepsilon^{1+4H} \frac{A(x)}{ (\J'(x))^2  \sigma_x \sqrt{2\pi}} \ \ \text{as $\eps \downarrow0$,}
\end{equation} 
for some function $A(x)$ with $A(x) \to 1$ as $x \to 0$. Under a $1^+$ moment assumption, a similar expansion holds for out-the-money call prices ($x>0$).
\end{corollary}

\noindent We again make several remarks. 

\smallskip 

\noindent  {\bf(i)} The robustness condition is guaranteed by stochastic Taylor expansions, viewed through the lenses of regularity structures. This is the main point of \cite{bayer2019regularity}, reviewed in Section \ref{sec:roughDEs} and Appendix \ref{app:ERS} for the reader's convenience.

\noindent  {\bf(ii)} In Corollary \ref{thm:main} we verify the relevant abstract conditions, notably the non-degeneracy condition of Theorem \ref{thm:main000} for $x$ in a neighbourhood of zero. This is a subtle point since indeed we do not expect this for all $x$, for the same reason we must not expect the cut-locus $\mathrm{Cut}(x_0,x)$ to be empty for two generic points on a general Riemannian manifold. Readers familiar with heat-kernel expansions (valid away from the cut-locus) and especially the (Laplace based) proof given in \cite{ben1988developpement}, and/or the adaption to marginal density expansions \cite{deuschel2014marginala}, will realize that this is not just a loose analogy but in fact has an almost identical infinite-dimensional explanation; cf. the setting of Appendix \ref{app:SOO}. 


\noindent  {\bf(iii)} As mentioned in point (I.iii) above, the available expression for $\Lambda$ and $A$ are amenable to further computations once the model is further specified. The RoughVol model is such a specification, full described by $H$, the volatility function $\sigma(\cdot)$ and the correlation parameter $\rho$. In particular, these quantities 
determine $\Delta_2^x$ which in turn feed into the general formula (\ref{equ:Aintro}). An expression for $\Delta_2^x$, in case of RoughVol, is straight-forward from our general theory and is given in (\ref{equ:DeltaForRV}). This explicit form is the starting point in \cite{friz2019explicit}, where we further detail specific computations to RoughVol, and especially RoughBergomi, and put to numerical test. The resulting implied volatility expansions are seen therein to be very accurate, which is easily attributed this to the additional term $t^{2H} \Sigma_1(x)$ in the expansion, with $H$ between $0.1$ and $0.3$ in typical calibrations. 

\noindent  {\bf(iv)} Our work relies fundamentally on (precise) large deviations which leads to the validity of our expansions in the large (and also moderate) deviation regime. One can also study such expansions in the central limit regime (think: $k_\eps = x \eps$, equivalently $k_t = x \sqrt{t}$) and employ Edgeworth expansions to obtain refined information, leading also to order $t^{2H}$ correction terms; see \cite{euch2019short} and the references therein for this CLT based approach. We note that the large deviation range (think: $k_\eps = x \eps^{1-2H}$) can be significantly larger than the CLT range, although the gap narrows as $H \downarrow 0$. 

\noindent  {\bf(v)} Our work has nothing to say about the super-critical regime $H=0$. First steps in this direction, based on Kahane's theory of Gaussian multiplicative chaos, were taken in \cite{neuman2018fractional}.


%

\smallskip 

\noindent {\bf Organization of Paper.} Sections 2 and 3 contain notations and some (non-asymptotic) Black--Scholes estimates useful later on. Section 4 formalizes the basic large deviations assumption and gives a number of classical and ``rough'' examples where it is satisfied. Put and call price large deviations 
are given under natural conditions. Section 5 presents our (two) running examples, based on which the remaining abstract conditions are formulated. 
Section 6 contains the statements and partial proof our main results: exact option price formulae, precise large deviations, precise moderate deviations. The technical assumptions for the RoughVol case are verified in Section 7. Some long proofs have been moved to Section 8. Appendix A details the notion of ``robust representation", Appendix B recalls the rough volatility regularity structure, following \cite{bayer2019regularity}, Appendix C deals with control theoretic issues necessary for the local analysis around the minimizer, Appendix D extracts implied volatility - from option price expansions.

\section{Notation}

Wiener space $C([0,1],\R^m)$ supports $m$-dimensional standard Brownian motion $\W = \W(t,\omega)$ and contains the Cameron--Martin space $\H \equiv \H^1 \subset C([0,1],\R^m)$ of paths with square-integrable derivative. It is tacitly assumed that all Brownian and Cameron--Martin paths start at the origin. When $m=2$, write $\W = (W, \bar W)$ and $\h = (h, \bar h)$ accordingly; given a correlation parameter $\rho \in [1,1]$, and $\rho^2 + \bar \rho^2 = 1$, a scalar Brownian motion ($\rho$-correlated with $W$) is given by $\tilde W = \rho W + \bar \rho \bar W$. Similarly, write $\tilde h = \rho h + \bar \rho \bar h$. By Hurst parameter we mean a scalar $H \in (0,1/2]$. Convolution with the singular kernel $K^H (s,t) =  \sqrt{2H} |t-s|^{H-1/2}1_{0<s<t}$ is known to $\beta$-regularize, with $\beta=1/2+H$ (e.g. \cite[section 3.1]{SKM}). (When $H=1/2$, this is precisely indefinite integration.) Applied to scalar white noise $\dot W$, a.s. in the negative H\"older space $C^{-1/2^-}$, we get a Riemann--Liouville (a.ka. Volterra or L\'evy) fractional Brownian motion $\hat W = K^H * \dot W \in C^{H^-}$ a.s., where we follow standard notation in the field and write H\"older exponent $\alpha^-$, when we really mean $\alpha - \kappa$ for all sufficiently small $\kappa>0$. 
We also write $\hat h = K^H * \dot h \in H^{\beta}$, where $\dot h \in L^2 \equiv H^0$ and $H^\beta$ denotes fractional Sobolev space. Throughout, $\X = (X,Y)$ is a $n$-dimensional stochastic process, with scalar first component $X=X(t,\omega)$, defined on $m$-dimensional Wiener space.
We shall assume a robust form, which allows to write $X$ as continuous function of $\WW=\WW(\omega)$, which is a suitable enhancement of $\W$ to a random element in rough path space $\C^\alpha, \alpha \in (1/3,1/2)$ or more generally, a random element in the space of models $\M=\M^\kappa$, defined on a regularity structure whose precise form depends on the dynamics at hand. (The case of rough volatility is reviewed in detail in the Appendix.)

\section{Preliminaries on Black-Scholes asymptotics}

The Black-Scholes log-price at time $t=\varepsilon ^{2}$ is given by%
\[
X_{1}^{\varepsilon }=\varepsilon ^{2}\mu +\varepsilon \sigma B_{1}\sim
N\left( t\mu ,\sigma ^{2}t\right) \ .
\]%
Consider log-strike $k>0$ and set%
\[
Z_{1}^{\varepsilon }=\varepsilon ^{2}\mu +\varepsilon \sigma
(B_{1}+k/\varepsilon \sigma )=k+\varepsilon \sigma B_{1}+\varepsilon ^{2}\mu
.
\]
Then the call price is given by 
\begin{eqnarray*}
E\left[ \left( e^{X_{1}^{\varepsilon }}-e^{k}\right) ^{+}\right]  &=&E\left[
\left( e^{Z_{1}^{\varepsilon }}-e^{k}\right) ^{+}\exp \left( -\frac{kB_{1}}{%
\varepsilon \sigma }-\frac{k^{2}}{2\varepsilon ^{2}\sigma ^{2}}\right) %
\right]  \\
&=&e^{-\frac{k^{2}}{2\varepsilon ^{2}\sigma ^{2}}}e^{k}E\left[ e^{-\frac{%
kB_{1}}{\varepsilon \sigma }}\left( e^{\varepsilon \sigma B_{1}+\varepsilon
^{2}\mu }-1\right) ^{+}\right]  \\
&\sim &e^{-\frac{k^{2}}{2\varepsilon ^{2}\sigma ^{2}}}e^{k}E\left[ e^{-\frac{%
kB_{1}}{\varepsilon \sigma }}\left( \varepsilon \sigma B_{1}+\varepsilon
^{2}\mu \right) ^{+}\right]  \\
&=&e^{-\frac{k^{2}}{2\varepsilon ^{2}\sigma ^{2}}}\varepsilon \,\sigma e^{k}E%
\left[ e^{-\frac{kB_{1}}{\varepsilon \sigma }}\left( B_{1}+\varepsilon \mu
/\sigma \right) ^{+}\right] .
\end{eqnarray*}%
The justification of $\sim$ comes from a Laplace type argument: the asymptotic behaviour is determined on the event $\{ |\eps B_1| < \delta \}$, for any fixed $\delta>0$, which in turn allows to replace $e^{\eps B_1} -1$ by $\eps B_1$. (Details are left to the reader.) The next lemma, applied with $\tilde{\varepsilon}=\varepsilon \sigma /k$ (so
that $\varepsilon \mu /\sigma =$ $\tilde{\varepsilon}k\mu /\sigma ^{2}$)
then gives%
\begin{equation} \label{e:BScall}
E\left[ \left( e^{X_{1}^{\varepsilon }}-e^{k}\right) ^{+}\right] \sim e^{-%
\frac{k^{2}}{2\varepsilon ^{2}\sigma ^{2}}}\,\varepsilon ^{3}\frac{\sigma
^{3}e^{k}}{k^{2}\sqrt{2\pi }}e^{k\mu /\sigma ^{2}} \ .
\end{equation}

\begin{lemma} \label{lem:BS1}
As $\varepsilon \rightarrow 0$,%
\[
E\left[ e^{-\frac{B_{1}}{\varepsilon }}\left( B_{1}+\varepsilon \alpha
\right) ^{+}\right] \sim \frac{\varepsilon ^{2}}{\sqrt{2\pi }}e^{\alpha } \ .
\]
\end{lemma}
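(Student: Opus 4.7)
The plan is to write the expectation as a density integral and then execute a direct Laplace-type rescaling, since the factor $e^{-B_1/\varepsilon}$ forces concentration at $B_1 = 0$.

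First I would write
\[
E\!\left[ e^{-B_{1}/\varepsilon}(B_{1}+\varepsilon\alpha)^{+}\right]
= \frac{1}{\sqrt{2\pi}}\int_{-\varepsilon\alpha}^{\infty} e^{-x/\varepsilon}(x+\varepsilon\alpha)\, e^{-x^{2}/2}\,dx,
\]
then change variables $y=x+\varepsilon\alpha$ to move to the half-line $y>0$ and collect the shift in the exponent:
\[
= \frac{e^{\alpha}}{\sqrt{2\pi}}\int_{0}^{\infty} e^{-y/\varepsilon}\, y\, e^{-(y-\varepsilon\alpha)^{2}/2}\,dy.
\]

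Next I would rescale via $y=\varepsilon u$, which is the natural scaling because $e^{-y/\varepsilon}$ lives on scale $y\sim\varepsilon$. This gives
\[
E\!\left[ e^{-B_{1}/\varepsilon}(B_{1}+\varepsilon\alpha)^{+}\right]
= \frac{\varepsilon^{2}e^{\alpha}}{\sqrt{2\pi}}\int_{0}^{\infty} u\,e^{-u}\,e^{-\varepsilon^{2}(u-\alpha)^{2}/2}\,du.
\]
The claimed asymptotic reduces to showing that the remaining integral tends to $\int_{0}^{\infty}u e^{-u}\,du=1$ as $\varepsilon\downarrow 0$. This is an easy application of dominated convergence: the integrand is bounded pointwise by the integrable majorant $ue^{-u}$, and converges pointwise to $ue^{-u}$.

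There is no real obstacle; the only thing to be careful about is performing the substitution correctly (the shift by $\varepsilon\alpha$ in the Gaussian exponent is harmless because after the rescaling $y=\varepsilon u$ it becomes $(u-\alpha)^{2}\varepsilon^{2}$, which vanishes uniformly on compacts and is controlled uniformly in $\varepsilon\in(0,1]$ by $ue^{-u}$). Putting everything together yields the stated equivalent $\varepsilon^{2}e^{\alpha}/\sqrt{2\pi}$.
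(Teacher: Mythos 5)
Your proof is correct. After writing the expectation as a Gaussian integral, shifting by $\varepsilon\alpha$, and rescaling $y = \varepsilon u$, the factorization $\frac{\varepsilon^2 e^{\alpha}}{\sqrt{2\pi}}\int_0^{\infty} u e^{-u} e^{-\varepsilon^2(u-\alpha)^2/2}\,du$ is exactly right, and the dominated convergence step (with majorant $u e^{-u}$) legitimately yields the limit $1$ for the integral. This is precisely the ``elementary Laplace type argument'' the paper alludes to without spelling out.

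The paper itself prefers to deduce Lemma~\ref{lem:BS1} from the non-asymptotic two-sided bound of Lemma~\ref{lem:BSabs}, which starts with the same change of variables but then replaces the soft dominated-convergence step by the explicit inequality $1 - y^2/2 \le e^{-y^2/2} \le 1$, producing finite-$\varepsilon$ error bounds that are moreover uniform in $\alpha$ and in the extra perturbation parameter $\gamma$. That quantitative, uniform control is what the paper actually needs downstream, where the argument of the positive part is a full random quantity with remainders that must be sandwiched (cf.\ Proposition~\ref{prop:Jloc1}). So your route is adequate and slightly cleaner for the lemma in isolation, while the paper's route is built to be reused with $\gamma > 0$ and varying $\alpha$; it is worth recognizing that your DCT step, as stated, is pointwise in $\alpha$ and would need an extra argument if uniformity were required.
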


One can prove Lemma \ref{lem:BS1}  with an elementary Laplace type argument, noting that the relevant contribution comes from $0 < B_1 + \eps \alpha < \delta$, any $\delta > 0$. (This also explains why changing $B_1 + \eps \alpha$ by, say, $B_1 + \eps \alpha + \eps B_1 + \eps B_1^2$, will not change the asymptotics.) That said, 
Lemma \ref{lem:BS1} is also an immediate consequence of the following (non-asymptotic) estimate, which offers some flexibility in the sequel. 
\begin{lemma} \label{lem:BSabs}
Let  $\alpha \in \mathbb{R}$, $\gamma \in [0,1)$, $\varepsilon$ strictly positive, and $N \sim N(0,1)$. Then for some $C>0$, it holds that
\begin{align} \label{equ:BSabs}
  & \min\left[ (1-\gamma) e^{\frac{\alpha}{1-\gamma}} + 2 \gamma , (1+\gamma) e^{\frac{\alpha}{1+\gamma}}\right] - \varepsilon^2 \left( C(1+\alpha^2) \max\{ e^{\frac{\alpha}{1-\gamma}}, e^{\frac{\alpha}{1+\gamma}}\} + 6 \gamma\right) \notag \\
\leq  &\sqrt{2\pi} \varepsilon^{-2} \mathbb{E }\left[ \exp\left( -\varepsilon^{-1} N\right) \left( N + \gamma |N| + \varepsilon \alpha\right)^+\right]  \\
 \leq  & \max\left[ (1-\gamma) e^{\frac{\alpha}{1-\gamma}} + 2 \gamma , (1+\gamma) e^{\frac{\alpha}{1+\gamma}}\right] . \notag
\end{align}
\end{lemma}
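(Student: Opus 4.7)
I would first reduce the whole estimate to a single scalar integral. Splitting the expectation over $\{N>0\}$ and $\{N<0\}$, on which $N + \gamma|N|$ equals $(1+\gamma)N$ and $(1-\gamma)N$ respectively, and changing variables via $n = +\varepsilon u$ and $n = -\varepsilon u$, gives
\[
\sqrt{2\pi}\,\varepsilon^{-2}\,\mathbb{E}\bigl[e^{-N/\varepsilon}(N+\gamma|N|+\varepsilon\alpha)^+\bigr]
= \int_0^\infty e^{-\varepsilon^2 u^2/2}\, F_{\gamma,\alpha}(u)\,du,
\]
where
\[
F_{\gamma,\alpha}(u) := e^{-u}\bigl((1+\gamma)u+\alpha\bigr)^+ + e^{u}\bigl(\alpha-(1-\gamma)u\bigr)^+ \;\ge\; 0.
\]
The Taylor sandwich $1-\tfrac{\varepsilon^2 u^2}{2} \le e^{-\varepsilon^2 u^2/2} \le 1$ then reduces everything to the zeroth and second moments of $F_{\gamma,\alpha}$.

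Next I would compute the zeroth moment explicitly. A case distinction on the sign of $\alpha$ suffices: for $\alpha \ge 0$ the first summand contributes over all of $[0,\infty)$ while the second is supported on $[0, \alpha/(1-\gamma)]$, and for $\alpha \le 0$ only the first summand contributes, on $[-\alpha/(1+\gamma), \infty)$. Direct integration (with the shifts $u \mapsto u + \alpha/(1+\gamma)$ or $u \mapsto \alpha/(1-\gamma) - u$) produces
\[
\int_0^\infty F_{\gamma,\alpha}(u)\,du = \begin{cases}(1-\gamma)e^{\alpha/(1-\gamma)} + 2\gamma, & \alpha \ge 0,\\ (1+\gamma)e^{\alpha/(1+\gamma)}, & \alpha \le 0.\end{cases}
\]
Either way, the value is one of the two expressions appearing in the $\min$/$\max$ of \eqref{equ:BSabs}, so trivially $\min \le \int_0^\infty F_{\gamma,\alpha} \le \max$. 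Combined with $e^{-\varepsilon^2 u^2/2}\le 1$ this immediately proves the upper bound in \eqref{equ:BSabs}, and combined with $e^{-\varepsilon^2 u^2/2} \ge 1 - \varepsilon^2 u^2/2$ it reduces the lower bound to showing
\[
\tfrac12 \int_0^\infty u^2 F_{\gamma,\alpha}(u)\,du \;\le\; C(1+\alpha^2)\max\{e^{\alpha/(1-\gamma)}, e^{\alpha/(1+\gamma)}\} + 6\gamma.
\]

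Finally I would bound this second moment by the same shift-plus-Gamma-integral technique: each summand evaluates to $(1\pm\gamma)e^{\alpha/(1\pm\gamma)}$ times a quadratic polynomial in $\alpha/(1\pm\gamma)$, arising as a linear combination of $\int_0^\infty v^k e^{-v}\,dv = k!$ for $k \le 3$. The polynomial prefactors can be absorbed into $C(1+\alpha^2)$ with a constant $C=C(\gamma)$ that is finite for each $\gamma \in [0,1)$, and the additive $6\gamma$ slack is precisely tuned to absorb the contribution $6(1+\gamma)$ arising at $\alpha = 0$, where the exponential prefactor degenerates to one.

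The principal (essentially minor) obstacle is bookkeeping: carrying out the case analysis on $\mathrm{sign}(\alpha)$, correctly identifying the support of each positive part, and verifying that the polynomial prefactors generated by the second-moment integrals are dominated by $C(1+\alpha^2)$ uniformly in $\gamma$ on any compact subset of $[0,1)$. No real analytic ingredient beyond the Taylor inequality for $e^{-x}$ and a few elementary integrations by parts is required.
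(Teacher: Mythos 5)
Your proposal is correct and follows essentially the same route as the paper's proof: rescale by $n=\varepsilon v$, apply the elementary sandwich $1-\tfrac{\varepsilon^2v^2}{2}\le e^{-\varepsilon^2v^2/2}\le 1$, compute $\int e^{-v}(v+\gamma|v|+\alpha)^+dv$ exactly by a sign-of-$\alpha$ case split and shifted Gamma integrals, and bound the second-moment correction the same way. Your only real deviation is organizational — you fold the integral onto $(0,\infty)$ via $F_{\gamma,\alpha}$ rather than keeping it over $\mathbb{R}$ — which changes nothing substantively. Two small remarks: your description of the $6\gamma$ slack as "tuned to absorb $6(1+\gamma)$ at $\alpha=0$" is off numerically (in the paper it arises as $\tfrac12\cdot 2\gamma\int_0^\infty v^3e^{-v}\,dv=6\gamma$ after splitting off the $2\gamma v^3$ piece, and at $\alpha=0$ the raw second moment is $6(1+\gamma)$ whose half is already dominated by $C\cdot 1$ without any $\gamma$-slack); and you are right to flag that $C$ depends on $\gamma$ — the paper is silent on this, but in the application $\gamma=\gamma_\delta\to0$, so this costs nothing.
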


\begin{proof}
The middle expression (\ref{equ:BSabs}) equals
$$
     \varepsilon^{-2}  \int_{-\infty}^{+\infty}  e^{-y^2/2} e^{-\frac{y}{\varepsilon}} \left(y + \gamma |y| + \varepsilon \alpha\right)^+ dy 
=   \int_{-\infty}^{+\infty}   e^{-v^2 \eps^2 /2}  e^{-v} \left(v + \gamma | v| +  \alpha\right)^+ dv.
$$
The elementary $1 - y^2/2 \le \exp(-y^2/2) \le 1$ then leads to the stated bounds. Indeed,
$$
           \text{(\ref{equ:BSabs})} \leq   \int_{-\infty}^{+\infty}   e^{-v} \left(v + \gamma | v| +  \alpha\right)^+ dv
$$
which is computable: when $\alpha < 0$, the right-hand side equals
$$
(1+\gamma) \int_{-\frac{\alpha}{1+\gamma}}^{\infty} \left( v + \frac{\alpha}{1+\gamma}\right)  e^{-v} dv = (1+\gamma) e^{\frac{\alpha}{1 +\gamma}}
$$
whereas when $\alpha \geq 0$ we have
\begin{align*}
&\int_{-\frac{\alpha}{1-\gamma}}^{0}   e^{-v} \left(v (1-\gamma) +  \alpha\right)dv + \int_0^\infty e^{-v} \left(v (1+\gamma) +  \alpha\right)  dv\\
=& (1-\gamma) e^{\frac{\alpha}{1-\gamma}} + \int_0^{\infty} e^{-v}2 \gamma v dv = (1-\gamma) e^{\frac{\alpha}{1-\gamma}} + 2 \gamma \;.
\end{align*}
To obtain the lower bound, use $e^{-y^2/2} \geq 1 - y^2/2$ and split the integral to obtain
\[
 \text{(\ref{equ:BSabs})} 
\geq   \int_{-\infty}^{+\infty}   e^{-v} \left(v + |\gamma| v +  \alpha\right)^+ dv  - \frac{\varepsilon^{2}}{2} \int_{-\infty}^\infty e^{-v} v^2 \left(v + |\gamma| v +  \alpha\right)^+ dv.
\] 
The first integral is computed as before. For the second one, we again distinguish according to the sign of $\alpha$. When $\alpha <0$ we obtain 
$$ e^{\frac{\alpha}{1+\gamma}}\int_{0}^{\infty} e^{-u} \left(u - \frac{\alpha}{1+\gamma}\right)^2 u du \leq C  e^{\frac{\alpha}{1+\gamma}}(1+\alpha^2) $$
whereas in the case $\alpha \geq 0$ we have
\begin{align*}
&\int_{-\frac{\alpha}{1-\gamma}}^{0}   e^{-v}v^2  \left(v (1-\gamma) +  \alpha\right)dv + \int_0^\infty e^{-v} v^2 \left(v (1+\gamma) +  \alpha\right)  dv\\
=& e^{\frac{\alpha}{1-\gamma}} (1-\gamma) \int_0^{\infty}  e^{-u} \left(u - \frac{\alpha}{1+\gamma}\right)^2 u du + \int_0^{\infty} e^{-v} 2 \gamma v^3 dv \\
\leq & C  e^{\frac{\alpha}{1-\gamma}}+ 12 \gamma.
\end{align*}
\end{proof}

\section{Unified large, moderate and rough deviations} \label{sec:uniLDP}

We now put forward our basic large deviation assumption. The object of interest is a scalar process $X^\eps \equiv X_{\eps^2 (\cdot)}$, viewed as (normalized) log-price process run at speed $\eps^2$. By convention, $X_0 = 0$ so that $X^\eps_1 \to 0$ as $\eps \to 0$.
The reader can have in mind an It\^o diffusion: classical StochVol models assume that $(X^\eps)$ is one component of a higher-dimensional Markov diffusion; RoughVol models have additional components driven (or given) by fractional Brownian motion. We further note that our setup includes pricing in the moderate deviation regime.  

\subsection{Basic large deviation assumption (A1)} 

\begin{itemize}
\item[(A1a)] The family $\{ \bar X^\eps_1 : 0 < \eps \le 1 \}$, given as 
         unit time marginal of a rescaled process $\X^\eps$,
         $$\bar X^\varepsilon := \frac{ \beps }{\eps} X^\varepsilon$$ satisfies a LDP with good rate function (a.k.a. {\it energy}) $\I$ and speed $\beps^2 \ge \eps^2${for some given function $\eps \mapsto \beps = \beps(\eps)$.}
\item[(A1b)] Assume $\I(x)=0$ iff $x=0$.
\item[(A1c)] The (in general only lower semicontinuous) rate function $\I$ is continuous.
\end{itemize}
Note that a large deviation principle for $(X^\varepsilon)$ may or may not hold. Remark that condition (A1b) says $\I(0) = 0$ (which encodes $X^\eps_1 \to 0$) and rules out moves to $ x \ne 0$ at zero cost.
 (A1c) is known in all the StochVol / RoughVol setting interesting to us, cf. references below.\footnote{The forthcoming non-degeneracy condition (A5) implies local $C^1$-regularity of $\I$, near a given point $x$.}

\begin{example}[Black-Scholes / Schilder LDP]
Let $X_1^\eps = \eps \sigma B_1 + \eps^2 \mu$ (with $\mu = - \sigma^2 /2$ in absence of rates) and a LDP holds with 
$$
                    \beps =  \eps , \ \ \  \I(x) = \tfrac{x^2}{ 2\sigma^2} \ .
$$
\end{example}

\begin{example}[Classical StochVol, Freidlin-Wentzell LDP] Here $\I$ is in general non-explicit, but has an interpretation in terms of geodesic distance from arrival log-spot/spot-vol ($0,y_0)$ to the arrival manifold $(x,\cdot)$.  In a locally elliptic setting, the rate function is viscosity solution to eikonal equation hence continuous, see e.g. \cite[Thm 2.3]{berestycki2004computing}. It is shown in \cite{deuschel2014marginala} that $\I$ is smooth ``away from focality points'' which is always the case for $x$ close to zero.

\end{example}

\begin{example}[RoughVol, Forde-Zhang LDP \cite{forde2017asymptotics}] \label{ExRVLDP}
Let $H \in (0,1/2]$. With 

\begin{equation*}
X_1^{\varepsilon }=\int_{0}^1 \s \left( \hat \eps \hat W \right) \varepsilon
d\left( \bar{\rho} \bar W+\rho W\right) -\frac{1}{2}\varepsilon ^{2}\int_{0}^1
\s^2 \left(  \hat \eps \hat W\right) dt \ ,
\end{equation*}%
where $\hat W = K^H * \dot W$, $\hat \eps := \eps^{2H}$, the LDP assumption holds for  
$$\bar X^\eps_1 = \int_0^1 \s \left( \hat \eps \hat W \right) \hat\varepsilon
d\left( \bar{\rho} \bar W+\rho W\right) -\frac{1}{2}\varepsilon \hat{\eps}\int
\s^2 \left(  \hat \eps \hat W\right) dt \ ,
$$
i.e. with $\beps = \hat \eps =  \eps^{2H}$, and with continuous \cite[Cor. 4.6.]{forde2017asymptotics} rate function
\begin{equation}\label{ifz}
          \I(x) = \mathcal{J}(x) := \inf \{ \tfrac{1}{2}\| h,\bh \|^2_\H :  \int_{0}^1 \s \big(\, \hat h \, \big) 
d\left( \bar{\rho} \bh+\rho h \right) = x \} \equiv \tfrac{1}{2}\| h^x,\bh^x \|^2_\H \ .
\end{equation}      
\end{example}
We note that \cite{forde2017asymptotics} assumes a (linear) growth condition on $\sigma(.)$. It was seen in \cite{bayer2019regularity}, see also \cite{jacquier2018pathwise, gulisashvili2017large} and assumption (A3) below, that condition (A1a) holds without  growth condition of $\sigma(.)$ and hence is valid for the exponential function and then the RoughBergomi model, cf. Section \ref{sec:rmkRB}.
Condition (A1c) follows from \cite[Cor. 4.6.]{forde2017asymptotics}.

 

\begin{example}[Black-Scholes MDP]
Take $\beta \in [0,1/2)$ and $\beps =  \eps^{1-2\beta}$. Then the LDP assumption holds, with speed $\beps^2$ and rate function $\I(x) = \tfrac{x^2}{ 2\sigma^2}$, for 
$$
\bar X_1^\eps = \beps \sigma B_1 + \beps \eps \mu = \frac{ \beps }{\eps} ( \eps \sigma B_1 + \eps^2 \mu) \ .
$$
\end{example}

\begin{example}[Classical StochVol, moderate deviations regime \cite{friz2017option}] With  $\beta \in (0,1/2)$ 
and $\beps =  \eps^{1-2\beta}$ as above, one finds that $\bar X_1^\eps = \frac{ \beps }{\eps} X_1^\eps$ 
satisfies the same MDP as Brownian motion, with Black-Scholes paramater $\sigma$ replaced by 
spot-volatility $\sigma_0$. For $\beta=0$ we are in the realm of Freidlin-Wentzell LDP, $\beta = 1/2$ is the central limit scaling.
\end{example}

\begin{example}[RoughVol, moderate deviations regime \cite{bayer2017short}] \label{ex:modrough}
Consider the log-price $X^\eps$ under RoughVol, as introduced in Example \ref{ExRVLDP}. 
Let $0 < \beta < H \le 1/2$ and set $\beps =  \eps^{2H-2\beta}$. Then the LDP assumption holds, with speed $\beps^2$ and rate function $\I(x) = \tfrac{x^2}{ 2\sigma_0^2}$, for 
$$
\bar X_1^\eps = \frac{ \beps }{\eps} X_1^\varepsilon \ .
$$
For $\beta = 0$, we are back in the large deviation setting, $\beta= H$ is the central limit scaling. 

\end{example}

\subsection{Moment assumption (A2) and large deviation option pricing} \label{sec:LDOP}

Consider $$0 \le k_\eps = (\eps / \beps) \, x \le x \ . $$ 
{ Under assumption (A1a) we have} 
\begin{equation} \label{eq:ldpXbar}
           P [ \bar X_1^\eps > x ] = P [ X_1^\eps > k_\eps ] = \exp \left( -  \frac{\I(x) + o(1) }{\beps^2}    \right) \ ,
\end{equation}
{ which we view as (out-of-the money, short: OTM ) digital option price, with underlying log-price $X_1^\eps$ and log-strike $k_\eps$. As we will see below, the situation is similar for put prices. However, in the case of call-prices of the form} 
$$
                 E [ (\exp( X_1^\eps) - \exp (k_\eps))^+]  
$$
{ the unbounded payoff requires the following moment assumption:}
\smallskip
\begin{itemize}
          \item[(A2)]
         There exists $p>1$ such that $\limsup_{\eps \to 0} E [ (\exp( X_1^\eps)^p ] =: m_p < \infty$
\end{itemize}
\smallskip
\noindent Here is a typical way to check (A2). The proof is left to the reader.
\begin{lemma} { Assume that there exists a process $(X_t)_{t \geq 0}$ with $X_t \overset{(d)}{=} X_1^{t^{1/2}}$} such that $e^{-rt} \exp (X_t) \equiv S_t$ is a martingale, and and assume there exists $p>1$ and $t>0$ such that 
$E [ S_t^p ] <\infty$. Then (A2) holds.
\end{lemma} 

{
 \begin{proposition}[Small noise OTM pricing] \label{prop:callLDP}
Set $k_\eps = (\eps / \beps) \, x$. Assume (A1). Then we have
$$
       \text{Put prices asymptotics, $x<0$: } E [ (\exp (k_\eps)-\exp( X_1^\eps) )^+]  = 
       \exp \left(  - \frac{\I(x)+o(1)}{\beps^2}    \right) \ .
$$
Under the additional moment assumption (A2) we also have the 
$$
        \text{Call prices asymptotics, $x>0$: } E [ (\exp( X_1^\eps) - \exp (k_\eps))^+]  = 
       \exp \left(  - \frac{\I(x)+o(1)}{\beps^2}    \right) \ .
$$
\end{proposition}

}

\begin{remark} \label{rem:A2}
(Condition A2) (i) Existence of a first moment ($p=1$) is a ``conditio sine qua non'' for martingale based pricing theory. Existence of ``$1^+$ moments'' is hence a fairly mild and frequently encountered condition, satisfied by the bulk of models used in practice. This includes RoughVol \`a la Forde--Zhang under a growth restriction on $\sigma(.)$ as well as the RoughHeston model. That said, non-existence of $1^+$ moments, and even loss of martingality, are known to occur for models with log-normal volatility \cite{sin1998complications, jourdain2004loss, lions2007correlations}. As shown in these references, in classical StochVol settings, condition (A2) holds if (and only if) the driving factors are $\rho$-correlated with $\rho < 0$. Fortunately, this assumption of negative correlation is well-justified from a financial perspective, which explains why models such as the ``log-normal'' SABR model are in widespread use. At this moment, validity of (A2) for (negatively correlated) rough Bergomi model (cf. Example \ref{ExRVLDP}, think: $\sigma(.)$ exponential) remains a  conjecture. The difficulties are described, together with decisive results on martingality and the ``only if''-part, in \cite{gassiat2018martingale, 2018arXiv180800421G}. Note however that the ``Put price'' part of Proposition \ref{prop:callLDP} does not rely on (A2) and hence applies to rough Bergomi.\footnote{We thank a referee for prompting us to make this application to Rough Bergomi explicit.} \\

(ii) Proposition \ref{prop:callLDP} improves on \cite[Corollary 4.9]{forde2017asymptotics} where similar call price asymptotics are obtained under a linear growth condition of $\sigma$ (or - tracking their proof - under the assumption of arbitrarily high finite moments, for $\eps$ small enough; cf. also  \cite{bayer2017short}). For instance, this includes Heston but rules out ``log-normal'' SABR. See also \cite{2018arXiv180800421G} for a recent discussion on such moment conditions.
\end{remark}

\begin{proof} {The details of all cases other than the upper call price bound are straight-forward and spelled out in \cite[Appendix C]{forde2017asymptotics}, hence omitted.}
Fix $y > x$ and consider $\eps \in (0,1]$ so that $\beps / \eps = \nu_\eps \ge 1$ and then
\begin{eqnarray*}
E [ (\exp( X_1^\eps) - \exp (k_\eps))^+]  &=&  E [ (\exp(  \bar X_1^\eps / \nu_\eps) - \exp (x/\nu_\eps))^+ 1_{\{ \bar X_1^\eps \in (x,y]\}}] \\
& & +
  E [ (\exp( X_1^\eps ) - \exp (k_\eps))^+ 1_{\{ \bar X_1^\eps > y \}}]  \\
  &\leq& (e^{y/\nu_\eps}-e^{x/\nu_\eps})^+ P [ \bar X_1^\eps > x ] + E [ (\exp( X_1^\eps)^p ]^{1/p} P( \bar X_1^\eps > y)^{1/q}
   \\
  &\leq& (e^{y}-e^{x})^+ P [ \bar X_1^\eps > x ] + E [ (\exp( X_1^\eps)^p ]^{1/p} P( \bar X_1^\eps > y)^{1/q}
\end{eqnarray*}

where we have taken H\"older's inequality for the second term, thanks to (A2), with H\"older conjugate $q = p'  < \infty$, and, uniformly over small $ \eps$, with
 $E [ (\exp( X_1^\eps)^p ]^{1/p} < \infty$.
Use (A1) to obtain
$$  \limsup_{\eps \to 0} \bar{\eps}^2 \log \left(E [ (\exp( X_1^\eps) - \exp (k_\eps))^+]\right) \leq \max \left( - \I(x) , - \frac{\I(y)}{q}\right)$$
and we conclude by letting $y \to +\infty$.
(Thanks to goodness of the rate function, one cannot reach infinity at finite cost so that $\I(y) \to \infty$ as $y \to \infty$.) 
\end{proof}
 
 When applied to classic StochVol option pricing in the short-time limit, $t= \eps^2$, Proposition \ref{prop:callLDP} is a rigorous formulation of what is often loosely written as
 $$
                 E\big[  (S_t - K)^+ \big] \approx e^{-\I(k)/ t}
 $$
 where $\I(k)$ is the rate function, $k = \log (K / S_0)$. Similarly, under RoughVol, with energy function $\I$, this relation becomes
 $$
                 E\big[  (S_t - K_t)^+ \big] \approx e^{-\I(x)/ t^{2H}}
 $$
 where $K_t = S_0 \exp ( x t^{1/2-H} )$. In the corresponding moderate regime one has
 $$
                 E\big[  (S_t - L_t)^+ \big] \approx \exp \left( - \frac{x^2}{2\sigma_0^2 t^{2H-2\beta}} \right) ,
 $$
 with $L_t = S_0 \exp ( x t^{1/2-(H-\beta)} )$ where $0< \beta < H < 1/2$. (The ``moderate'' approximation formula under classical StochVol is exactly of the same form, with $H=1/2$.)
 
 \medskip 
 
 The remainder of this paper is devoted to replace $\approx$ by a honest asymptotic equivalence, as seen in the 
Black-Scholes example: for fixed $x>0$, as $t \downarrow 0$,
$$ 
 E [ (e^{\sigma B_t + \mu t} - e^x)^+ ]  \sim e^{-x^2 / (2 \sigma^2 t )} t^{3/2} \ \tfrac{ \sigma^3  e^x e^{x \mu / \sigma^2} }{ x^2 \sqrt{2\pi} } \ .
$$
As we shall see, our methods apply in great generality to obtain such ``precise'' large deviation for StochVol and RoughVol. A detour is necessary in the moderate regime where the presence of  another scale $\eps^\beta$ rules out the stochastic Taylor expansions in $\beps = \eps^{2H}$. Nonetheless, we have precise enough control, that the moderate expansion is obtained, in essence, from uniformity of our precise large deviation estimates. 

\section{Examples, robustness and non-degeneracy} 
We are interested in refining the large deviation option price asymptotics of Proposition \ref{prop:callLDP}. To this end, we have to be more explicit about the construction of the $X^\eps$. The basic setup, is an $m$-dimensional Wiener space $C([0,1],R^m)$, 
equipped with the Wiener measure, as common underlying probability space for the processes $(X^\eps)$, constructed, for instance, as strong solution to stochastic It\^o or Volterra differential 
equations. In all these examples, $(X^\eps)$ has somewhat obvious meaning when the white noise $\dot \W$ is replaced (or perturbed) by some Cameron--Martin element $\dot \h \in L^2([0,1],R^m)$ -- which in turn underlies large deviation (and precise) asymptotics. If one wants to be less specific about the origin of the Wiener functionals under consideration, exhibiting the right abstract condition is no easy matter and leads to various notions of ``regular'' Wiener functionals \cite{kusuoka1991precise} (applied to option pricing in \cite{kusuoka2008remark,osajima2015general}). Our abstract assumptions, cf. (A3) and (A4) are different: in a sense we avoid working with a regular Wiener functionals by imposing ``robustness'' in the sense of stability w.r.t. a suitable enhancement of the noise which restores analytical control.{In order to motivate and explain assumptions, in the next two sections we consider two fundamental classes of stochastic volatility models: classical StochVol and rough StochVol. Following such examples, in Section \ref{sec:ass345} we state our robustness and control assumptions.}

\subsection{The case of SDEs and (classical) StochVol}  \label{sec:SDEs}

Consider an $n$-dimensional diffusion given in strong It\^o SDE form, in the small noise regime
$$
               d \X^\eps = b ( \eps, \X^\eps) dt + \sigma ( \X^\eps) \eps d \W \ ,
$$
with fixed initial data, driven by an $m$-dimensional Brownian motion $\W$. The asymptotic analysis of such small noise equations is classical, see e.g. \cite{azencott1982formule, azencott1985petites, arous1988methods}.

In a (classical) StochVol setting, $\X^\eps = (X^\eps, Y^\eps) \in \R \times \R^{n-1}$ where $X^\eps$ has the interpretation of price (or log-price) process, together with $(n-1)$ volatility factors.  Fix $\eps>0$. Under standard assumptions on $b, \sigma$ there exists a measurable map (``It\^o-map'') $ \phi^\eps$ on Wiener space, such that
$$
               X^\eps (\omega) = \phi^\eps (  \eps \W(\omega)) \ .
$$
Provided that $ b ( \eps, \cdot) $ converges (uniformly on compacts) to $b (0, \cdot)$ as $\eps \downarrow 0$, the family $(\X^\eps )$ is exponentially equivalent to $ \phi^0 (  \eps W(\omega))$; a small noise LDP in this setting is provided by Freidlin-Wentzell theory, with (good) rate function given by formal (!) application of the contraction principle and Schilder's LDP for Brownian motion. Stochastic Taylor expansions of the form
\begin{equation}
          \phi^\eps_1 (  \eps \W(\omega) + \h) =  \phi^0_1 (\h) + \eps g_1 ( \omega) + \eps^2 g_2 ( \omega) + \dots + \eps^{n-1} g_{n-1} ( \omega)  + r_n^\eps (\omega) \label{equ:STE}
\end{equation}
are easily obtained by a formal computation (and can subsequently be justified, including remainder estimates). In Lyons' rough path theory 
one enhances noise $\W$ to 
a (random) {\it rough path} of the form $\WW = (\W, \mathbb{W}) = (\W, \int \W \otimes d\W) \in \C^\alpha, \alpha \in (1/3,1/2)$, with It\^o integration, such that
$$
         \phi^\eps (  \W(\omega)) =  \Phi^\eps (  \WW (\omega)) 
$$
where $ \Phi^\eps$ is the It\^o-Lyons map, known to be locally Lipschitz continuous with respect to suitable rough path metrics. We also recall that
$$ \mathbb{W} \equiv \mathbb{W}^{\mathrm{Ito}} \ne \mathbb{W}^{\mathrm{Strato}} = \lim_{\eta \to 0} \int \W^\eta \otimes d\W^\eta \;,$$ where the limiting statement is a well-known Wong-Zakai\footnote{By $\W^\eta$ we mean a mollification at scale $\eta \to 0$, convergence is in probability and $ \C^\alpha$.} statement for the (Stratonovich) Brownian rough paths, see e.g. \cite[Ch. 3]{friz2014course}; with identity matrix $I$, the It\^o-Stratonovich correction reads
\begin{equation} \label{equ:ISSDE}
           \mathbb{W}_{s,t} = -\frac{I}{2} (t-s) + \lim_{\eta \to 0} \int \W^\eta \otimes d\W^\eta \;.
\end{equation}
Scaling $\eps \W$ lifts to {\it rough path dilation} $\delta_\eps \WW := (\eps \W, \eps^2 \int \W \otimes d\W)$ so that $\phi^\eps (  \eps \W(\omega)) =  \Phi^\eps ( \delta_\eps \WW (\omega))$. 
(A LDP for $\delta_\eps \WW$ in rough path topology then provides an easy proof of Freidlin-Wentzell large deviations for SDEs, see e.g. \cite[Ch.9]{friz2014course}.) 
Last but not least, Cameron-Martin translation $\W \mapsto \W + \h$ lifts to {\it rough path translation}, formally given by $T_{\h} \WW = (\W + \h, \int (\W+\h) \otimes d(\W+ \h)) $, so that, for fixed $\h\in\CM$ and any $n=1,2,...$,
\begin{equation} \label{equ:STSDE}
       \phi^\eps_1 (  \eps \W(\omega) + \h)  = \Phi^\eps_1 (T_{\h} (\delta_\eps \WW) )  = \Phi^0_1 (\h) + \eps G_1 ( \WW) + \dots + \eps^{n-1} G_{n-1} ( \WW) + R_n^\eps (\WW) \ .
\end{equation} 
We insist that the second equality is a purely deterministic ``rough Taylor expansion'', which becomes random only after inserting the random rough path $\WW = \WW (\omega)$, constructed from Brownian motion. The point of this construction (developed in \cite{aida2007semi, inahama2007asymptotic, inahama2008laplace, inahama2010stochastic}) is the robustification of all terms in (\ref{equ:STE}), i.e.
\begin{equation} 
      g_1 ( \omega) = G_1 ( \WW (\omega)), \ g_2 ( \omega) = G_2 ( \WW (\omega)), \dots , r_n^\eps (\omega) = R_n^\eps (\WW (\omega)) \ .
\end{equation} 
with precise (deterministic) rough path estimates, notably  (cf. \cite[Thm. 5.1]{inahama2007asymptotic})
\begin{equation}
          |   R_n^\eps (\WW) | \lesssim \eps^n ( 1 + ||| \WW |||^n ) \ ,   \label{equ:Inahama}
\end{equation}
valid on bounded sets of $\delta_\eps \WW$, uniformly over $\eps \in (0,1]$. As will be seen, expansion to order $2$ is sufficient for our purposes. Here $||| \WW ||| \equiv \| W \|_\alpha + \| \mathbb{W} \|^{1/2}_{2\alpha}$ is the homogenous rough path norm, as found e.g. in \cite[Ch. 2]{friz2014course}; the prototypical example of a homogeneous model norm, cf. Appendix. 

We need to understand the behaviour of $G_1,G_2$ under scaling and translation. The situation is somewhat simplified by assuming that the drift vanishes to first order, as is typical in the setting of 
short-time asymptotics (think: $b(\eps,x) = \eps^2 b (x)$), i.e. 
%
 $\partial_\eps b(0,\cdot) = 0 $. Then $\partial_\eps \Phi (0,\cdot) \equiv \partial_\eps |_{\eps=0 }\Phi^\eps (\cdot) = 0$ and, for all
$\eps \in (0,1]$, $\h \in \H$, and $ \WW \in \mathcal{C^\alpha}$,
\begin{equation} \label{equ:STSDE0}
  \Phi^0_1 (T_{\h} (\delta_\eps \WW) )  = \Phi^0_1 (\h) + \eps G^0_1 ( \WW) + \eps^2 G^0_2 (\WW) + R_3^{0,\eps} (\WW)
\end{equation} 
with continuous linear $G^0_1 := G_1$ and continuous quadratic  $$G^0_2 := G_2 - \frac{1}{2} \partial^2_{\eps \eps} \Phi (0,\h)$$ and remainder estimate for  $ R_3^{0,\eps} (\WW)$ exactly as in (\ref{equ:Inahama}).
Here we call a map $G:  \mathcal{C}^\alpha \mapsto C[0,1]$ {\it continuous linear} if there exists a continuous linear map $G_1: C^\alpha \to C[0,1]$ so that $G(\WW) = G_1(\W)$. (As a trivial consquence,  $\sup \{ G (\WW) : ||| \WW ||| \le 1  \}  < \infty$ and
$G (T_\k \delta_\varepsilon \WW) = G( \kk) + \varepsilon G (\WW)$.)
Similarly, call a continuous map $G: \WW \mapsto C[0,1]$ {\it continuous quadratic} if $\sup \{ G (\WW) : ||| \WW ||| \le 1  \}  < \infty$ and 
$$
                 G (T_\k \delta_\varepsilon \WW) = G ( \kk) + \varepsilon  G (\WW,\kk) + \varepsilon^2 G (\WW) 
$$
where $G (\WW,\kk)= G_2(\W,\k)$, for a continuous bilinear map $G_2 :  C^{\alpha} \times \CM \to C[0,1]$.

\begin{example}[Black-Scholes]
Let $X_1^\eps =\phi_1 (\eps, \eps B)$ with $\phi_1 (\eps, B) = \eps^2 \mu +  \sigma B_1$. (There is no need for a rough path lift here.) After a Girsanov shift $\eps B \to \eps B + h$, have 
$$ Z_1^\eps :=\phi_1 (\eps, \eps B + h ) = \sigma h_1 + \eps \sigma B_1 + \eps^2 \mu \ ,$$
which allows to read off $g_1,g_2$ (and hence $G_1,G_2$, in obvious ``pathwise'' robust form) and zero remainders. Note $\phi_1 (\eps, h ) = \sigma h_1 + \eps^2 \mu$, so that $\frac{1}{2} \partial^2_{\eps \eps} \phi_1 (0,h) = \mu$. (There is no need here to distinguish between $\phi$ and $\Phi$). In particular, $G_2^0 \equiv 0$.
\end{example}

\begin{example}[Stein-Stein]
For given parameters, $a\geq 0,b \le 0,c>0,\sigma _{0}\geq 0$, the Stein--Stein model
expresses log-price $X$, with $\rho$-correlated Brownian drivers $W, \bar W$,
\begin{eqnarray}
dX^\eps &=&-\frac{1}{2}(Y_t^\eps)^{2}\eps^2dt+Y_t^\eps \eps d\tilde W,\,\,X^\eps\left( 0\right) =x_{0}=0
\label{SteinSteinSDE} \\
dY^\eps &=&\left( a+bY^\eps\right) \eps^2dt+c \eps dW,\,\,\,Y^\eps\left( 0\right) =\sigma _{0}>0. 
\notag
\end{eqnarray}
In terms of a two-dimensional standard Browninan motion $(W, \bar W)$, we may take $\tilde W = \bar{\rho} \bar W+\rho W$.
For simpler expressions, take $a=b=0$ so that
$$
        X_1^\eps = -\frac{1}{2}\int_0^1(\sigma_0 + c\eps W_t)^{2}\eps^2dt + \int_0^1(\sigma_0 + c\eps W_t) \eps d \tilde W_t  \; .
$$
This is a rare and instructive example where the robust form can be seen explicitly: $X_1^\eps$ is an explicit function of the ($\eps$-dilated) Brownian rough path at unit time. Indeed $\delta_\eps \WW_1$ manifestly contains $\eps W_1, \eps \bar W_1,\eps^2 \int_0^1 W dW, \eps^2 \int_0^1 W d\bar W$, and $X_1^\eps$ is a second order polynomial in $\eps$ and these four expressions. As a consequence, also after some translation $W \to W + h, \bar W \to \bar W + \bar h$, we have an expansion (\ref{equ:STSDE}) with non-trivial $G_1$, $G_2$ but zero remainder $R_3^{\eps} \equiv 0$.  
It is not difficult to modify this example such as to see $\eps^3$ (and higher) in the stochastic Taylor expansion. Then there will be a non-vanishing remainder and, in general, we cannot write things down explicitly.
\end{example}

\bigskip

\subsection{The case of rough volatility}  \label{sec:roughDEs}

Following the notation of Example \ref{ExRVLDP}, we consider the small noise setting for rough volatility. Let $\sigma(.)$ be a scalar function. There is interest in avoiding too restrictive growth conditions on $\sigma(.)$ such as to include exponential functions (cf. RoughBergomi \cite{bayer2016pricing}), whereas one can safely assume $\sigma(.)$ to be smooth. Recall $\hat \eps = \eps^{2H}, H \in (0,1/2]$. Recall that
$$
          \W = (W, \bar W)
$$
consists of two independent (standard) Brownian motions. These are used to construct 
$$
      \tilde W = \bar{\rho} \bar W+\rho W \ \ \ \text{and} \ \ \    \hat W = K^H * \dot W  \ ,
$$
so that $\tilde W$ is again a standard Brownian motion ($\rho$-correlated with $W$) whereas $\hat W$ is a fractional Brownian motion, only dependent on $W$. 
Note that $\tilde W = \bar W$ in the uncorrelated case; note also $\hat W = W$ in the case of $H=1/2$, which falls under the classical StochVol setting. We will use identical notation when dealing with Cameron--Martin paths
$ \h = (h, \bar h)$, so that $\tilde h = \bar{\rho} \bar h+\rho h, \ \    \hat h = K^H * \dot h$. Under rough volatility, the (rescaled)  log-price process has the form
$$ \bar X^\eps_1 = \int_0^1 \s \left( \hat \eps \hat W \right) \hat \eps
d \tilde W  -\frac{1}{2}\eps \hat{\eps}\int_0^1
\s^2 \left(  \hat \eps \hat W\right) dt \ .
$$
A stochastic Taylor expansion, after a Girsanov shift $\hat \eps (W, \bar W) \mapsto  (\hat \eps W + h , \hat \eps \bar W + \bar h) $, gives, 
\begin{eqnarray}
\bar Z^{\eps, \h}_1 &:=&\int_0^1\s \left( \hat\eps \hat W+\hat h\right)
d[ \hat\eps \tilde W + \tilde h ]
-\frac{\eps \, \hat\eps}{2} \int_0^1\sigma ^{2}\left( \hat\eps
\hat W+\hat h\right) dt \nonumber \\
& \equiv & g_0 + \hat \eps g_1 (\omega) + \hat \eps^2 g_2 (\omega) + r_3(\omega) \ .  \label{equ:defr3}
\end{eqnarray}%
(We should really write $g_i^\h$ and similarly for $r_3^\h$, but later $\h$ will be fixed.)
We can read off $g_0 $ and $g_1 (\omega)$ as zero and first order terms (in $\hat\eps$) of the expansion
\begin{eqnarray}
\int_0^1\sigma \left( \hat\eps \hat W+\hat h\right)
d[ \hat\eps \tilde W + \tilde h ] & = &  \int_0^1\sigma \left( \hat h\right) d \tilde h
+\hat\eps \left( \int_0^1\sigma \big(\, \hat h \, \big) d\tilde W+\int_0^1\sigma ^{\prime }\big(\, \hat h \, \big) \hat W
d \tilde h
\right)  \nonumber \\
& & +\hat\eps^2 \left( \int_0^1\sigma' \big(\, \hat h \, \big) \hat W d\tilde W+ \frac{1}{2} \int_0^1\sigma'' \big(\, \hat h \, \big) \hat W^2 d \tilde h \right) 
+ \dots \label{equ:dotsCaseHonehalf}
\end{eqnarray}%
but the precise form of $g_2(\omega)$  - and thus the remainder $r_3(\omega)$ implicitly defined in (\ref{equ:defr3}) -  requires the following distinction:

\medskip

\noindent {\bf Case of $H=1/2$.} In this case $ \hat \eps =  \eps, \hat h =h $ and so
\begin{equation}\label{def:g2}
         g_2(\omega) = \int_0^1\sigma' (h) \hat W d\tilde W+ \frac{1}{2} \int_0^1\sigma'' ( h) \hat W^2 d \tilde h - \frac{1}{2} \int_0^1\sigma ^{2} (  h) dt \\
\end{equation}
Remark that in this case the remainder $r_3(\omega)$ is the sum of the implicit (dots) remainder in (\ref{equ:dotsCaseHonehalf}) and $\eps^2$ times the difference
$$
        \frac{1}{2} \int_0^1\sigma ^{2}\left( \eps W+ h\right) dt -  \frac{1}{2} \int_0^1\sigma ^{2} (  h) dt = O (\eps \| W \|_{\infty;[0,1]} ) \ , 
$$
valid on
$\{ \omega : \eps \| W (\omega) \|_{\infty;[0,1]} < \delta \}$, for any finite $\delta$, which shows that this term contributes $C\eps^3 \| W \|_{\infty;[0,1]}$ to the remainder, where $C$ can be taken as $C^1$-norm of $\sigma^2$, restricted to a $\delta$-fattening of the set $\{ h(t): 0 \le t \le1 \}$. 

\medskip

\noindent
{\bf Case of $H < 1/2$.} In this case the second order (in $\hat \eps)$ term is given by
\begin{equation}\label{def:g2rough}
          g_2(\omega) = \int\sigma' \big(\, \hat h \, \big) \hat W d\tilde W+ \frac{1}{2} \int_0^1\sigma'' \big(\, \hat h \, \big) \hat W^2 d \tilde h \\
\end{equation}
whereas the remainder $r_3(\omega)$ is the sum of the implicit (dots) remainder in (\ref{equ:dotsCaseHonehalf}) and
$$
     -\frac{\eps \, \hat\eps}{2} \int_0^1\sigma ^{2}\left( \hat\eps
\hat W+\hat h\right) dt = O(\eps \, \hat\eps) = o (\hat \eps^2) \ ,
$$
valid on $\{ \omega : \eps \| W (\omega) \|_{\infty;[0,1]} < \delta \}$, as before. 
%
%

\medskip

{\bf Robust form.} The noise $\W = \W (\omega)$ can be lifted to a (random) model $$ \WW = \left(\W, \int \hat W dW, \int \hat W^2 dW , ....\right) \in \M$$ 
with finitely many ``building blocks'', the first few of which have been listed. The (log)price process under rough volatility can then be written as continuous image $\Phi(\WW)$, see 
 \cite{bayer2019regularity}. It is an important fact that Wong-Zakai approximations to this model fail to converge (due to $H<1/2$), but at the price of
 subtracting the correct (diverging) quantities one does obtain the above model (defined in term of It\^o integration) as limit; cf. Appendix \ref{app:ERS}.
 (Although the ``It\^o model" of interest is given by standard left-point approximations, such renormalized Wong-Zakai approximations will come in handy later on, cf.Lemma \ref{lem:AboutV}.) The Cameron-Martin space acts naturally by translation. For any $\h \in \H$, we have natural map $T_{\h}: \M \to \M$, with inverse $T_{-\h}$, which ``lifts" the meaning of $\W \mapsto \W + \h$, with an estimate (Lemma \ref{lem:RVMtranslation}), 
of the form $$||| T_{\h} \WW ||| \lesssim ||| \WW ||| + \| h ||_{\CM}. $$
 This translation is moreover  consistent with Cameron--Martin shift of Wiener paths in the sense that
$$
            \forall \h \in \CM: \WW (\omega + \h) = T_{\h} \WW (\omega) \text{ a.s.} \ .
$$
 One further defines dilation of the above model, similar to rough paths, by 
\begin{equation}\label{def:dilation}
          \delta_\eps \WW = \left(\eps \W, \eps^2 \int \hat W dW,  \eps^3 \int \hat W^2 dW , ....\right) \in \M \ .
\end{equation}
We then have 
$$
          \bar Z^{\eps, \h} (\omega) =  \Phi^\eps ( T_\h \delta_{\hat \eps} \WW (\omega) ) \ ;
$$
where we introduced, for an arbitrary $\MM \in \M$,
$$
          \Phi^0 ( \MM ) :=  \int \s \left( \hat M \right)  d \MM \ , 
$$
and then, for every  $\eps \ge 0$, 
$$
                 \Phi^\eps ( \MM )  \equiv \Phi( \hat \eps, \MM ) := \Phi^0 ( \MM ) - \frac{\eps \, \hat\eps}{2} \int \sigma ^{2}\left( \hat M \right) dt \ .
$$
Observe that $\partial_{\hat \eps} \Phi (0,\cdot) \equiv 0$. We see in Theorem \ref{thm:STLE} that
$$
                 \Phi^0_1 ( T_\h \delta_{\hat \eps} \WW (\omega) ) =   \Phi^0_1 (\h) +  \hat \eps G^0_1 (\WW (\omega)) +   \hat \eps^2 G^0_2 (\WW (\omega)) + R^0_3 (\WW (\omega)) 
$$
where $G^0_1, G^0_2$ are continuous linear and quadratic, respectively, in the obvious adaption of these notions 
to the present setting of regularity structures and
$$
    | R^0_3 (\WW ) | \lesssim \hat \eps^3 ||| \WW |||^3 \ 
$$ 
on bounded sets of $\hat \eps ||| \WW |||$. Mind that the corresponding expansion 
$$ \Phi^\eps_1 ( T_\h \delta_{\hat \eps} \WW (\omega) ) = \Phi^0_1 (\h) +  \hat \eps G_1 (\WW (\omega)) +   \hat \eps^2 G_2 (\WW (\omega)) + R_3 (\WW (\omega)) $$ may be different, with
$$
          G^0_1 = G_1 \text{  and } G^0_2 = G_2 - \frac{1}{2} \partial^2_{\hat \eps \hat \eps} \Phi (0,\h) \  
$$
and 
$$
          | R_3 (\WW ) | \lesssim o( \hat\eps^2)  + \hat \eps^3 ||| \WW |||^3 \ . 
$$
(When $H<1/2$ we have $G^0_2 = G_2$ and 
the $o( \hat\eps^2)$-term can be written more quantitatively as $\hat\eps^2 \times O(\hat \eps^{1-2H})$.
In case $H=1/2$, we have $G^0_2 \ne G_2$, in turn the $o( \hat\eps^2)$-term is $O(\hat \eps^3)$.)

\medskip 

A LDP for rescaled lifted noise (in the space of models) is also given in \cite{bayer2019regularity} and thus induces Forde-Zhang type larges deviations for $\Phi^\eps ( \delta_{\hat \eps} \WW)$ with speed $\hat \eps^2$, without any growth assumptions on the volatility function $\sigma (.)$. The energy function $\I$ 
is smooth near zero \cite{bayer2017short}; \footnote{This is also a consequence of the abstract condition (A5) below, verified later on in the RoughVol example.} moreover  $\I(x) = \tfrac{1}{2} \| \h^x \|_\CM^2$ in terms of the unique non-degenerate minimizing control path $\h^x = (h^x, \bar h^x)$.

\subsection{Assumptions (A3-5): robust model specification and control theory} 
\label{sec:ass345}
Following the two running examples above -  classical diffusion / SDE based StochVol on the one hand and RoughVol on the other hand, we now present our general conditions. 
For the reader's convenience, the concrete regularity structure for RoughVol, upon which its discussion in Section \ref{sec:roughDEs} builds, is recalled in detail in Appendix \ref{app:ERS}. We have refrained, however, from further detailed recalls on the (by now well-known) rough path approach to SDEs. 
These examples, taken together, constitute our key examples of (asset price) models which admit a {\it robust representation}, in the sense of (A3a) and Appendix \ref{sec:Appendix3A}, in the form of a straight-forward list of properties, already highlighted in running examples, that are used in the sequel and should be no surprise to people familiar with rough paths and basic regularity structures, e.g. \cite{friz2014course}. 

\begin{itemize}
\item[(A3a)] 
There exists a regularity structure, with model space $\M$, that accommodates $m$-dimensional Brownian noise, and a continuous map $\Phi^\bullet: (\eps, \MM) \mapsto \Phi^\eps(\MM)$, from $[0,1]\times  \M \to C[0,1]$, such that we have the {\it robust representation},  
$$ \bar X^\eps (\omega) = 
\Phi^\eps ( \delta_{\beps} \WW (\omega)) \text{ a.s. } $$ 
where $\delta_{\beps} \WW$ is the $\beps$-dilation 
a random model $\WW = \WW (\omega)$ which arises from $m$-dimensional Brownian motion $\W = \W (\omega)$ as Wong-Zakai type limit, and on which 
$\CM$ acts by {\it translation} $T_\h, \h \in \CM$; Appendix \ref{sec:Appendix3A} gives more details.

\medskip

\item[(A3b)] A Schilder-type LDP holds {\it in model topology} for $ \delta_{\beps} \WW$ with (good) rate function given by
$ \mathbf{h} \mapsto \tfrac{1}{2} \| \h \|^2_{\H}$ when $\hh = (\h,...)$ is the canonical lift of $\h \in \CM$, and $\infty$ else. 
\end{itemize} 
\noindent 
Note that condition (A3), by application of the contraction principle, implies a LDP for $(\bar X^\varepsilon_1)$, and thus condition (A1a), with good rate function 
\begin{equation} \label{equ:inducedRF}
          \I (x) = \inf_{\h \in \H} \{ \tfrac{1}{2} \| \h \|^2_{\H}: \Phi _1( \h ) = x \} \equiv \inf_{\h \in \mathcal{K}^{x}}  \tfrac{1}{2} \| \h \|^2_{\H} \ ,
\end{equation}
where $\mathcal{K}^{x} \subset \CM$ denotes the space of $x$-{\it admissible controls}, i.e. elements $\h \in \H: \Phi_1 (\h) = x$
 where we abuse notation,  in terms of the canonical lift $\hh$ of $\h$, by writing
 $
    \Phi (\h) = \Phi^0 (\hh).
 $
%

\bigskip
\noindent

We also assume a robust ``stochastic" Taylor-like expansion, formulated however in a purely deterministically fashion. 
Care is necessary because in general $\eps \ne \beps$ and also
$$
    \Phi^\eps ( \delta_{\beps} \MM)) \ne \Phi^0 ( \delta_{\beps} \MM)
$$
where $ \delta_{\beps}$ denotes again the dilation operators on models; cf. Appendix \ref{sec:Appendix3A})
Let $\mathbb{B}$ be Banach space (typically $C[0,1]$ or $\R$). We call a map $G:  \M \mapsto \mathbb{B}$ {\it continuous linear} if there exists a continuous linear map $G_1: C^\alpha \to \mathbb{B}$ so that $G(\MM) = G_1(M)$. As a (trivial) consquence, $\sup \{ G (\MM) : ||| \MM ||| \le 1  \}  < \infty$ and, for all $\eps \ge 0$ and $\k \in \CM$,
$$ G (T_\k \delta_\varepsilon \MM) = G( \kk) + \varepsilon G (\MM) \ . $$
Similarly, call a continuous map $G: \MM \mapsto \mathbb{B}$ {\it continuous quadratic} if $\sup \{ G (\MM) : ||| \MM ||| \le 1  \}  < \infty$ and 
$$
                 G (T_\k \delta_\varepsilon \MM) = G ( \kk) + \varepsilon  G (\MM,\kk) + \varepsilon^2 G (\MM) 
$$
for a continuous bilinear map $G_2: C^{\alpha} \times \CM \to \mathbb{B}$ so that $G_2 (M, \k) =: G (\MM,\kk)$.
\begin{itemize}


\item[(A4a)] For every $\h \in \CM$ there exists $(G^{0,\h}_1,G^{0,\h}_2,R^{0,\h,\eps}_3)$ such that for every $\eps \ge 0$ and model $\MM = (M, ...) \in \M$ we have 
$$
\Phi^0_1 (T_\h \delta_{\beps} \MM ) = G_0^\h + \beps G^{0,\h}_1 (\MM) + \beps^2 G^{0,\h}_2 (\MM) + R^{0,\h,\eps}_{3} (\MM) \ . 
$$
with $ G^\h_0 = \Phi^0_1 ( \h )$, continuous linear $G^{0,\h}_1 : \MM \to  \R$, continuous quadratic  $G^{0,\h}_2: \MM \to  \R$, such that $(h , \MM) \mapsto G^{0,\h}_i(\MM)$ is continuous for $i=1,2$, and remainder estimate 
$$
           | R_3^{0,\h,\eps} (\MM) | \lesssim  o(\beps^2)  + \beps^3 ||| \MM |||^3 \ , 
$$
in the sense of a uniform estimate, over $\beps ||| \MM |||$ and $|| \h ||_{\CM} $ bounded. 
\item[(A4b)]  
Similarly, assume existence of $(G^\h_1,G^\h_2,R^{\h,\eps}_3)$ such that
$$
\Phi^\eps_1 (T_\h \delta_{\beps} \MM ) = G^\h_0 + \beps G^\h_1 (\MM) + \beps^2 G^\h_2 (\MM) + R^{\h,\eps}_3 (\MM) \ .
$$
with same remainder estimate and
$$
          G_1^{\h} (\MM) \equiv G^{0,\h}_1 (\MM)  \text{  and   }    G_2^{\h} (\MM) \equiv G^{0,\h}_2 (\MM) + K_2^{\h},
$$
for some $K_2^{\h} \in \R$. 
\end{itemize} 
%
%
\bigskip

\noindent  

Assumption (A4a) implies $C^2$ (Fr\'echet) differentiability of $\Phi$ on $\CM$. Indeed, taking $\MM=\mathbf{h}'$ be the canonical lift of an element $\h'$ of $\H$, (A4a) implies that 
\[ \Phi^0_1 (\h + \beps \h' ) = \Phi^0_1 (\h) + \beps G^\h_1 (\h') + \beps^2 G^\h_2 (\h') + o(\beps^2) \] 
for some continuous $G^\h_1, G^\h_2 : \H \to \R$ which are respectively linear and bilinear on $\H$. In particular, the (Fr\'echet) derivatives $D \Phi_1$ and $D^2 \Phi_1$ are well-defined. 

\medskip 

The next set of conditions is of control theoretic nature, with $x > 0$ fixed, and only concern  $\Phi = \Phi^0$ as map from $\CM$ to $C[0,1]$. Recall that $\mathcal{K}^{x} \subset \CM$ denotes the space of elements $\h \in \H: \Phi_1 (\h) = x$.

\begin{itemize}


\item[(A5a)] There exists a unique minimizer $\h^x \in \mathcal{K}^{x}$, with 
$$
 \I (x) =  \tfrac{1}{2} \| \h^x \|^2_{\H}. \ 
$$
\item[(A5b)] $\Phi_1: \CM \to \R$ has, at $\h^x$, a surjective differential, which here means
$$
                      D \Phi_1({\h^x}) \ne 0 \in \CM^*. 
$$

\item[(A5c)] The minimizer $\h^x$ is non-degenerate, in the sense that for all $0 \ne \h \in \H_0 :=\{D \Phi_1(\h^x) \}^{\perp}$, 
$$ q^x D^2\Phi_1(\h^x) (\h, \h) < \| \h \|^2_{\H},  $$ 
where $q^x \in \R$ is such that $\h^x= D\Phi_1(\h^x)^* q^x$.


\end{itemize}

\noindent
The existence of the Lagrange multiplier $q^x$ is discussed in Lemma \ref{lem:appendix-2pre}. Note that assumption (A5) is classical in the SDE context, cf. for instance \cite{azencott1985petites, kusuoka2008remark,deuschel2014marginala}. We discuss further Assumption (A5c) in Appendix \ref{app:SOO}, in particular it can be seen as strict positivity of the Hessian of $I(h):= \frac 1 2 \|\h\|^2$ when restricted to $\mathcal{K}^x$. Also note that in fact assumptions (A5a)-(A5c) imply that $\I$ is $C^1$ at $x$ and then one simply has $q_x = \I'(x)$, cf. Lemma \ref{lem:hx}.


\noindent 


\bigskip

We now assume (A4) and (A5). Fix $\h = \h^x$ as supplied by (A5) and the corresponding rough Taylor terms $(x, G^x_1,G^x_2,R^{x,\eps}_3)$
{$:=(G^{\h^x}_0, G^{\h^x}_1,G^{\h^x}_2,R^{\h^x,\eps}_3)$} supplied by Assumption (A4b). 
Let $\WW = \WW (\omega)$ be the It\^o lift of Brownian motion. We then define the (probabilistic) objects
\begin{equation} \label{eq:g}
           g_1^x (\omega) := G_1^{x} (\WW(\omega)), \ \ g_2^x (\omega) := G_2^{x} (\WW(\omega)), \ \  r^{x,\eps}_3 (\omega) := R^{{x},\eps}_3 (\WW(\omega)) \ .
\end{equation}
so that the stochastic Taylor expansion for the $\h^x$ Girsanov shift of $\bar X^{\eps}$ can be written as
$$
       \bar Z^{\eps, x} (\omega) =  x + \beps g_1^x (\omega) + \beps^2 g_2^x (\omega) + r^{x,\eps}_3 (\omega) \ .
$$

We saw in Section \ref{sec:SDEs} and \ref{sec:roughDEs} that condition (A4a-b) holds in the classical StochVol (SDE) case, as well as the RoughVol case. 

\section{Option pricing: exact representation formulae and asymptotics} 

\subsection{Call and put price formulae} 

We can now state a key formula which generalizes the RoughVol call price formula considered in \cite{bayer2017short} and which is at the heart of our analysis. It applies in the general setup of Section \ref{sec:ass345}. Therefore, it applies  
both to
classical StochVol situations (with $H=1/2$) where $\I = \mathcal{I}$ has the geometric interpretation of shortest square-distance to some arrival manifold determined by the strike, and 
to RoughVol with  the Forde-Zhang rate function $\I = \mathcal{J}$ as given in (\ref{ifz}).

\begin{theorem}\label{th:expansion}  (i) Assume (A1-A5) for fixed $x>0$. 
Let $k_\eps= x \eps / \beps $. 
Let $g_1^x, g_2^x$ and $r_3^{\eps,x}$ be the stochastic Taylor coefficients / remainder defined in \eqref{eq:g}.
Then 
$$
       c(\eps^2,k_\eps) =  \exp \left( {  - \frac{\I(x)}{\beps^2}  } \right) e^{k_\eps} J^{\mathrm{call}}(\eps, x)
       $$
with
\begin{equation}\label{defJfirst}
         J^{\mathrm{call}}(\eps, x) = E \left[ \exp \left( {  - \frac{\I'(x)g_1^x }{\beps}  } \right) \Big( \exp  \big(  \eps g_1^x + \beps \eps g_2^x +   ({\eps}/{\beps}) \, r_3^{\eps,x} \big) - 1\Big)^+\right] 
\end{equation}
 (ii) Assume now (A1),(A3-A5) for fixed $x<0$ and $k_\eps= x \eps / \beps $. 
Let $g_1^x, g_2^x$ and $r_3^{\eps,x}$ as before.
Then 
$$
       p(\eps^2,k_\eps) :=  \exp \left( {  - \frac{\I(x)}{\beps^2}  } \right) e^{k_\eps} J^{{\mathrm{put}}}(\eps, x)
       $$
with
\begin{equation}\label{defJput}
         J^{\mathrm{put}}(\eps, x) = E \left[ \exp \left( {  - \frac{\I'(x)g_1^x }{\beps}  } \right) \Big( 1- \exp  \big(  \eps g_1^x + \beps \eps g_2^x +   ({\eps}/{\beps}) \, r_3^{\eps,x} \big) \Big)^+\right] 
\end{equation}
(iii) 
Under Assumption (A1-A5), $\Lambda'(x) > 0$ for $x>0$ and under (A1) and (A3-A5), $\Lambda'(x) < 0$ for $x<0$. 
Writing $\sigma_x^2$ for the variance of $g_1^x$, the following differential relation holds, provided $x \ne 0$, under (A1-A5) for fixed $x>0$, and under (A1),(A3-A5) for fixed $x<0$: 
 \begin{equation}\label{eCool}
\sigma_x^2=\frac{2\I(x)}{\I'(x)^2} \ . 
\end{equation}
\end{theorem}

\begin{remark} Assumption (A2) in part (i) can be relaxed to integrability of $\exp \bar X^\eps_1$: one just needs finite call prices here. 
Recall furthermore (Lemma \ref{lem:hx}) that (A5) implies that $\Lambda$ is $C^1$ in a neighborhood of $x$.  
\end{remark} 
\begin{proof} By the very definition of call price function $c$ and $X_t \sim X_1^\eps$ with $t = \eps^2$,
$$
      c(t,k) =  E [ (\exp( X_1^\eps) - \exp (k)^+] .
$$
Since $X^\varepsilon := \bar  X^\varepsilon { \eps }/{\beps} $, consider $k=k_\eps$ and thus, with $1/ \nu_\eps =  { \eps }/{\beps} $,
$$
     c(\eps^2,k_\eps) = E [ (\exp( \bar X_1^\eps / \nu_\eps) - \exp (k_\eps))^+] .
$$
Our assumptions imply that there is a LDP for $ \bar X^\eps_1 (\omega) = \phi^\eps_1 (\beps \omega)$, where $ \phi^\eps$ denotes the It\^{o} map as in Section \ref{sec:SDEs}, such that 
$$
     - \eps^2 \log P [ \bar X_1^\eps > x ] \to \tfrac{1}{2} \| {\rm h}^x  \|^2_{\CM} = \I (x)
$$
for some unique minimizer of the control problem $\phi^0_1 ( {\rm h} ) = x$.
By Assumption (A4) we have the stochastic Taylor expansion of the form 
$$
       \bar Z^\eps_1 (\omega) = \phi^\eps_1 (\beps \omega + \h^x) =    x+ \beps g_1^x  + \beps^2 g_2^x   + r_3^{x,\eps}
$$
with the same notations as in \eqref{eq:g}.
Apply Girsanov's theorem, $\beps \W \to \beps \W + \h^x = \beps (\W + \h^x / \beps)$, and obtain
$$
          c(\eps^2,k_\eps) =  E [ e^{-\tfrac{1}{\bar \eps} \int_0^1 \dot{\h}^x d\W - \tfrac{1}{2\bar \eps^2} \| \h^x \|^2_{\CM }} (\exp( \bar Z_1^\eps / \nu_\eps) - \exp (k_\eps))^+] .
$$     
We then use first order optimality of $\h^x$ to see that (see Appendix \ref{app:FOO}) 
\begin{equation}\label{opt}
           \int_0^1 \dot{\h}^x d\W 
                          =\I'(x) g_1^x (\omega) \ 
\end{equation}
and this establishes the call price formula and hence part (i). With It\^o isometry we can write, from \eqref{opt},
\[
\|h^x\|^2_{\CM}
=\|\dot{h}^x\|^2_{L^2}=
(\I'(x))^2 Var(g_1^x)
\]
and conclude with
$\|h^x\|^2_{\CM}
=2\I(x)$. 
Recall from assumption (A1) that, for $x>0$, $-{\beps^2} \log P[ \bar X_1^\eps > x ] \sim  \I(x)$. By monotonicity of the left-hand side in $x$ one easily see that $\I(x)$ is monotone and hence $\Lambda'(x) \ge 0$. Moreover, we know from (A1b) that $\Lambda (x) = \| \h^x \|^2 / 2 \ne 0$.
In view of (\ref{opt}) we cannot have $\Lambda'(x) =0$ and so $\Lambda'(x) > 0$. This settles (iii) in case $x>0$. 

\medskip

\noindent The proof of (ii) and then (iii), in case $x<0$ is completely analogous.
 \end{proof}

\subsection{Precise large deviations}

\begin{theorem}\label{thm:main0} (i) Assume (A1-A5) for fixed $x>0$. 
Let $k_\eps= x \eps / \beps $. Then there exists a function $A=A(x) \sim 1$ as $x \downarrow0$, such that, with $\sigma_x^2=2\I(x) / \I'(x)^2$ as earlier, 
\begin{equation} \label{equ:main0}
       c(\eps^2,k_\eps) \sim \exp \left( {  - \frac{\I(x)}{\beps^{2}}  } \right)  \eps \, \beps^2 
       \frac{A(x)} 
       { (\I'(x))^2  \sigma_x \sqrt{2\pi}}\ \ \ \text{as $\eps \downarrow0$} .
\end{equation}
(ii) Now
assume (A1), (A3-A5) for fixed $x<0$. 
Let $k_\eps= x \eps / \beps $. Then there exists a function $A=A(x) \sim 1$ as $x \uparrow0$, such that
\begin{equation}
       p(\eps^2,k_\eps) \sim \exp \left( {  - \frac{\I(x)}{\beps^{2}}  } \right)  \eps \, \beps^2 
       \frac{A(x)} 
       { (\I'(x))^2  \sigma_x \sqrt{2\pi}}\ \ \ \text{as $\eps \downarrow0$} .
\end{equation}
\end{theorem}



We here derive the (correct) formula with a {\it formal} computation that ignores the remainder term. The actual proof is given in Section \ref{sec:proofmain} and relies in particular on Assumption (A4)  to handle the remainder. 

\begin{proof} In view of the exact call price formula in Theorem \ref{th:expansion}, it suffices to analyse
\begin{equation}\label{defJ}
         J(\eps, x) = E \left[ \exp \left( {  - \frac{\I'(x)g_1^x }{\beps}  } \right) \Big( \exp  \big(  \eps g_1^x + \beps \eps \, g_2^x 
         +  ({\eps}/{\beps}) \, r_3^{\eps,x} \big) - 1\Big)^+\right] .
\end{equation}
We ignore the remainder. With high probability, $(\eps g_1^x + \beps \eps g_2^x) $ is small when $\eps, \beps \to 0$, hence we expect
$$
        E \left[ \exp \left( {  - \frac{\I'(x)g_1^x }{\beps}  } \right) \Big( \exp  (  \eps g_1^x + \beps \eps g_2^x ) - 1\Big)^+\right] \sim  \eps \, E \left[ \exp \left( {  - \frac{\I'(x)g_1^x }{\beps}  } \right) \Big(   g_1^x + \beps g_2^x \Big)^+\right] \ .
$$
With $x$ fixed, write $g_i \equiv g_i^x$. Recall (first order optimality)
$$
      \I'(x) \,  g_1 (\omega) =  \int_0^1 \dot{\h}^x d\W 
$$
and we can then decompose 
$         g_2 =  \Delta _{2}+g_{1}\Delta_1 +g_{1}^{2}\Delta _{0} $
where $\Delta = (\Delta_0, \Delta_1, \Delta_2)$ is independent from $g_1$; 
see Lemma \ref{lem:AboutV} (alternatively, employ a Wiener-It\^o chaos argument). This leaves us with the computation of 
$$
        E \left[ \exp \left( {  - \frac{\I'(x)g_1 }{\beps}  } \right) \Big(  g_1+ \beps ( \Delta _{2}+g_{1}\Delta_1 +g_{1}^{2}\Delta _{0} )  \Big)^+\right] 
        = E \Big[ E \big( .... | \Delta \big) \Big] \ .
$$
The inner (conditional) expectation is a simple Gaussian integral for which a finite-dimensional Laplace analysis (cf. Lemma \ref{lem:BS1}) gives
$$
         E \big( .... | \Delta \big) \sim  \frac{\beps^2}{ (\I'(x))^2  \sigma_x \sqrt{2\pi}} \exp \big( \I'(x) \Delta_2 \big) \  \ \text{ as $\beps \to 0$.}
$$   
The asymptotic behaviour of the full expectation is then indeed obtained, as one would hope, by averaging over $\Delta_2 = \Delta_2^x (\omega)$, so that
$$
        E \left[ \exp \left( {  - \frac{\I'(x)g_1 }{\beps}  } \right) \Big(   g_1+ \beps ( \Delta _{2}+g_{1}\Delta_1 +g_{1}^{2}\Delta _{0} )  \Big)^+\right] 
        \sim  \frac{\beps^2}{ (\I'(x))^2  \sigma_x \sqrt{2\pi}} E \Big[ \exp \big( \I'(x) \Delta_2^x \big) \Big] \ .
$$
Clearly, such a formula requires $\exp \big( \I'(x) \Delta_2^x \big) \in L^1(P)$; in fact, the proof requires $L^{1+}(P)$ and 
we see in Section \ref{sec:LocAna} that this is precisely the case because $\h^x$ is a {\it non-degenerate} minimizer, cf. assumption (A5). Taking into account the factor $e^{k_\eps}=e^{x \eps / \beps}$ from Theorem \ref{th:expansion}, see that 
\begin{equation} \label{equ:Aabstract}
 A(x) = 
\begin{cases}
    E \big[ \exp ( \I'(x) \Delta_2^x ) \big],& \text{if } H < 1/2\\
    e^x E \big[ \exp ( \I'(x) \Delta_2^x ) \big],& \text{if } H=1/2
\end{cases}
\end{equation}
The formula in the put case follows from the computation of 
$$
        E \left[ \exp \left( {  - \frac{\I'(x)g_1 }{\beps}  } \right) \Big(   -g_1- \beps ( \Delta _{2}+g_{1}\Delta_1 +g_{1}^{2}\Delta _{0} )  \Big)^+\right] 
$$
that is exactly as before once we have written $ \I'(x)g_1= (-\I'(x))(-g_1)$ in the exponential where we recall $sgn(\I'(x))=sgn(x)$. 
\end{proof}

\begin{remark}
Sanity check: Black-Scholes $H=1/2$, with $\sigma^x \equiv \sigma$, $\I(x) = x^2 / (2 \sigma^2)$ . We then have 
$$
      c(\eps^2,x) \sim  \exp \left( {  - \frac{x^2}{2 \eps^2 \sigma^2 }  } \right) \varepsilon^{3} \frac{\sigma^3 }{ x^2 \sqrt{2\pi}} A(x) 
$$
which matches precisely the previously derived Black-Scholes expansion (\ref{e:BScall}), with 
$ A(x) = e^{x (1+\mu / \sigma^2)} $ as predicated by (\ref{equ:Aabstract}) with $\Delta_2 = \mu$. Remark that in the Black-Scholes case, assumptions (A1-A5) are indeed satisfied for any $x>0$.

\end{remark}

\subsection{Precise moderate deviations}

We now turn to the moderate regime. The proof of the following result is given in Section \ref{proof:MD}. We only spell out the call case here,
leaving the similar case of puts to the reader.

\begin{theorem} \label{thm:mdp}
Assume that Assumptions (A1)-(A5) hold for $x=0$ with $h^0=0$.
Let $k_\varepsilon = x_\varepsilon \varepsilon/ \bar{\varepsilon}$ with $x_\varepsilon \to 0$, $x_\varepsilon/\bar{\varepsilon} \to \infty$. Then

\begin{equation} \label{equ:modCexp}
c(\varepsilon^2,k_\varepsilon) \sim_{\varepsilon \to 0} \exp\left(- \frac{ \Lambda(x_\varepsilon)}{\bar{\varepsilon}^2} \right) \varepsilon \bar{\varepsilon}^2 \frac{ \sigma_0^3}{x_\eps^2 \sqrt{2\pi} }.
\end{equation}
\end{theorem}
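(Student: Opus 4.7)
The idea is to apply Theorem~\ref{thm:main0} at $x = x_\varepsilon$ and carefully track the $x\downarrow 0$ asymptotics of the three pieces $\I$, $\sigma_x$, $A$ along the joint limit $\varepsilon \downarrow 0$, $x_\varepsilon \downarrow 0$. Since (A1)--(A5) are assumed at $x=0$ and the ingredients are continuous in $x$, these assumptions propagate to a full neighbourhood of $0$, so Theorem~\ref{thm:main0} applies pointwise for any $x$ close enough to $0$. What remains is (i) to identify the correct limits of the coefficients appearing in \eqref{equ:main0} and (ii) to upgrade the pointwise asymptotic to one that is uniform along $x = x_\varepsilon \to 0$.

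For (i), since $h^0 = 0$ is non-degenerate in the sense of (A5), the Lagrange multiplier at $x=0$ is $q_0 = 0$, so $\I'(0)=0$. The implicit function theorem applied to the KKT system (as in Lemma~\ref{lem:hx}) then shows that $h^x$ is $C^1$ in $x$ near $0$, with expansion $h^x = x\,(D\Phi_1(0))^*/\sigma_0^2 + o(x)$ in $\CM$, where $\sigma_0^2 := \|D\Phi_1(0)\|_{\CM}^2 = \Var(g_1^0)$ is the spot variance. Hence
\[
\I(x) = \tfrac{x^2}{2\sigma_0^2} + o(x^2),\qquad \I'(x) = \tfrac{x}{\sigma_0^2} + o(x),\qquad \sigma_x = \tfrac{\sqrt{2\I(x)}}{|\I'(x)|} \longrightarrow \sigma_0.
\]
Moreover, $A(x)\to A(0)=1$: in the definition \eqref{equ:Aabstract} the exponent $\I'(0)\Delta_2^0=0$, and in the $H=1/2$ case the extra factor $e^{k_\varepsilon} = e^{x_\varepsilon\varepsilon/\beps}\to 1$. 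Combining gives $(\I'(x_\varepsilon))^2 \sigma_{x_\varepsilon} \sim x_\varepsilon^2/\sigma_0^3$, which matched against \eqref{equ:main0} produces the target prefactor $\sigma_0^3/(x_\varepsilon^2\sqrt{2\pi})$ in \eqref{equ:modCexp}.

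For (ii), I would revisit the proof of Theorem~\ref{thm:main0} and verify that all error terms are uniform for $x$ in a compact neighbourhood of $0$. The stochastic Taylor expansion (A4a)--(A4b) is stated to be uniform over bounded $\h$, and the coefficients $G_i^{0,\h}$ depend continuously on $\h$, which together with continuity of $x\mapsto h^x$ from step (i) delivers uniform control of $g_1^{x_\varepsilon}, g_2^{x_\varepsilon}, r_3^{\varepsilon,x_\varepsilon}$. The effective small parameter of the final conditional Gaussian step is $\beps/(\I'(x_\varepsilon)\sigma_{x_\varepsilon})\sim \beps\sigma_0/x_\varepsilon$, which tends to $0$ precisely because $x_\varepsilon/\beps\to\infty$; this is the scale on which the non-asymptotic inequality in Lemma~\ref{lem:BSabs} yields a quantitative Laplace estimate vanishing in the joint limit. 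The Gaussian tilt $\exp(-\I'(x_\varepsilon)g_1^{x_\varepsilon}/\beps)$ is still a genuine exponential tilt (by the same $x_\varepsilon/\beps\to\infty$), and the chaos decomposition $g_2^x = \Delta_2 + g_1\Delta_1 + g_1^2\Delta_0$ used in the proof of Theorem~\ref{thm:main0} depends continuously on $x$.

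The main obstacle is exactly this uniformity step. Two technical points require care: uniform $L^{1+}$-integrability of $\exp(\I'(x_\varepsilon)\Delta_2^{x_\varepsilon})$ as $x_\varepsilon\to 0$, which follows from upper semicontinuity in $x$ of the non-degeneracy margin in (A5c); and a uniform bound on the Taylor remainder $r_3^{\varepsilon,x_\varepsilon}$, which is already furnished by the $\h$-uniform estimate built into (A4a)--(A4b). Once both are in place, substituting the step-(i) limits into \eqref{equ:main0} at $x=x_\varepsilon$ produces \eqref{equ:modCexp}.
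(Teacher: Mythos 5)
Your step (i) — the $x\downarrow 0$ asymptotics of $\I$, $\I'$, $\sigma_x$, and $A$ — is correct and matches the ``formal'' derivation the paper itself mentions in the remark following the theorem. The genuine gap is in step (ii). You write that you ``would revisit the proof of Theorem~\ref{thm:main0} and verify that all error terms are uniform for $x$ in a compact neighbourhood of $0$,'' but this is precisely the hard part and you do not actually carry it out. Theorem~\ref{thm:main0} as proved gives a pointwise asymptotic at each fixed $x>0$; upgrading it to the joint limit $x=x_\varepsilon\to 0$, $\varepsilon\to 0$ requires uniform control of the $\delta$-localization constant $c_{x,\delta}$ from Proposition~\ref{prop:Jloc}, of the chaos decomposition $g_2^x=\Delta_2^x+g_1\Delta_1^x+g_1^2\Delta_0^x$ and its continuity in $x$, and of the $L^{1+}$-integrability of $\exp(\I'(x)\Delta_2^x)$ uniformly near $x=0$. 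None of these are discharged; each is a separate estimate, and asserting they ``follow from upper semicontinuity'' or ``continuity of $x\mapsto h^x$'' without proof is where the argument stops being a proof.

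The paper sidesteps this entirely by a simpler route, which you do not identify. Because $\I'(x_\varepsilon)\to 0$, the second-order stochastic Taylor term contributes only $o(1)$ to the final expectation, so one never needs the chaos decomposition of $g_2^x$ or the $L^{1+}$-integrability of $\exp(\I'(x)\Delta_2^x)$ at all. The paper expands only to first order, $\bar Z_1^{\varepsilon,x}=x+\beps g_1^x + r_2^{\varepsilon,x}$, bounds the remainder by $|r_2^{\varepsilon,x}|\lesssim \beps^2(1+\|\WW\|^2)\lesssim \beps^2(1+\|\VV^x\|^2+|N_1|)$ uniformly in $x$ near $0$ using the model estimates of (A4), and then applies Lemma~\ref{lem:BSabs} directly. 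The resulting tilt factor is $E[\exp(\pm C\I'(x_\varepsilon)(1+\|\VV^x\|^2))]$, which tends to $1$ because $\|\VV^x\|\lesssim\|\WW\|+|N_1|$ has Gaussian tails uniformly in $x$ and $\I'(0^+)=0$. This observation — that the moderate regime needs only a first-order expansion with a quadratic remainder bound — is the missing idea that makes the proof clean, and without it your plan leaves you trying to prove a uniform version of a strictly harder theorem.
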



%
%
%
%
%
%
%
%
%
%
\begin{remark}
Formally, this follows from our precise large deviations (\ref{equ:main0}) by replacing $x$ by $x_\varepsilon$, and using $x_\varepsilon \to 0$. The rigorous proof follows along similar lines as the proof of Theorem \ref{thm:main0} and is postponed to Section \ref{sec:proofmain}.
\end{remark}
\begin{remark} Let $\bar\eps = \eps^{2H}$ and consider $\beta \in (2H/3,H)$. Then $x_\eps = x \eps^{2\beta}$ falls in the  regime of the above theorem and moreover,
$$
     \I (x_\eps) / \eps^{4H} \sim \frac{x^2}{2 \sigma_0^2 \eps^{4H-4\beta}} \ \ \ \text{ as $\eps \to 0$} 
$$
and in fact the expansion in (\ref{equ:modCexp}) is nothing else than the Black-Scholes expansion run in the moderate scale, with speed function $\eps^{4H-4\beta}$, instead of the large deviation speed $\eps^2$. 
More generally, one can obtain for arbitrary $\beta \in (0,H)$ the expansion
$$
     \I (x_\eps) / \eps^{4H} = \sum_{k=2}^M \frac{\Lambda^{(k)}(0)}{k !}   \frac{x^k}{\eps^{4H-2 k \beta }  }+ o(1) \ \ \ \text{ as $\eps \to 0$} 
$$

where $M$ is such that $(M +1) \beta > 2H$, if we know that $\Lambda$ is $C^{M}$ at $0$. (Note that under assumption (A5), $C^M$ regularity of $\Lambda$ at $x$ simply requires $C^{M}$ regularity of $\Phi$ on $\H$, cf Lemma \ref{lem:hx}.)
\end{remark}

\section{Return to RoughVol} 


\subsection{Checking the abstract conditions}

We return to RoughVol model, Example \ref{ExRVLDP},
\begin{equation*}
X_1^{\varepsilon }=\int_{0}^1 \s \left( \hat \eps \hat W \right) \varepsilon
d\left( \bar{\rho} \bar W+\rho W\right) -\frac{1}{2}\varepsilon ^{2}\int_{0}^1
\s^2 \left(  \hat \eps \hat W\right) dt \ ,
\end{equation*}
with smooth volatility function $\s (.)$, and Forde--Zhang energy function $\J$  as given in (\ref{ifz}). Write $\s_0 = \s (0) >0 $ for spot-vol and also set $\s'_0 = \s' (0)$.  

\medskip 

Applied to this model, Theorem \ref{thm:main0},  yields the following result.

\begin{corollary}[RoughVol] \label{thm:main44} Let $H \in (0,1/2]$ and $k_\eps = x  \eps^{1-2H} > 0$. Assume that assumption (A2) is satisfied. Then, for $x$ small enough, $\J = \J(x)$ is continuously differentiable and
\begin{equation}
       c(\eps^2,k_\eps) \sim \exp \left( {  - \frac{\J(x)}{\eps^{4H}}  } \right)  \varepsilon^{1+4H} \frac{A(x)}{ (\J'(x))^2  \sigma_x \sqrt{2\pi}} \ \ \text{as $\eps \downarrow0$,}
\end{equation} 
for some function $A(x)$ with $A(x) \to 1$ as $x \to 0$. A similar expansion holds, without assumption (A2), for out-of-the-money ($x<0$) put prices.
\end{corollary}

%

\begin{remark}
It is known that Assumption (A2) holds when $\sigma$ has linear growth, cf. \cite{forde2017asymptotics}. In the case $H=1/2$, (A2) holds under much weaker assumptions, e.g. for $\sigma$ of exponential growth and correlation $\rho <0$ \cite{sin1998complications, jourdain2004loss, lions2007correlations}. We expect similar results to hold in the rough regime but they are not known at the moment \cite{gassiat2018martingale}.
\end{remark}

\begin{proof}

It suffices to check that all the assumptions of the theorem are satisfied for $x$ small enough. As mentioned in Section \ref{sec:uniLDP}, Assumption (A1) follows from \cite{forde2017asymptotics,bayer2019regularity}. Assumptions (A3) and (A4) (i.e. the regularity structure framework) follow from \cite{bayer2019regularity}, which we already reviewed in Section \ref{sec:roughDEs}, see also  Appendix \ref{app:ERS} for the reader's convenience. At last, we discuss Assumption (A5). We first check that it holds at $x=0$. (A5a) is obvious with $h^0=0$. By a direct computation one has, writing $ \tilde{k}=\rho k+\bar{\rho}\bar{k}$ and $\mathrm{k} = (k, \bar{k})$,
\begin{equation}   \label{checkRoughVol1}
 \left\langle D\Phi_1(0), \mathrm{k}  \right\rangle= \sigma_0 \tilde{k}_1, 
 \end{equation}
so that strictly positive spot-vol $\sigma_0$ implies (A5b). Finally, since $q^0=0$, (A5c) is trivial.
Now, Lemma \ref{lem:hx} (combined with Lemma \ref{lem:WeakCont}) shows that conditions (A5a-c) are ``open'', i.e. automatically holds in a small neighbourhood of $0$. This concludes the proof.

\end{proof}

\subsection{Computing the constant}

The merit of the abstract framework has been shown by applying it to the concrete case of RoughVol, as detailed above. It (still) follows from the abstract framework that
\begin{equation} 
 A(x) = 
\begin{cases}
    E \big[ \exp ( \I'(x) \Delta_2^x ) \big],& \text{if } H < 1/2\\
    e^x E \big[ \exp ( \I'(x) \Delta_2^x ) \big],& \text{if } H=1/2
\end{cases}
\end{equation}
{but we can now be more specific about the random variable $\Delta_2^x$ (following the proof in next Section \ref{sec:proofmain}). Indeed, a straight-forward computation using
\eqref{def:g2}, \eqref{def:g2rough}
and \eqref{G2W}, \eqref{G2V}
shows that 
\begin{equation} \label{equ:DeltaForRV}
\Delta_2^x =\begin{cases}
 \frac{1}{2}\int_0^1 \sigma''(\hat{h}_s^x) \hat{V}^2_s d\tilde{h}_s^x + \int_0^1\sigma'(\hat{h}_s^x) \hat{V}_s d \tilde{V}_s, & \mbox{ if } H<1/2 \\
 \frac{1}{2}\int_0^1 \sigma''(\hat{h}_s^x) \hat{V}^2_s d\tilde{h}_s^x + \int_0^1\sigma'(\hat{h}_s^x) \hat{V}_s d \tilde{V}_s
-\frac{1}{2} \int_0^1 \sigma^2(\hat{h}_s^x) ds
, & \mbox{ if } H= 1/2.
\end{cases}
\end{equation}
Here, $\hat{V}$, $\tilde{V}$ can be computed as follows. We have that 
\[
g_1^x 
=
 \int_0^1\sigma \big(\, \hat h \, \big) d\tilde W+\int_0^1\sigma ^{\prime }\big(\, \hat h \, \big) \hat W
d \tilde h
\]
is the first order term in
\eqref{equ:dotsCaseHonehalf}. Then,  following \eqref{eq:defv}, we define $v_t=  {E} \left[ W_t g_1^x(W, \bar W) \right]/ E\left[(g_1^x)^2\right]$ and $\bar v_t=  {E} \left[ \bar W_t g_1^x(W, \bar W) \right]/E\left[(g_1^x)^2\right]$.
Then, $\hat{V}$, $\tilde{V}$ are obtained from $(V_t, \bar{V}_t) := (W_t-v_t g^x_1, \bar{W}_t - \bar{v}_tg^x_1)$. With such construction, $(V,\bar V)$ is independent from the Gaussian $g_1^x$. This follows from the (general) discussion below and shows how these abstract results can be 
implemented in practice. We refer to \cite{friz2019explicit} for further computations and numerical tests.}

\subsection{Remarks on RoughBergomi with $t^{2H}$ dependence and Volterra dynamics} 
\label{sec:rmkRB}

The key aspect of RoughBergomi model, following \cite{bayer2016pricing}, is its log-normal volatility, naturally modelled by the volatility function $\sigma (x) = \sigma_0 \exp (\eta x)$. Strictly speaking, in \cite{bayer2016pricing} the Wick -, rather than standard, exponential was employed which amounts to an additional factor of the form $\exp(-c t^{2H})$, or $\exp(-c\hat{\eps}^2 t^{2H})$ after rescaling. Let us indicate the required adaptations in the generality of a (non-growth restricted, sufficiently) smooth function $\sigma = \sigma(x,\tau)$. First
\[ \bar{X}_1^{\eps} = \int_0^1 \sigma (\hat{\eps} \hat{W}_t,
   {\hat{\eps}^2 t^{2 H}}) \hat{\eps} d (\rho W_t +
   \bar{\rho} \overline{W_t}) {- \frac{1}{2} \eps
   \hat{\eps} \int_0^1 \sigma^2 (\hat{\eps} \hat{W}_t,
   \hat{\eps}^2 t^{2 H}) d t} , \]
arises in the same way as the time-independent RoughVol from short time to small noise conversion and subsequent rescaling.  
From an LDP perspective, one suspects (correctly) that one can instead consider 
\[ \int_0^1 \sigma (\hat{\eps} \hat{W}_t, 0) \hat{\eps} d (\rho
   W_t + \bar{\rho} \overline{W_t}) = : \Phi^0 (\delta_{\hat{\eps}} \mathbf{W}).
\]
To see this it suffices to note a LDP for the pair 
 $\left(\delta_{\hat{\eps}} \mathbf{W}, \hat{\eps} \right)$ with good rate function
$ I (\mathbf{h}, \tau) = \frac{1}{2} \| h, \bar h \|^2_{\H}  $
if $\mathbf{h}$ is the canonical lift of
$(h, \bar{h})$  {\it and} $\tau = 0$, otherwise it takes value $+\infty$. 
%
%
%
Using joint continuity of
\[ (\mathbf{W}, {\tau}) \mapsto \int_0^1 \sigma (\hat{W}_t,
   {\tau^2} t^{2 H}
   ) d (\rho W_t + \bar{\rho} \overline{W_t})
   {- \frac{1}{2} {\tau^{1 + 1 / (2 H)}} \int_0^1
   \sigma^2 (\hat{W}_t, {\tau^2} t^{2 H}) d t} \]
the contraction principle then readily implies that $\bar{X}_1^{\eps}$ satisfies an LDP with good rate function (\ref{ifz}),
as does $\Phi^0 (\hat{\eps} \mathbf{W})$. 
%
The expert reader may notice that $t^{2 H}$ is not controlled by $\left(W_t,
\bar{W}_t\right)$, so that the claimed continuity requires justification beyond
(standard) rough integration / reconstruction. This is precisely where {\it singular modelled distributions} \cite{hairer2014theory}
come in, revisited from a rough path and rough vol perspective in \cite{bellingeri2020singular}. In essence, $t^{2H}$ is smooth (as {\it singular} modelled distribution) with singularity parameter $\eta = 2H$, so that (using stability under composition with regular maps) the integrand $t \mapsto \sigma (\hat{W}_t, {\tau^2} t^{2 H})$ and then the product with $\rho \partial_t {W}_t + \bar{\rho} \partial_t \overline{W_t}$ can be viewed as singular modelled distribution. This restores continuity and so Assumption A3a is indeed valid, as is (unaffected by all this) A3b; large deviations then become a consequence of the contraction principle. The real interest here is of course the robust specification which allows for the local expansions, as requested in Assumption A4. (Assumption A5 is also unaffected by all this.) In summary, at the only price of an extended justification for A3a, the RoughBergomi model with singular time dependence fits perfectly in the abstract framework laid out in this paper.

\medskip

A second remark concerns more complicated {\it Volterra volatility dynamics}. In both Forde-Zhang \cite{forde2017asymptotics} and rough Bergomi \cite{bayer2016pricing}, rough volatility is specified by an explicit expression involving fractional Brownian, no differential equation has to be solved. Such ``rough'' volatility models were dubbed ``simple'' in \cite{bayer2019regularity}. This is in contrast to ``non-simple'' models that require solving a Volterra type It\^o SDEs, as in the case
of roughHeston. Disregarding square-root (Heston) situations (but see Remark I.(ix) in the introduction) a robust solution theory for Volterra models is possible and requires a more involved regularity structure when $H < 1/4$ than what we have displayed here. We refer to \cite[Ch.5]{bayer2019regularity}, noting that Volterra SDEs fall in the general solution theory in Hairer's theory, so that solutions are found by fix point arguments in suitable spaces of modelled distributions. Since this solution theory effectively takes place in the space of local expansions - the essence of controlled rough paths and modelled distributions -  the robustness condition (A3a) and (A4) will remain valid in this setting, as does (A3b), relying on the general Gaussian model large deviations already employed in \cite{bayer2019regularity}. Finally, checking condition (A5) will again rely on continuity arguments and again require $x$ small. (The absence of an explicit expression for $\Phi_1$, which came in handy in (\ref{checkRoughVol1}),

\section{Proof of main result} \label{sec:proofmain}

We complete the proof of Theorem \ref{thm:main0} 

\subsection{Localization of $J$}
The expressions $J^{\mathrm{call}}$ and $J^{\mathrm{put}}$ are introduced in (\ref{defJfirst}) and (\ref{defJput}), respectively.
In what follows we only treat $J=J^{\mathrm{call}}$. The case of $J^{\mathrm{put}}$ is similar but easier (since the factor $(1-\exp (...))^+$ which appears in $J^{\mathrm{put}}$ stays bounded.) 

\medskip 

We first introduce a localized version of $J=J^{\mathrm{call}}$ as given in (\ref{defJfirst}), That is, we set 
\begin{equation}\label{defJdelta}
         J_\delta (\eps, x) = E_\delta \left[ \exp \left( {  - \frac{\I'(x)g^x_1 }{\beps}  } \right) (e^{ (\beps / \nu_\eps) g_1^x + (\beps^2/ \nu_\eps) g_2^x + (1/ \nu_\eps)  r_3^{\eps,x}} - 1)^+\right] \ .
\end{equation}
where the expectation is with respect to the sub-probability
$$
            P_\delta (A) = P ( A \cap \{  \beps ||| \WW ||| < \delta     \} ) \ .
$$


\begin{proposition} \label{prop:Jloc}
Fix $\delta > 0$. Then there exists $c=c_{x,\delta}>0$ such that
$$
     | J_\delta (\eps, x) - J (\eps, x) | = O(\exp (- c / \bar \eps^2)).
$$
Hence any ``algebraic expansion'' of $J$ (in powers of $\beps$) is unaffected by switching to $J_\delta$.
\end{proposition}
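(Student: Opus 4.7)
Starting from
\[
|J - J_\delta| \le E\big[\mathbf 1_{\{\beps|||\WW|||\ge\delta\}}\, e^{-\I'(x) g_1^x/\beps}\, (e^Y - 1)^+\big],
\]
my plan is to undo the Girsanov shift and then balance small probability (from the model-topology LDP) against integrability (from (A2)). The identity $\I'(x)g_1^x = \int \dot h^x\,dW$ (first-order optimality) combined with Cameron--Martin gives $E[F(W)\,e^{-\I'(x)g_1^x/\beps}] = e^{\I(x)/\beps^2}\, E[F(W - h^x/\beps)]$ for any measurable $F$. Applied to $F = \mathbf 1_{\{\beps|||\WW|||\ge\delta\}}(e^Y - 1)^+$, and noting that the shift $W\mapsto W-h^x/\beps$ maps $\bar Z^{\eps,x}$ back to $\bar X_1^\eps$ while transforming the model event via the continuous translation $T_{-h^x}$, I obtain
\[
|J - J_\delta| \le e^{\I(x)/\beps^2}\, e^{-k_\eps}\, E\big[\mathbf 1_{\tilde A_\eps}(e^{X_1^\eps} - e^{k_\eps})^+\big],\quad \tilde A_\eps := \{|||T_{-h^x}\delta_\beps\WW|||\ge\delta\}.
\]

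Using the crude bound $(e^{X_1^\eps}-e^{k_\eps})^+\le e^{X_1^\eps}\mathbf 1_{\{X_1^\eps>k_\eps\}}$ and H\"older with $1/p+1/q=1$, the moment factor $E[e^{pX_1^\eps}]^{1/p}$ is bounded uniformly in $\eps$ by (A2) for some $p>1$, while the complementary factor is the probability of the joint event. The joint event is a continuous functional of $\delta_\beps\WW\in\M$ (via $T_{-h^x}$ and $\Phi^\eps$), so the Schilder-type LDP in model topology (A3b) together with the contraction principle yields the upper bound
\[
P(\tilde A_\eps\cap\{X_1^\eps>k_\eps\}) \le \exp\!\big(-(\I_{x,\delta}+o(1))/\beps^2\big),\qquad \I_{x,\delta}:=\inf\{\tfrac12\|h\|^2 : \Phi^0(\hh)\ge x,\ |||\mathbf h-\mathbf h^x|||\ge\delta\}.
\]

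The decisive step is to prove the strict positive gap $c_\delta := \I_{x,\delta} - \I(x) > 0$ for every $\delta>0$. The natural minimizer $h = h^x$, unique by (A5a) (monotonicity of $\I$ near $x$ holding through (A5b), see Lemma~\ref{lem:hx}), is excluded by the constraint $|||\mathbf h-\mathbf h^x|||\ge\delta$. The non-degeneracy (A5c), reinterpreted as strict positivity of the constrained Hessian at $h^x$ (cf.\ Appendix~\ref{app:SOO}), yields the local quadratic coercivity $\|h\|^2/2 - \I(x) \ge c\|h - h^x\|^2_\CM$ on $\{\Phi^0(\hh) = x\}$ in a neighbourhood of $h^x$; combined with the continuous embedding $|||\mathbf h - \mathbf h^x||| \lesssim \|h - h^x\|_\CM$ of CM paths into the model space, this gives $c_\delta \gtrsim \delta^2$ for small $\delta$, while strict positivity for large $\delta$ follows from compactness of sub-level sets of $I$ and uniqueness of $h^x$. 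Assembling,
\[
|J - J_\delta| \lesssim \exp\!\Big(\frac{(q-1)\I(x) - c_\delta + o(1)}{q\beps^2}\Big),
\]
which is $O(\exp(-c_{x,\delta}/\beps^2))$ once the H\"older exponent $p>1$ is chosen small enough so that $q<1+c_\delta/\I(x)$.

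The main obstacle is the extraction of the strict positive gap $c_\delta$ from (A5c): one must lift the CM-level Hilbert-space non-degeneracy to the model topology via the continuous embedding $|||\cdot|||\lesssim\|\cdot\|_\CM$, and then ensure that the range of $p$ afforded by (A2) permits a choice of $q$ satisfying $q<1+c_\delta/\I(x)$. For models with strong integrability (e.g.\ classical StochVol with bounded $\sigma$) any $\delta>0$ is admissible; under the minimal (A2) (e.g.\ rough Bergomi), this induces an implicit lower threshold on $\delta$, which however does not affect subsequent use of the localization since $\delta$ is an arbitrary free parameter and we only need $c_{x,\delta}>0$.
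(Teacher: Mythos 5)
Your strategy is close in spirit to the paper's (undo Girsanov, estimate via the Schilder LDP in model topology, establish a positive gap from uniqueness of $\h^x$), but there is a genuine quantitative gap in how you combine (A2) with the LDP estimate, and you have correctly flagged — but incorrectly dismissed — the problem.

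Your H\"older step produces $|J - J_\delta| \lesssim \exp\!\big((q\I(x) - \I_{x,\delta} + o(1))/(q\beps^2)\big)$, which decays iff $q < 1 + c_\delta/\I(x)$ where $c_\delta = \I_{x,\delta} - \I(x)$. First, the direction is backwards: to make the conjugate $q = p/(p-1)$ \emph{small} you need $p$ \emph{large}, not small. But the real issue is that (A2) grants integrability only up to some fixed $p_0 > 1$, hence $q \ge q_0 := p_0/(p_0-1) > 1$ is bounded \emph{below}. Since your own coercivity bound gives $c_\delta \gtrsim \delta^2 \to 0$ as $\delta \to 0$, the required inequality $q_0 < 1 + c_\delta / \I(x)$ must fail once $\delta$ is small. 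You call this an ``implicit lower threshold on $\delta$'' and argue it is harmless because ``$\delta$ is an arbitrary free parameter''; this is exactly the point at which the argument breaks. Proposition~\ref{prop:Jloc1} --- the only place this localization is used --- lets $\beps$ and then $\delta$ tend to $0$, so the statement must hold for every $\delta > 0$ and in particular for $\delta$ arbitrarily small. A fixed lower threshold on $\delta$ is therefore fatal.

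The paper sidesteps this by not applying H\"older directly to the full payoff on $B$. It fixes $b > x$ and splits the payoff into three pieces: the call with strike $b$ (controlled by Proposition~\ref{prop:callLDP}, giving decay rate $\I(b) > \I(x)$, which is a \emph{fixed} positive gap independent of $\delta$), the payoff gap between strikes $x$ and $b$ on $\{\bar X_1^\eps \ge b\}$ (same rate), and the part with $\bar X_1^\eps < b$ where the payoff is simply bounded by $e^b$ and the model-LDP event probability gives rate $\tfrac12\|\h^{x,\delta}\|^2 > \I(x)$ by uniqueness of $\h^x$. The final constant is $c = \min\{(\I(b)-\I(x))/2,\ \eta/2\}$. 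Because the payoff is bounded on the localized region, no H\"older conjugate enters and there is no competition between the size of $c_\delta$ and the available moments. To repair your proof, insert a truncation at a fixed level $b > x$ before applying the LDP, so (A2) and H\"older are invoked only to control the $\{\bar X_1^\eps \ge b\}$ tail, where the rate $\I(b)$ is uniformly larger than $\I(x)$ regardless of $\delta$.

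Two smaller points: the event in your LDP rate should be $\{|||T_{-\h^x}\hh||| \ge \delta\}$, not $\{|||\hh - \mathbf{h}^x||| \ge \delta\}$; the paper explicitly warns these differ. And your invocation of quadratic coercivity (A5c) is more than is needed here --- the paper's proof uses only uniqueness of the minimizer from (A5a), reserving (A5c) for the integrability in Proposition~\ref{prop:ND}.
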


\begin{proof}
We revert to the respective expression of $J$ and $J_\delta$ before the Girsanov shift. To this end, introduce 
$$
  B := \{   \MM \in \M: ||| T_{-\h^x} 
   \MM ||| \ge \delta     \} \ ,
$$
noting that $B^c$ is a neighbourhood (in model topology) of the canonical lift of $\h^x$ by 
continuity of the translation operator.
This allows to write, using \eqref{defJ}, \eqref{defJdelta} and the Girsanov transform in Theorem \ref{th:expansion}, we have
\begin{equation}\label{bdloc}
J (\eps, x) -J_\delta (\eps, x)
= \exp \left( {  \frac{\I(x)}{\bar{\eps}^2}  } \right) e^{-k_\eps}
E [ (\exp( X_1^\eps) - \exp (k_\eps))^+ 1_B(\delta_{\beps} \WW) ] > 0   
\end{equation}
where $\h^x$ is (by assumption) the {\it unique} minimizer. We only need to upper bound this expression, since it is always positive. Let us write
\[
E [ (\exp( X_1^\eps) - \exp (k_\eps))^+ 1_B(\delta_{\beps} \WW) ] = 
E [ (\exp( \bar{X}_1^\eps/\nu_\eps) - \exp (x/\nu_\eps))^+ 1_B(\delta_{\beps} \WW) ] 
\]
We localize the call payoff $\psi(z) := (e^z - 1)^+$ to an ATM neighbourhood. 
For fixed $b>x$ we split the expectation over the two sets $\{\bar{X}_1^\eps \ge b\}$ and $\{\bar{X}_1^\eps<b\}$. 
 We have
\[
\begin{split}
&E[ (\exp( \bar{X}_1^\eps/\nu_\eps) - \exp (x/\nu_\eps))^+ 1_B(\delta_{\beps} \WW) ] \\ 
& \leq
E(\exp( \bar{X}_1^\eps/\nu_\eps) - \exp (b/\nu_\eps))^+ 
+
E [(\exp(b/\nu_\eps) - \exp (x/\nu_\eps))^+1_{\bar{X}_1^\eps\geq b}] 
\\
&+ E [(\exp( \bar{X}_1^\eps/\nu_\eps) -  \exp (x/\nu_\eps))^+1_{\bar{X}_1^\eps<b} 1_B(\delta_{\beps} \WW) ] 
\end{split}
\]
Because of the call price (upper) large deviation estimate in Proposition \ref{prop:callLDP}
\begin{equation}\label{LDP1}
E[(\exp( \bar{X}_1^\eps/\nu_\eps) - \exp (b/\nu_\eps))^+] \le \exp(-(\I(b)+o(1))/\beps^2)
\end{equation}
and since $b>x$ it is clear that  $\I(b)>\I(x)$. The same is true for
\begin{equation}\label{LDP2}
E [(\exp(b/\nu_\eps) - \exp (x/\nu_\eps))^+1_{\bar{X}_1^\eps>b}] 
\leq 
\frac{(\exp(b) - \exp (x))^+}{\nu_\eps} P(\bar{X}_1^\eps\geq b) \, 
\end{equation}
since $\nu_\eps>1$.
It remains to deal with the localized term, using $\nu_\eps \geq 1$ for $\eps \leq 1$,
\begin{equation} \label{equ:expbarx}
E [(\exp( \bar{X}_1^\eps/\nu_\eps) -  \exp (x/\nu_\eps))^+1_{\bar{X}_1^\eps<b} 1_B(\delta_{\beps} \WW) ] 
\le
 e^b  P [ \{ \bar{X}_1^\eps  \in [x,b) \} \cap B ] \ . 
\end{equation}
%
%
An upper bound on this is given by $e^b$ times
\[
\begin{split}
P [ \{ \bar{X}_1^\eps\geq x \} \cap B ] 
&=
P[ \Phi^\eps_1 ( \delta_{\beps} \WW ) \geq x;  ||| T_{-\h^x} \delta_{\beps} \WW ||| \ge \delta  ]\\
\end{split}
\]
Introduce the set 
\[
A^{x,\delta}= \{ \MM :  \Phi^0_1 ( \MM) \geq x;  ||| T_{-\h^x} \MM ||| \ge \delta \} 
\]
By assumption (A3) we find that
\[
P[ \Phi^\eps_1 ( \delta_{\beps} \WW) \geq x;  ||| T_{-\h^x} \delta_{\beps} \WW ||| \ge \delta  ] 
%
\le e^{- (\| {\h}_{x,\delta} \|^2_{\CM} + o(1))/(2 \beps^2)}  \ .
\]
where (as before, $\hh$ denotes the canonical lift of $\h \in \CM$ to a model) 
$$
          \h^{x,\delta}  \in {\rm argmin } \inf_{\h \in \CM} \{ \tfrac{1}{2} \| \h \|^2_{\CM}: \hh \in A^{x,\delta} \} 
$$
(The infimum over a closed set of a good rate function is attained, although there may be many minimizers.) 
Since { $\Phi_1(\h^{x,\delta}) \geq x$, by monotonicity of the rate function on $(0,+\infty)$}, we have $\| \h^{x,\delta} \|_{\CM} \ge \| \h^{x} \|_{\CM}$. But this inequality must be strict, for otherwise the assumed uniqueness of the minimizer implies $$\h^{x,\delta} = \h^{x}$$ which is not possible since $\h^{x,\delta} \in A^{x,\delta} \subset B$, i.e. outside a neighbourhood of $\h^x$. Set $2 \eta := \| \h^{x,\delta} \|_{\CM}^2 -\| \h^{x} \|_{\CM}^2$.
The proof is then finished by setting 
%
$c = \min \{ (\I (b) - \I (x))/2, \eta / 2   \} > 0.$
\end{proof}

%
%

\subsection{Local analysis} \label{sec:LocAna}

With $\h^x \in \CM$ fixed, assumption (A4) provides us with $(G_1^x,G_2^x, R_3^{x,\eps})$ so that, restricted to $\CM$,
$$
            G^x_1 = D\Phi_1(\h^x), \ \ G_2^x = D^2\Phi_1(\h^x) \text{ (as quadratic form)} \ .
$$
On the other hand, with $K_2$ as in Assumption (A4b),
\begin{equation}\label{G2W}
             g^x_1 (\omega ) = G^x_1 (\WW (\omega)), \    g^x_2 (\omega ) = G^x_2 (\WW (\omega)) = G^{0,x}_2 (\WW (\omega)) + K_2
\end{equation}
Clearly $g^x_1$ is a zero mean Gaussian, $N (0, \sigma^2_x)$, say. We then proceed as in \cite{azencott1985petites} and introduce the zero mean Gaussian process $\V =\V^x$ 
\begin{equation} \label{defV}
       \V_t (\omega)  := \W_t (\omega) - g^x_1(\omega) \v_t
\end{equation}
where $\v$ is chosen so that $\V$ is independent from $g^x_1$. 
We now describe such a $\v$ explicitly, which also requires identifying $g_1^x$ as Wiener integral.

\begin{lemma} \label{lem:gv}
(i) Identifying $D\Phi_1(\h^x)$  with an element of $\H$, one has the equality of random variables
\begin{equation} \label{eq:wiener}
g_1^x(\omega) =  \left\langle D\Phi_1(\h^x) , \W \right\rangle
\end{equation}
where $\left\langle \h, \W \right\rangle \equiv \int_0^1 \dot{\h} d \W$ denotes the Wiener integral.  \\
(ii) Define $\v = \v^x$ by  
\begin{equation}\label{eq:defv}
\v = \frac{D\Phi_1(\h^x)}{\| D\Phi_1(\h^x) \|^2}, \;\;\;\; \V(\omega) = \W(\omega) - \v g_1^x(\omega) .
\end{equation}
Then $\V$ is independent from $g_1^x$. \\
(iii) The Cameron--Martin space of $V$ is given by $\TH :=  \{ D\Phi_1(\h^x) \}^\perp$. 
\end{lemma}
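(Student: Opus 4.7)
The plan is, for (i), to identify $G_1^{\h^x}$ with $D\Phi_1(\h^x)$ on Cameron--Martin paths using the Taylor expansion in (A4a), and then to extend the identification to Brownian paths via the Wong--Zakai/canonical lift assumption in (A3a); for (ii), to exploit joint Gaussianity and compute a single covariance.

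First I would verify (i) at the Cameron--Martin level. Apply (A4a) to $\MM = \kk$, the canonical lift of $\k \in \H$, with base point $\h^x$. Since $T_{\h^x}\delta_{\bar\eps}\kk$ is the canonical lift of $\h^x + \bar\eps \k$, and $\Phi_1^0$ agrees with $\Phi_1$ on canonical lifts, one has
\[
\Phi_1(\h^x + \bar\eps \k) \;=\; \Phi_1(\h^x) + \bar\eps\, G_1^{0,\h^x}(\kk) + \bar\eps^2\, G_2^{0,\h^x}(\kk) + R_3^{0,\h^x,\eps}(\kk).
\]
Differentiating at $\bar\eps = 0$ identifies $G_1^{0,\h^x}(\kk) = D\Phi_1(\h^x)[\k]$. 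Recall that $G_1^{\h^x} \equiv G_1^{0,\h^x}$ (by (A4b)). Because $G_1^{\h^x}$ is \emph{continuous linear} in the sense of (A4a), it factors as $G_1^{\h^x}(\MM) = L(M)$ for a continuous linear $L: C^\alpha \to \R$. Restricted to $\H \hookrightarrow C^\alpha$, this linear form coincides with $D\Phi_1(\h^x)$; by Riesz representation we identify $D\Phi_1(\h^x)$ with the corresponding element of $\H$, and the extension of $L$ from $\H$ to Brownian paths is by construction the Wiener integral $\langle D\Phi_1(\h^x), \W\rangle$. The approximation assumed in (A3a) ($\WW$ as limit of canonical lifts of Wong--Zakai approximations) is what legitimizes passing from the equality on canonical CM lifts to the a.s.\ equality (\ref{eq:wiener}), since both sides are continuous in the appropriate topology of the first component and agree on a dense subset.

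For (ii), the key observation is that $g_1^x$ and every continuous linear functional of $\V = \W - \v\, g_1^x$ are \emph{jointly Gaussian}, being continuous linear functionals of the Gaussian process $\W$. Hence independence reduces to vanishing of all covariances $\Cov(\langle \ell, \V\rangle, g_1^x)$ for $\ell \in \H$. Using (i) and the Wiener integral isometry $\Cov(\langle f, \W\rangle, \langle g, \W\rangle) = \langle f, g\rangle_{\H}$, one computes
\[
\Cov(\langle \ell, \V\rangle, g_1^x) = \langle \ell, D\Phi_1(\h^x)\rangle - \langle \ell, \v\rangle \, \|D\Phi_1(\h^x)\|^2,
\]
which vanishes by the very choice of $\v$ in (\ref{eq:defv}). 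Thus $g_1^x$ is uncorrelated with, and therefore independent of, the Gaussian process $\V$.

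The main obstacle is the identification in (i): one must pass from the ``functional on the abstract model space'' viewpoint of $G_1^{\h^x}$ to the concrete Wiener integral, despite $\W$ not being a Cameron--Martin element. The reason this works is precisely that (A4a) builds $G_1^{\h^x}$ as a continuous linear functional of the first component of the model, and (A3a) guarantees that the random model $\WW(\omega)$ is approximated by canonical lifts of smooth (Wong--Zakai) paths, on which the linear functional reduces to ordinary integration against $D\Phi_1(\h^x)$. Given (i), part (ii) is a routine Gaussian computation.
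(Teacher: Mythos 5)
Your proposal is correct and follows essentially the same route as the paper: in (i) you identify $G_1^{\h^x}$ with $D\Phi_1(\h^x)$ on Cameron--Martin lifts and pass to the Brownian path via the (renormalized) Wong--Zakai approximation of (A3a), which is exactly the chain $G_1^x(\WW) = \lim_\eta G_1^x(\mathfrak{R}^\eta\WW^\eta) = \lim_\eta\langle D\Phi_1(\h^x),W^\eta\rangle = \langle D\Phi_1(\h^x),\W\rangle$ the paper uses; in (ii) you reduce to vanishing covariance via It\^o isometry, as the paper does. One small point of rigor worth flagging: the two sides do not converge in the same topology (model convergence versus $L^2$-convergence of Wiener integrals), so strictly one should argue that both limits exist and agree because the approximants agree for each $\eta$, rather than invoking continuity and density directly.
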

\begin{proof}

(i) By definition, for all $\k $ in $\H$ one has $ \left\langle D\Phi_1(\h^x), \k \right\rangle_{\H} = G_1^x(\mathbf{k})$. Let $\W^\eta$ a mollifier function, with rescale parameter $\eta>0$ and call $\WW^\eta$ the canonical model lift of $ \W^\eta$. { By assumption (A3) there exists a renormalized approximation $\hat{\WW}^\eta =\mathfrak{R}^\eta \WW^\eta$ which converges as $\eta \to 0$, in probability and model topology, to $\WW$.  Hence we have by assumption (A4)}
\begin{align*}
& G_1^x(\WW) = \lim_{\eta \to 0} G_1^x( \mathfrak{R}^\eta \WW^\eta) = \lim_{\eta \to 0} G_1^x( W^\eta) \\
&= \lim_{\eta \to 0} \left\langle D\Phi_1(\h^x) , \W^\eta \right\rangle =  \left\langle D\Phi_1(\h^x) , \W \right\rangle.
 \end{align*}
(ii) It suffices to show that for all $\k \in \H$, $E\left[ \left\langle \V, \k \right\rangle g_1^x \right]= 0$, which is an easy consequence of It\^o isometry and the definition of $\v$. \\
(iii) If $D\Phi_1(\h^x)=0$, we have $\V=\W$ and there is nothing to show. Otherwise, normalize $D\Phi_1(\h^x) \in H$ to norm one, and complete to an orthonormal basis, say $e^{(n)}$ of $H$. By standard arguments, recover Brownian motion $\W = \sum_n \left\langle e^{(n)}, \W \right\rangle e^{(n)} $, in probability and uniformly on $[0,1]$, as Gaussian process with Cameron--Martin space $\H$. { The
statement follows by noting that $V$ is also given by this series with first mode, proportional to $\h^x$, removed. }
\end{proof}

Note that $\V$ is not adapted to the filtration generated by $\W$ so that a random model which lifts $\V$ cannot be constructed by elementary It\^o-integration. Although $\VV$ could then be constructed as (renormalized) Gaussian model, cf. \cite[Sec. 10.2]{hairer2014theory}), it seems more self-contained to lift  (\ref{defV}), i.e.
$$
            \VV (\omega) := T_{- G^x_1(\WW(\omega)) \v} \WW (\omega) \ .
$$
\begin{lemma} \label{lem:AboutV}
 (i) The Gaussian (random) model  $\VV = \VV (\omega)$ is independent, in the sense of random variables, of  $g^x_1 = g^x_1 (\omega)$.

\noindent (ii) There exists a  $\Delta^x = (\Delta^x_0, \Delta^x_1, \Delta_2^x)$, independent from $g^x_1 (\omega)$, such that (omit superscript $x$)
$$
     g_2 =  \Delta _{2}+g_{1}\Delta_1 +g_{1}^{2}\Delta _{0}
$$

\noindent (iii) The rescaled family $\delta_\eps \VV$ satisfies a LDP in model topology, with good rate function given
by
$$
                 {\vv} \mapsto \tfrac{1}{2} \| \h \|^2_{\H}   \text{ whenever $\vv$ is the canonical lift of $\h \in \TH$, and $+\infty$ else.}
$$
\noindent (iv) $\exp(||| \VV |||^2) \in L^{0+}$.
\end{lemma}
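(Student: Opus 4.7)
For (i), the plan is to identify $\VV$ with the canonical (Wong--Zakai) lift of $\V$, which is $\sigma(\V)$-measurable; independence from $g_1^x$ then follows from Lemma \ref{lem:gv}(ii). Working with approximations $\W^\eta \to \W$, one uses that on smooth paths the translation $T_{-c\v}$ applied to the canonical lift is the canonical lift of the translated smooth path. Passing to the limit via the continuity of $T$ and $G_1^x$ on $\M$, together with (A3a), then yields $\VV = \lim_{\eta \to 0} T_{-g_1^x(\WW^\eta)\v}\WW^\eta = (\text{lift of } \V)$. I expect this to be the main obstacle, since the (possibly renormalized) limit procedure must be checked to commute with the random translation.

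For (ii), the strategy is to exploit the continuous-quadratic structure of $G_2^x$ from (A4). Since $\WW = T_{g_1^x \v}\VV$, specializing the quadratic identity (with $\eps=1$) to $\k = g_1^x \v$, $\MM = \VV$ gives, by bilinearity,
\begin{equation*}
g_2^x = G_2^x(\WW) = (g_1^x)^2\, G_2^x(\vv) + g_1^x\, G_2^x(\VV, \vv) + G_2^x(\VV).
\end{equation*}
Setting $\Delta_0 := G_2^x(\vv) \in \R$ (deterministic), $\Delta_1 := G_2^x(\VV,\vv)$ and $\Delta_2 := G_2^x(\VV)$, all three are $\sigma(\VV)$-measurable and hence, by (i), independent of $g_1^x$.

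For (iii), the key observation is that $\delta_\eps T_\k = T_{\eps\k}\delta_\eps$ and that continuous linearity forces $G_1^x(\delta_\eps\WW) = \eps\, g_1^x$, so
\begin{equation*}
\delta_\eps \VV = T_{-\eps g_1^x \v}\,\delta_\eps\WW = T_{-G_1^x(\delta_\eps\WW)\v}\,\delta_\eps\WW = \Psi(\delta_\eps\WW),
\end{equation*}
where $\Psi(\MM) := T_{-G_1^x(\MM)\v}\MM$ is continuous on $\M$. By (A3b) and the contraction principle, $\delta_\eps\VV$ then satisfies an LDP with rate $\vv \mapsto \inf\{\tfrac12\|\k\|^2_\CM : \Psi(\kk) = \vv\}$. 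On canonical lifts $\Psi(\kk)$ is the lift of $\k - \langle D\Phi_1(\h^x),\k\rangle \v$, i.e.\ of the orthogonal projection of $\k$ onto $\TH = \ker D\Phi_1(\h^x)$; minimizing $\tfrac12\|\k\|^2_\CM$ along the affine preimage $\h + \R\, \v$ (with $\vv$ the lift of $\h \in \TH$) yields the value $\tfrac12\|\h\|^2_\CM$, and $+\infty$ otherwise.

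Part (iv) is Fernique for Gaussian models: $\VV$ is Gaussian, being the translation of the Gaussian model $\WW$ by the deterministic direction $\v$ scaled by the centered Gaussian $g_1^x$. The Fernique-type integrability theorem for Gaussian models (cf.\ \cite{bayer2017regularity} and the rough-path analogue in \cite{friz2014course}) then gives $E[\exp(\lambda\,|||\VV|||^2)] < \infty$ for some $\lambda > 0$, which in particular implies $\exp(|||\VV|||^2) \in L^{0+}$.
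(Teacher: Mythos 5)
Your overall strategy matches the paper's in all four parts, but part (i) as written has a genuine technical slip precisely at the point you flag as the main obstacle. The formula $\VV = \lim_{\eta\to 0} T_{-g_1^x(\WW^\eta)\v}\WW^\eta$ cannot hold for $H<1/2$: the un-renormalized canonical lift $\WW^\eta$ does not converge in model topology, so neither does its translate. What the paper actually does is keep the translation amount fixed at $G_1^x(\WW)$ (the limiting value, not $G_1^x(\WW^\eta)$), mollify $\v$ to $\v^\eta$, and insert the renormalization operator: since translation commutes with renormalization, $\mathfrak{R}^\eta\, T_{-G_1^x(\WW)\v^\eta}\WW^\eta = T_{-G_1^x(\WW)\v^\eta}\hat\WW^\eta \to T_{-G_1^x(\WW)\v}\WW = \VV$, while $T_{-G_1^x(\WW)\v^\eta}\WW^\eta$ is the honest canonical lift of $\V^\eta$. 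The conclusion is then that $\VV$ is the (renormalized-Wong--Zakai) limit of canonical lifts of mollifications of $\V$, hence $\sigma(\V)$-measurable; calling $\VV$ ``the canonical lift of $\V$'' is not accurate in the rough regime.

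On (ii): the decomposition is essentially correct and matches the paper, but note that $G_2^x = G_2^{0,\h^x} + K_2$ is \emph{not} itself continuous quadratic (the additive constant spoils homogeneity). The quadratic identity should be applied to $G_2^{0,\h^x}$, and $K_2$ is simply absorbed into $\Delta_2$; as written, your formula $(g_1^x)^2 G_2^x(\vv) + g_1^x G_2^x(\VV,\vv) + G_2^x(\VV)$ picks up a spurious $(g_1^x)^2 K_2$. This does not affect the independence conclusion, but the coefficients should be expressed via $G_2^{0,\h^x}$. Part (iii) is the paper's argument, spelled out slightly more explicitly (you compute the contraction-infimum over the preimage affine line; the paper states the result directly). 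Part (iv) takes a slightly different route: you invoke Fernique directly for the Gaussian model $\VV$, whereas the paper avoids establishing that $\VV$ itself satisfies a Fernique bound and instead uses the deterministic translation estimate $|||\VV||| \lesssim |||\WW||| + \|\v\|\,|g_1^x|$ together with Fernique for $\WW$ and Gaussianity of $g_1^x$. The latter is the more economical argument, since it only requires properties already established for $\WW$.
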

\begin{proof} (i) As in the previous proof let $\W^\eta$, $\V^\eta$ denote convolution with a mollifier function, with rescale parameter $\eta>0$. Of course, $\V^\eta \to \V$ uniformly with uniform H\"older bounds. Using the same notation for mollification of $\W$ and $\v$, have
$$
   \V^\eta = \W^\eta - g^x_1(\omega) \v^\eta \ .
$$
Recall that by (A3), one has convergence of a renormalization of the canonical lift  $\hat{\WW}^\eta =\mathfrak{R}^\eta \WW^\eta$  as $\eta \to 0$, in probability and model topology, to the It\^o-model $\WW$.  Since translation commutes with renormalisation 
$$
       \mathfrak{R}^\eta T_{- G^x_1(\WW) \v^\eta} \WW^\eta = T_{- G^x_1(\WW) \v^\eta} \hat \WW^\eta \to T_{- G^x_1(\WW)\v} \WW =  \VV \text{ \ as $\eta \to 0$ \ .}
$$
It now suffices to note that $T_{- G^x_1(\WW) \v^\eta }(\WW^\eta)$ is precisely the canonical model lift of, and hence a measurable function of $\V^\eta = \W^\eta - g^x_1 (\omega) \v^\eta = \W^\eta - G^x_1 (\WW) \v^\eta$.

\noindent
(ii) Straightforward from the definition $g_2^x (\omega) = G_2 (\WW (\omega)) = G^0_2 (\WW (\omega)) + K_2$. By assumption (A4), the map $G^0_2$ is continuous quadratic; applied with $\WW = T_{g_1(\omega) \v} \VV$ this gives the claimed decomposition of $g_2$, with
\begin{equation}\label{G2V}
          \Delta_2 (\omega) =  G_2^0 (\VV (\omega)) + K_2, \ \Delta_1 (\omega) =  G_2^0 ( \VV (\omega), \v ) \text{ and } \ \Delta_0  =  G_2^0 (\v) \ .
\end{equation}
\noindent (iii) True from general facts for Gaussian models. Alternatively, recall that  $G_1^x(\WW) =\left\langle D\Phi_1(\h^x), W\right\rangle$, 
so that for $\h$ in $\H$, by the definition \eqref{eq:defv} of $\v$,
$$\h - G_1^x(\mathbf{h}) \v = \h - \frac{\left\langle D\Phi_1(\h^x), \h \right\rangle}{ \|D\Phi_1(\h^x)\|^2} 
D\Phi_1(\h^x)
= P_{\TH} h, $$
with  $P_{\H_0}$ the orthogonal projection on $\H_0$. %
 Recalling the LDP satisfied by $\WW$, by the contraction principle this implies that $\delta_{\varepsilon}  \mathbf{V} = T_{-G_1^x( \delta_\eps \WW) \v} \delta_\eps\mathbf{W}$  satisfies a LDP with rate function given by
$$I(\Pi) = \inf\{\frac{1}{2} \|h\|^2, \;\; h \in \TH, \Pi = \mathbf{h} \}.$$

(iv) Again, true from general facts for Gaussian models for which Fernique estimates are available. Alternatively, only use that 
  $\WW$ is a Gaussian model and we have a  Fernique estimate for its homogenous norm $||| \WW |||$, cf. {\eqref{eq:Fernique}}. Since  $|||\VV||| \lesssim |||\WW|||+\|\v^x\| |g_1|$ where $g_1$ is a Gaussian, the claim follows.
\end{proof} 

We finally show that the non-degeneracy assumption (A5c) is actually equivalent to an exponential integrability property for the Wiener functional $\Delta_2^x$ (defined in Lemma \ref{lem:AboutV} (ii) above). Recall that (A5) implies $C^1$ regularity of $\I$ in a neighbourhood of $x$.

\begin{lemma} \label{lem:NDstrong}
Under assumptions (A3)-(A4), the non-degeneracy (A5c) is equivalent to the seemingly stronger assumption that
\begin{equation}
\exists \beta < 1: \;\;\; \forall \h\in\H,  q_x D^2\Phi_1(\h^x) (\h, \h) \leq \beta \| \h \|^2_{\H}.
\end{equation}
\end{lemma}

\begin{proof}
It suffices to prove that if (A5c) holds, then
\[ \beta := \sup \left\{   q_x D^2\Phi_1(\h^x) (\h, \h), \;\;\; \| \h \|_{\H} \leq 1 \right\} < 1. \]
But assumption (A4) implies that 
\[D^2\Phi_1(\h^x) (\h, \h) = G_2(\hh) \]
where $G_2$ is continuous on $\M$. Now recall that as part of Assumption (3b), we assume that $ \hh \mapsto \frac 1 2 \| \h \|_{\H}^2$ is a {\it good} rate function, so that the canonical lift maps bounded sets in $\H$ to (relatively) compact subsets of $\M$. It follows that the supremum in the definition of $\beta$ is attained at a $\h^\ast$, which in turn implies by (A5c) that
$$\beta = q_x D^2\Phi_1(\h^x) (\h^\ast, \h^\ast) < \| \h^\ast \|^2_{\H} \leq 1.$$
\end{proof}

\begin{remark}
The equivalence above comes from the fact that $A^x$, cf. Section \ref{app:SOO}, identified as a linear operator $\H \to \H$, is {\it compact} due to the "good rate function" assumption. In fact, one can see with a little more work that, under assumptions (A3)-(A5), $A^x$ is even Hilbert-Schmidt (in the SDE context this is classical cf. e.g. \cite[Lemme 1.9]{arous1988methods}). This can be proven as a relatively straightforward consequence of properties of the 2nd Wiener chaos (we refrain from giving details since we will not need to use this property).
\end{remark}

\begin{proposition} \label{prop:ND}
Let $A^x = D^2\Phi_1(\h^x) $, then
 $$\I'(x)  A^x < {\rm Id} \text{ (as form on $\TH$)}  \ \Leftrightarrow \ \  \exp\left(\I'\left( x\right)\Delta_{2}\right) \in L^{1+}$$
\end{proposition}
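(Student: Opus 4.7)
The plan is to identify $\Delta_2$ as a (shifted) element of the second Wiener chaos with kernel $\tfrac12 A^x|_{\TH}$, then invoke the classical spectral criterion for exponential integrability of quadratic Gaussian functionals. By Lemma \ref{lem:AboutV}(ii), $\Delta_2 = G_2^{0,\h^x}(\VV) + K_2$ (absorbing the renormalization into the constant). Assumption (A4a) makes $G_2^{0,\h^x}$ continuous quadratic on model space, and on canonical Cameron--Martin lifts a Taylor expansion of $\Phi_1(\h^x+\beps\k)$ identifies it as
\[
G_2^{0,\h^x}(\kk) = \tfrac12 D^2\Phi_1(\h^x)[\k,\k] = \tfrac12 A^x[\k,\k].
\]
Applying $G_2^{0,\h^x}$ to the Wong--Zakai renormalized approximations of $\VV$ used in Lemma \ref{lem:AboutV}(i) and passing to the limit in model topology, I would identify $G_2^{0,\h^x}(\VV)$ with the Wick-regularized quadratic functional $\tfrac12\bigl(A^x[\V,\V]-\mathrm{tr}\,A^x|_{\TH}\bigr)$, a $C_2$ chaos element whose symmetric operator is $\tfrac12 A^x|_{\TH}$ (Hilbert--Schmidt by Fernique, Lemma \ref{lem:AboutV}(iv)).

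Second, diagonalize $A^x|_{\TH}$ with eigenvalues $(\mu_i)$ and orthonormal eigenbasis $(f_i)$; setting $\eta_i := \langle\V,f_i\rangle_{\H}$ (iid $N(0,1)$) yields the explicit representation $\Delta_2 = \tfrac12\sum_i \mu_i(\eta_i^2-1) + c$ for an explicit constant $c$. By independence and the formula $E[e^{a\eta^2}]=(1-2a)^{-1/2}$ valid for $a<1/2$,
\[
E\!\left[\exp(t\Delta_2)\right] = e^{tc}\prod_i \frac{e^{-t\mu_i/2}}{\sqrt{1-t\mu_i}},
\]
finite iff $t\mu_i<1$ for every $i$ (convergence of the infinite product being automatic from $\sum_i \mu_i^2<\infty$). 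This condition is precisely $tA^x<\mathrm{Id}$ as quadratic forms on $\TH$. To bridge the $L^1/L^{1+}$ gap: strict inequality $\I'(x)A^x<\mathrm{Id}$ combined with $\mu_i\to 0$ (compactness of $A^x|_{\TH}$) yields $\lambda>1$ with $\lambda\I'(x)\mu_i<1$ for all $i$, hence $\exp(\I'(x)\Delta_2)\in L^\lambda\subset L^{1+}$; conversely, $L^{1+\delta}$ membership for some $\delta>0$ forces $(1+\delta)\I'(x)\mu_i<1$ for every $i$, yielding back the strict inequality.

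I expect the main obstacle to be the first step: rigorously promoting the abstract continuous-quadratic map $G_2^{0,\h^x}$, defined on model space and pinned down only on Cameron--Martin lifts, to a concrete Wick-regularized $C_2$-functional of kernel $\tfrac12 A^x|_{\TH}$ once evaluated on the Gaussian model $\VV$. This relies on the Wong--Zakai lifting in (A3a) together with the compatibility of translation with renormalization already used in Lemma \ref{lem:AboutV}(i); care is needed to track the additive constant arising from $\mathrm{tr}(A^x|_{\TH})$ when passing from the non-centered form $A^x[\V,\V]$ to its Wick-ordered version, since only the latter actually lives in $C_2$.
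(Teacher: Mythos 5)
Your proposal is correct in outline but takes a genuinely different route from the paper, and the one step you yourself flag as the ``main obstacle'' is precisely the step the paper's argument is designed to avoid.

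The paper's proof is a soft large-deviations argument. It reformulates $\exp(\I'(x)\Delta_2)\in L^{1+}$ as the existence of some $C>\I'(x)$ with $P(\Delta_2\ge r)\le e^{-Cr}$, uses the quadratic homogeneity $G_2(\delta_\eps\VV)=\eps^2 G_2(\VV)$ and the LDP for $\delta_\eps\VV$ (Lemma~\ref{lem:AboutV}(iii)) together with the contraction principle to get $P(\Delta_2\ge\eps^{-2})=\exp(-(C^\ast+o(1))\eps^{-2})$ with $C^\ast=\inf\{\tfrac12\|\h\|^2:\tfrac12 A^x[\h,\h]\ge1\}$, and then reads the sign of $C^\ast-\I'(x)$ directly off the form inequality using the active constraint $\tfrac12 A^x[\h^\ast,\h^\ast]=1$. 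At no point does it need to know the explicit chaos decomposition of $\Delta_2$ or to resolve any Wick-ordering; the contraction principle alone produces the optimal tail exponent.

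Your proof, by contrast, is a hard spectral computation: you identify $\Delta_2$ as a shifted second-chaos functional with Hilbert--Schmidt kernel $\tfrac12 A^x|_{\TH}$, diagonalize, and invoke the explicit infinite-product formula for $E[e^{t\Delta_2}]$. Your bookkeeping of the $L^1/L^{1+}$ gap via compactness of $A^x|_{\TH}$ is correct, and the identification $G_2^{0,\h^x}(\kk)=\tfrac12 A^x[\k,\k]$ on Cameron--Martin lifts follows from (A4a) as you say. The extra technical burden, which you correctly diagnose, is promoting this to the statement that $G_2^{0,\h^x}(\VV)$ equals the \emph{Wick-ordered} quadratic form $\tfrac12\sum_i\mu_i(\eta_i^2-1)+\mathrm{const}$ on the Gaussian model $\VV$. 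Two cautions there: first, $A^x|_{\TH}$ need only be Hilbert--Schmidt, not trace-class, so the expression $\tfrac12(A^x[\V,\V]-\mathrm{tr}\,A^x|_{\TH})$ is a difference of two undefined quantities and should be replaced at the outset by the Wick sum $\tfrac12\sum_i\mu_i(\eta_i^2-1)$; second, one has to verify that the model renormalization in (A3a) (subtracting It\^o corrections at the level of iterated integrals) really does induce exactly Wick ordering at the level of the derived quadratic functional $G_2$ evaluated on $\VV$, which is plausible via the Wong--Zakai compatibility used in Lemma~\ref{lem:AboutV}(i) but is not automatic. The upside of your approach, if these are filled in, is that you obtain the explicit constant $E[\exp(\I'(x)\Delta_2)]$ as a Carleman-type infinite product, which the LDP argument does not produce; the downside is exactly the extra chaos-theoretic overhead that the contraction-principle argument sidesteps.
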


\begin{proof} 
Note that $ \exp\left(\I'\left( x\right)\Delta_{2}\right) \in L^{1+}$ if and only if  
\[
(*): \ \  \mbox{there exists $C>\I^{\prime }\left( x\right)$ such that,  for all $r$ large enough} \ P\left( \Delta _{2}^{x} \ge r\right) \leq \exp (-Cr ).
\]%
Recall that\[
\Delta _{2} = G_2(\VV) - K_2,
\]%
where by assumption $G_2$ is quadratic, so that $\varepsilon ^{2}G_2(\VV)  =  G_2(\delta_{\varepsilon} \VV)$ where $G_2$ is a continuous
function. By Lemma \ref{lem:AboutV} (iii), we know that $ \delta_{\varepsilon} \VV$ satisfies a LDP, so that by the contraction principle one has 
\[
P\left( \Delta_2 \geq \varepsilon^{-2}\right) \sim P\left(\varepsilon^{2} G_2(\VV)  \ge 1\right) = \exp \left( -\frac{%
C^{\ast }+o(1)}{\varepsilon ^{2}}\right) 
\]%
where
\[
C^{\ast }=\inf_{\h \in \H_0}\left\{ \frac{1}{2}\left\Vert
\h\right\Vert ^{2}:\frac 1 2 \partial _{\varepsilon }^{2}|_{\varepsilon=0}\Phi \left(
\varepsilon \h+\h^{x}\right) \ge 1\right\}
\]
Hence $(\ast )$ above is reduced ($r=1/\varepsilon ^{2}$....)\ to the question 
$C^{\ast }>\I^{\prime }\left( x\right) $. 
 However, by Lemma \ref{lem:NDstrong}, if the non-degeneracy assumption holds, then one has for each $\h \in \H_0$
\[ \frac{1}{2}\left\Vert\h\right\Vert ^{2} \geq  \beta^{-1}  \I'(x) \frac 1 2 G_2(\hh) =\beta^{-1} \frac 1 2 \partial _{\varepsilon }^{2}|_{\varepsilon=0}\Phi \left(
\varepsilon \h+\h^{x}\right) \]
so that
\[
C^* \geq \beta^{-1}  \I'(x)  >  \I^{\prime } (x).\]
On the other hand, if the non-degeneracy condition fails there exists $\h$ with $1 =\frac 1 2 \A^{x}\left( \h,\h\right)  \geq \frac{1}{2 \I^{\prime }\left( x\right)} \|\h\|^2$ which implies that $C \leq \I^{\prime }\left( x\right)$.

\end{proof}

\begin{proposition} \label{prop:Jloc1} 
For $x>0$ we have 
$\eps \to 0$
$$
        J (\eps, x)   \sim  \frac{\eps  \beps^2}{\sigma_x \Lambda'(x)^2\sqrt{2\pi}  } E \left[ \exp\left(\I'\left( x\right)\Delta^x_{2}\right) \right] \; ;
$$
whereas for $x<0 $ we have
$\eps \to 0$
$$
        J^{\mathrm{put}} (\eps, x)   \sim  \frac{\eps  \beps^2}{\sigma_x \Lambda'(x)^2\sqrt{2\pi} }  E \left[ \exp\left(\I'\left( x\right)\Delta^x_{2}\right) \right].
$$
\end{proposition}

\begin{proof}
In this proof we denote by $C$ positive constants, whose value may change from line to line.

By Proposition \ref{prop:Jloc} we can work with the sub-probability $P_\delta = P ( ... ; \beps ||| \WW ||| < \delta)$, i.e. on the part of the probability space where the model 
remainder estimates are available. We want to estimate
\begin{eqnarray*}
         J_\delta (\eps, x) & = & E_\delta \left[ \exp \left( {  - \frac{\I'(x)g_1^x }{\beps}  } \right) (e^{ \eps g_1^x + \eps\beps g_2^x + (\eps / \beps) r_3^{x,\eps}} - 1)^+\right]  \\
 \end{eqnarray*}
 
Recall $\eps = \beps / \nu_\eps$. Recall also $r_3^{x,\eps} (\omega) = R_3^{x,\eps} ( \WW (\omega))$. For this 
``robustified'' remainder Assumption (4b) applies so that $ | R_3^{x,\eps} (\WW) | \lesssim o( \beps^2) + \beps^3 ||| \WW |||^3 \ $ 
whenever $\beps ||| \WW ||| \le \delta $.
Thus, for $\beps$ small enough (depending on $\delta$), and for a suitable positive constant $C$,
$$
 | R_3^{x,\eps} (\WW) | \le C \delta \beps^2 + \beps^3 ||| \WW |||^3 \le \delta \beps^2 ( C + ||| \WW |||^2) \ .
$$
Since we also have $\beps g_1(\omega) = \beps G_1 ( \WW (\omega) ) = O(\delta)$ and  $\beps^2 g_2(\omega) = \beps^2 G_2 ( \WW (\omega) ) = O(\delta^2)$ when working with $P_\delta$, 
{on this set we have,  for some constant $C>0$,
\[e^{ \eps g_1^x + \eps\beps g_2^x + (\eps / \beps) r_3^{\eps,x}}
\in 
(1+ \eps g_1^x + \eps\beps g_2^x + (\eps / \beps) r_3^{\eps,x})(1\pm C\delta)
\]
and as a consequence for some constant $C>0$ one has}

\begin{eqnarray*}
         J_\delta(\eps,x)
         & \in & \eps (1\pm C \delta) E_\delta \left[ \exp \left( {  - \frac{\I'(x)g_1 }{\beps}  } \right) ( g_1 + \beps 
         [g_2 \pm \delta (C+ ||| \WW |||^2)])^+\right] .
\end{eqnarray*}
Recall by Lemma \ref{lem:AboutV} one has
$g_2 = \Delta_2 + g_1 \Delta_1 + g_1^2 \Delta_0$
where the $\Delta_i$ are independent from $g_1$, and we let
{\begin{eqnarray*}
\tilde \Delta_{0}  := |\Delta _{0}| + C \delta \| \v \|^2_{\CM}, \\
\tilde \Delta_{0}^\pm  := \Delta _{0} \pm C \delta \| \v \|^2_{\CM}, \\
\tilde \Delta^\pm_{2} := \Delta _{2} \pm \delta (C+ ||| \VV |||^2), 
\end{eqnarray*}
where $\tilde \Delta _{2}^\pm$ is also $P$-independent from $g_1$. Note also that, using $\beps g_1=O(\delta)$,
$$|\beps \Delta_1|= | G_2^0\left(\delta_{\beps} \VV, \v\right)| \leq C \beps ||| \VV ||| \leq C \delta$$
when $\beps |||\WW||\leq \delta$. We also have $|||\WW||| \lesssim |||\VV|||+\|\v\|_{\H} |g_1|$.
Thus, the asymptotic behaviour of $J_\delta (\eps, x)$ is sandwiched by $\eps(1\pm C \delta)$ times
\begin{eqnarray*}
        \dots & \in & E_\delta \left[ \exp \left( {  - \frac{\I'(x)g_1 }{\beps}  } \right) \Big(   g_1+ \beps ( \tilde \Delta^\pm_{2}+g_{1}\Delta_1 + g_{1}^{2}
        \tilde \Delta^\pm_{0})   
        \Big)^+\right]  \\
        & \in & E_\delta \left[ \exp \left( {  - \frac{\I'(x)g_1 }{\beps}  } \right) \Big(   g_1+ \beps  \tilde \Delta^\pm_{2} 
+g_1(\beps \Delta_1+\beps g_1 \tilde{\Delta}_0^\pm)        
        \Big)^+\right] \\
       & \in & E_\delta \left[ \exp \left( {  - \frac{\I'(x)g_1 }{\beps}  } \right) \Big(   g_1+ \beps  \tilde \Delta^\pm_{2} \pm C (1+ \tilde \Delta _{0} ) \delta |g_1| 
        \Big)^+\right] 
\end{eqnarray*}}

We now prove the upper bound for the asymptotics. Clearly,
$$
      E_\delta \left[ \exp \left( {  - \frac{\I'(x)g_1 }{\beps}  } \right) \Big(   g_1+ \beps  \tilde \Delta^+_{2} \pm C (1+ \tilde \Delta _{0} ) \delta |g_1| \Big)^+\right] \le E \big[ \cdots \big] 
$$
where $\cdots$ means the same argument. Set
$$\gamma_\delta:= C(1+\tilde \Delta_0) \delta/ \sigma_x$$
and assume that $\delta$ is small enough that $\gamma_\delta < 1$. By Theorem \ref{th:expansion}, part (iii), we have $\frac{\bar{\eps}}{\Lambda'(x) \sigma_x}>0$ and can then apply Lemma \ref{lem:BSabs} (with $N= g_1 / \sigma_x$) to see that 
$$
E \big[ \cdots  |\Delta_2, \mathbf{V} \big]  \leq 
\frac{\beps^2}{\sigma_x \Lambda'(x)^2\sqrt{2\pi} } \max\left[ (1-\gamma_{\delta}) e^{\frac{\I^{\prime }\left( x\right)\left(\Delta_{2}+ \delta \left(C+||| \VV |||^2\right)\right)}{1-\gamma_{\delta}}} + 2 \gamma_{\delta} , (1+\gamma_{\delta}) e^{\frac{\I^{\prime }\left( x\right)\left(\Delta_{2}+ \delta \left(C+||| \VV |||^2\right)\right)}{1+\gamma_{\delta}}}\right].
$$
By Proposition \ref{prop:ND} and Assumption (A5c), $\exp\left(\I'\left( x\right)\Delta_{2}\right) \in L^{1+}$ and by Lemma \ref{lem:AboutV} (iv) $\exp(||| \VV |||^2) \in L^{0+}$, so that by letting successively $\beps$ and $\delta$ go to $0$ we obtain that 
\begin{equation*}
\limsup_{\varepsilon \to 0} \beps^{-2} E \big[ \cdots \big]  \leq 
{ \frac{1}{\sigma_x \Lambda'(x)^2\sqrt{2\pi} } }
 E \left[ \exp\left(\I'\left( x\right)\Delta_{2}\right) \right].\end{equation*} 
The lower bound is proved in the same way using the lower bound in Lemma \ref{lem:BSabs}. 
The case $x<0$, $J^{\mathrm{put}}(\eps,x)$ can be threated analogously, with the difference that we have to compute
\begin{eqnarray*}
         J_\delta^{\mathrm{put}} (\eps, x) & = & E_\delta \left[ \exp \left( {  - \frac{\I'(x)g_1^x }{\beps}  } \right) (1-e^{ \eps g_1^x + \eps\beps g_2^x + (\eps / \beps) r_3^{\eps,x}})^+\right]  \\
 \end{eqnarray*}
that we can reduce to the computation of
\[
      E_\delta \left[ \exp \left( {  - \frac{\I'(x)g_1 }{\beps}  } \right) \Big(   - g_1 -  \beps  \tilde \Delta^\pm_{2} \pm C (1+ \tilde \Delta _{0} ) \delta |g_1| \Big)^+\right] . 
\]
We apply again Lemma \ref{lem:BSabs}, this time with $\eps = -\frac{\bar{\eps}}{\Lambda'(x) \sigma_x}>$ 
and $N=-g_1/ \sigma_x $. The results follows then as for the call case.

\end{proof}




\subsection{Proof of Theorem \ref{thm:mdp}} \label{proof:MD}

The proof is similar to that of Theorem \ref{thm:main0} (but simpler  since we only need to expand to first order), and we keep the same notations. By Lemma \ref{lem:hx}, for $x$ small enough the minimizer $\h^x$ is unique and is $C^2$ as a function of $x$, in particular $\Lambda$ is $C^1$ at $0^+$. Note also that equation \eqref{eq:defv} and the fact that $g^x_1 = \left\langle D\Phi(\h^x),W\right\rangle$ imply that 
\[ \sigma_x\|  \v ^x \| = 1. \]
%

The proof proceeds as in the large deviation case, and after the same Girsanov transform, we are left with 
\begin{align*}
E \left[ \exp\left(- \frac{\Lambda'(x_\varepsilon) \sigma_{x_\varepsilon} N_1}{\beps}\right) \left(e^{\eps \sigma_{x_\varepsilon} N_1 + \eps \beps R_2^x}-1 \right)^+ \right]
\end{align*}
where $N_1 \sim \mathcal{N}(0,1)$.
Note that on $\beps |||  \mathbf{W} ||| \leq 1$ we have (uniformly in $x$ near $0$)
\begin{equation}
R_2^x \leq C (1+ ||| \mathbf{W} |||^2) \leq  C (1+ ||| \mathbf{V}^x|||^2 + \sigma_x |N_1| \|\v^x\|_{\CM} )
\end{equation}
and in addition by the LDP for $\mathbf{W}$ the term 
$$E\left[ \ldots 1_{\{ \beps |||  \mathbf{W} ||| \geq 1\}} \right] \leq \exp\left( -\frac{c}{\bar{\varepsilon}^{2}}\right)$$
is negligible compared to what we want. We are therefore left with
\begin{align*}
E \left[ \exp\left(- \frac{\Lambda'(x_\varepsilon) \sigma_x N_1}{\bar{\varepsilon}}\right) \left(\exp \left(\eps \sigma_x N_1 \pm \eps \bar{\varepsilon} C \left(1+\| \mathbf{V}^x\|^2 + \sigma_x |N_1| \|\v^x\| \right)\right) -1\right)^+ \right]
\end{align*}
which by using Lemma \ref{lem:BSabs} as in the LDP case  is equivalent to
\begin{align*}
\eps \frac{\bar{\varepsilon}^2}{\sqrt{2\pi} \Lambda'(x_\varepsilon)} E \left[ \exp \left(\pm C \Lambda'(x_\varepsilon)  \left(1+\|\mathbf{V}^x\right\|^2) \right) \right].
\end{align*}

Finally we use that 
\[\|\mathbf{V}^x\| \lesssim \|\mathbf{W}\| + \|\v^x\| |g_1^x|\leq  \|\mathbf{W}\| +  |N_1|\] 
has Gaussian tails, uniformly in $x$, so that since $\Lambda'(0) = 0$ {(differentiability at $0$ has already been proved)} we obtain
\[\lim_{\eps \to 0} E \left[ \exp \left(\pm C \Lambda'(x_\varepsilon)  \left(1+\|\mathbf{V}^x\|^2\right) \right) \right] =1,
\]
which concludes the proof.

\appendix

\section{Robust representation} \label{sec:Appendix3A}

Assumption (A3a) expresses our ability to write the solution process of interest, with small noise parameter $\beps$, as continuous image of its ($\beps$-rescaled) noise, lifted to a random rough path (model). In the theory of rough paths (regularity structures) this is achieved by rough integration (reconstruction), plus a fixed point argument when it comes to differential equation. These solution theories by nature provide detailed information of the solution and perturbations thereof, many of which play a role in our analysis. We take here a pragmatic view, and simply list the used properties. These are well-known in case of classical SDEs, the robust representation is precisely Lyons' rough path view, a review of selected topics was given in Section \ref{sec:SDEs}. All relevant results for RoughVol are collected in Appendix \ref{app:ERS}. 

\medskip

Our basic assumption (A3a) postulates the existence of a regularity structure, with model space $\M$, such that $\bar X^\eps (\omega)$, defined on $m$-dimensional Wiener space, and in general only measurable, can be written as  
$$ \bar X^\eps (\omega) = 
\Phi^\eps ( \delta_{\beps} \WW (\omega)) \text{ a.s. } $$ 
with continuous map $\Phi^\bullet: (\eps, \MM) \mapsto \Phi^\eps(\MM)$, from $[0,1]\times  \M \to \mathcal{C}([0,1])$, where $\WW (\omega)$ lifts the noise (i.e. $m$-dimensional Brownian motion $\W(\omega) \equiv \omega$) to a random model (typically by attaching additional stochastic integrals, as seen in (\ref{eq:DefinitionWW}) for instance). 
\medskip
%
%
%

\textbf{Model space}: $\mathcal{M}$ is an algebraic subspace of an ambient Banach space given as direct sum (finitely many, indexed by $i=1,2,...$) H\"older type spaces, the first of which is given by $\C^{\alpha}$ and accommodates typical noise realizations. A {\it model}  $\MM$ (over $M \in \C^{\alpha}$) is then simply a tuple of the form $(M, ... ) \in \mathcal{M}$, 
where the dots depend on the problem at hand. Its analytic properties are captured by having $ \| \MM \| = \sum \| \pi_i (\MM) \|_i = \| M \|_\alpha + ...$ finite. This makes $\M$ a complete metric space $\M$ with distance  
$$(\MM,\MM')\mapsto \| \MM;\MM' \| := \| \MM - \MM' \| 
$$
between models.\footnote{In \cite{hairer2014theory} triple bars are used for this model distance. We reserve triple bars for the homogenous (model) norm, in agreement with rough path notation \cite{friz2014course}.} 


\textbf{Canonical lift}: for $M \in \C^{\infty}$ we assume existence of an analytically well-defined lift $\mathcal{L}(M) = (M,...)\in\mathcal{M}$, with dots typically given by classical integration. We assume this extends to elements of Cameron--Martin space $\CM = \H^1 \subset C([0,1],\R^m)$, in the sense that  $\CM \ni \h   \mapsto \mathcal{L}(h) = \hh \in \M$ with the property that for all $R>0$, $ \left\{ \hh: \;\; \|h\|_{\H} \leq R \right\} \mbox{ is compact in } \M$. \\
\textbf{Dilation and homogenous norm}: for each $\eps>0$, there exists an {\it dilation} operator $\delta_\eps : \M \to \M$, which lifts scalar multiplication in the sense,
$$ \forall \h \in \H, \;\;\delta_{\eps} \hh = \mathcal{L}(\eps h) = (\eps^{d_1} \pi_1 (\hh), \eps^{d_2} \pi_2 (\hh), ... ),$$
and integers $d_i$ and $d_1=1$. A {\it homogenous model norm} is given by  $||| \MM ||| = \sum \| \pi_i (\MM) \|_i^{1/d_i}$ so that 
$$ \forall \MM \in \M, \;\;||| \delta_\eps \MM||| = \eps |||\MM|||,$$

\textbf{Renormalization} : we are given a group $\mathfrak{G}$ which acts continuously on $\M$, and which preserves the first level in the sense that $\pi_1 = \pi_1 \mathfrak{R}$ for each $\mathfrak{R}$ in $\mathfrak{G}$, commutes with translation (see below), and provides an Ito-Stratonovich typ correction 
  for approximating Gaussian models (see below). 

\textbf{Translation operator} : we have a continuous map $\H \times \M \to \M$, $(h,\MM) \mapsto T_h \MM$ such that 
$T_h\mathcal{L}(k) = \mathcal{L}(h+k)$ for all $h,k \in \H$. Furthermore, 
\begin{equation} \label{eq:estThM}
\exists C>0, \;\; \forall (h,\MM) \in \H \times \MM, \;\;|||T_h \MM ||| \leq C |||\MM|||+ C \|\h\|_{\H},
\end{equation}
and 
\begin{equation}
\forall (h,\mathfrak{R}) \in \H \times \mathfrak{G}, \;\;\;\; T_h \mathfrak{R} = \mathfrak{R}T_h .
\end{equation}

\textbf{Gaussian model} : we finally have some probabilistic assumption. Given $\eta>0$, let $\W^\eta(\omega) = \W(\omega) \ast \rho^\eta$, where $\rho^\eta$ is a smooth approximation of identitity (i.e. $\rho^\eta=\eta^{-1}\rho(\eta^{-1}\cdot)$ where $\rho$ is smooth, compactly supported and with $\int \rho = 1$). Note that $\W^\eta$ is smooth, so that one can define its canonical lift $\WW^\eta$. We then assume that there are some (deterministic) elements $\mathfrak{R}^\eta$ of $\mathfrak{G}$ and a (random) model $\WW(\omega) \in \M$ s.t. 
$$\lim_{\eta \to 0} \mathfrak{R}^\eta \WW^\eta = \WW$$
in probability and model topology; as well as consistency with Cameron-Martin shifts, namely that
$$\forall \h \in \H, \WW(\omega+\h) = T_h \WW(\omega), \;\; \mbox{ a.s.}$$ 

Note that combined with (\ref{eq:estThM}), this implies, via a generalized Fernique estimate \cite{friz2010generalized}, that $|||\WW|||$ has Gaussian tails, namely
\begin{equation} \label{eq:Fernique}
\exists \gamma >0, E \exp\left( \gamma |||\WW(\omega)|||^2 \right) < \infty.
\end{equation}

\section{Elements of Regularity Structures for Rough Volatility} \label{app:ERS}


\subsection{The rough vol model}  \label{subsec:rsrv}

We review the essentials of \cite{bayer2019regularity}, at first in the setting of Hairer \cite{hairer2014theory}.


\subsubsection{Basic pricing setup}
\label{sec:basic-pricing-setup}
Recall $\hat W_t =\int^t_0 K^H (s,t)  dW_s$, with $K^H (s,t) = \sqrt{2H}  |t-s|^{H-1/2},\ H \in (0, 1/2]$. Given a (sufficiently) smooth scalar function $f$, we are interested in robust integration of $\int f(\widehat{W}) dW$. The H\"older exponent of $\hat W$ is $H-\kappa$, any $\kappa>0$. Let $M$ be the smallest integer such that
$ (M+1)H-1/2>0$ and then pick $\kappa$ small enough such that
\begin{align}  \label{eq:condonkappa}
(M+1)(H-\kappa)-1/2-\kappa >0\,.
\end{align}
(In case $H=1/2$, have $M=1$ and then $1/2-\kappa \in (1/3,1/2)$ which is the rough path case.) 
The  \textit{structure space}\footnote{We here follow the terminology of the forthcoming 2nd edition of \cite{friz2014course}.} is defined as 
\begin{align}     \label{equ:SimpleModelSpace}
\mathcal{T}=\left\langle \{ \Xi, \Xi \mathcal{I}(\Xi),\ldots,\Xi \mathcal{I}%
(\Xi)^M,\mathbf{1},\mathcal{I}(\Xi),\ldots,\mathcal{I}(\Xi)^M \} \right\rangle
\,,
\end{align}
where $\langle\ldots \rangle$ denotes the vector space generated by the
(purely abstract) symbols in $\{\ldots \}=: S$. 
%
%
%
The
symbol $\mathcal{I}(\ldots)$ represents ``integration against the
 kernel $K^H$'', so that $\mathcal{I}(\Xi)$ represents fractional Brownian motion $\hat{W}$. Symbols
like $\Xi \mathcal{I}(\Xi)^m =\Xi \cdot \mathcal{I}(\Xi)\cdot \ldots \cdot 
\mathcal{I}(\Xi)$ or $\mathcal{I}(\Xi)^m=\mathcal{I}(\Xi)\cdot \ldots \cdot 
\mathcal{I}(\Xi)$ should be read as products between the objects above.
Every symbol $\tau \in S$ has a {\it homogeneity}, given by 
\begin{align*}
|\Xi \mathcal{I}(\Xi)^m| &= - 1/2 - \kappa + m (H-\kappa),\,m\geq 0 \\
|\mathcal{I}(\Xi)^m|&=m (H-\kappa) ,\,m>0 \\
|\mathbf{1}|&=0\,.
\end{align*}
Introduce the set of homogeneities $%
A:=\{|\tau|\,\vert\, \tau \in S\}$, with minimum $|\Xi|=-1/2-\kappa$.
Note that the homogeneities are  
multiplicative in the sense that, $|\tau\cdot \tau^{\prime }| = |\tau| + |\tau^{\prime }|$
for $\tau,\tau^{\prime }\in S$. 


At last, we have the {\it structure group} $G$, an (abstract) group of linear operators on the model space $\mathcal{T}$ which should satisfy $\Gamma \tau- \tau \in \bigoplus_{\tau'\in S:\,|\tau'|<|\tau|} \RR\tau'$ and $\Gamma \mathbf{1}=\mathbf{1}$ for $\tau \in S$ and $\Gamma \in G$. We will choose $G=\{\Gamma_h \,\vert\, h\in (\RR,+)\}$ given by 
\begin{align*}
\Gamma_h \mathbf{1}=\mathbf{1},\,\Gamma_h \Xi=\Xi,\,\Gamma_h \mathcal{I}(\Xi)=\mathcal{I}(\Xi)+h \mathbf{1}\,. 
\end{align*}
and $\Gamma_h (\tau'\cdot \tau)=\Gamma_h \tau' \cdot \Gamma_h \tau$ for $\tau',\,\tau \in S$ for which $\tau\cdot \tau'\in S$ is defined. 
\bigskip

\subsubsection*{The It\^o model $(\Pi,\Gamma)$}


To give a meaning to the product terms $\Xi\mathcal{I}(\Xi)^k$ we follow the
idea from rough paths and define ``iterated integrals'' for $s,t\in 
\mathbb{R} , s\leq t$ as 
\begin{align}  \label{eq:DefinitionWW}
\mathbb{W}^m(s,t)=\int_s^t (\hat{W}(r)-\hat{W}(s))^m\,\mathrm{d}  W(r) \ .
\end{align}
%
%
%
%
%

We are now in the position to define a model $(\Pi,\Gamma)$ that gives a
rigorous meaning to the interpretation we gave above for $\Xi,\mathcal{I}%
(\Xi),\Xi\mathcal{I}(\Xi),\ldots\,$. Recall that in the theory of regularity
structures a model is a collection of linear maps $\Pi_s: \mathcal{T}%
\rightarrow C^1_c(\mathbb{R} )^{\prime }$, $\Gamma_{st}\in G$ such that 
\begin{align}
&\Pi_t =\Pi_s \Gamma_{st},\,  \label{eq:ModelDefinition1} \\
&|\Pi_s \tau (\varphi^\lambda_s)| \lesssim \lambda^{|\tau|}\,,
\label{eq:ModelDefinition2} \\
&\Gamma_{st} \tau =\tau+ \sum_{\tau^{\prime }\in S:\,|\tau^{\prime }|<|\tau|}
c_{\tau^{\prime }}(s,t) \tau^{\prime },\,|c_{\tau^{\prime }}(s,t)|\lesssim
|s-t|^{|\tau|-|\tau^{\prime }|}  \label{eq:ModelDefinition3}
\end{align}
where the bounds hold uniformly for $\tau\in S$, any $s,t$ in a compact set
and for $\varphi^\lambda_s:=\lambda^{-1}\varphi(\lambda^{-1} (\cdot-s))$
with $\lambda\in(0,1]$ and $\varphi\in C^1$ with compact support in the ball 
$B(0,1)$. We define the following ``It\^o'' model $(\Pi,\Gamma) = (\Pi^{\text{It\^o}},\Gamma^{\text{It\^o}})$. 
\label{LimitingModel}
\begin{equation*}
\begin{array}{ll}
\Pi_s \mathbf{1}=1 & \Gamma_{ts} \mathbf{1} =\mathbf{1} \\ 
\Pi_s \Xi = \dot{W} \qquad & \Gamma_{ts} \Xi =\Xi \\ 
\Pi_s \mathcal{I}(\Xi)^m = \left(\hat{W}(\cdot)-\hat{W}(s)\right)^m & 
\Gamma_{ts} \mathcal{I}(\Xi)=\mathcal{I}(\Xi)+(\hat{W}(t)-\hat{W}(s))\mathbf{%
1} \\ 
\Pi_s \Xi \mathcal{I}(\Xi)^m = \{ t \mapsto \frac{\mathrm{d} }{\mathrm{d}  t}\mathbb{W}%
^m(s, t) \} & \Gamma_{ts} (\tau \tau^{\prime })=\Gamma_{ts}\tau \cdot
\Gamma_{ts}\tau^{\prime }\,,\,\, \mbox{ for } \tau,\tau^{\prime }\in S\mbox{
with } \tau\tau^{\prime }\in S%
\end{array}
\end{equation*} 
We extend both maps from $S$ to $\mathcal{T}$ by imposing linearity.

\begin{lemma} \cite{bayer2019regularity}
\label{lem:LimitingModel} The pair ($\Pi^{\text{It\^o}},\Gamma^{\text{It\^o}})$ defines
(a.s.) a model on $(\mathcal{T},A)$.
\end{lemma}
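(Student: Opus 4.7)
The statement combines the algebraic identity $\Pi_t = \Pi_s \Gamma_{st}$ with the two analytic bounds \eqref{eq:ModelDefinition2}--\eqref{eq:ModelDefinition3}; I would verify each on the finite generating set $S$ and extend by linearity. The only randomness enters through the process $\hat W$ and the stochastic integrals $\mathbb{W}^m$, so the a.s.\ statement splits cleanly from the algebra.

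For the algebraic identity, the verification on $\mathbf{1}$ and $\Xi$ is tautological. On $\mathcal{I}(\Xi)^m$, applying multiplicativity to $\Gamma_{st}\mathcal{I}(\Xi) = \mathcal{I}(\Xi) + (\hat W(t)-\hat W(s))\mathbf{1}$ reduces the claim to the binomial expansion $(\hat W(\cdot)-\hat W(t))^m = \sum_k \binom{m}{k}(\hat W(\cdot)-\hat W(s))^k(-(\hat W(t)-\hat W(s)))^{m-k}$. On the product symbols $\Xi\mathcal{I}(\Xi)^m$, one expands $(\hat W(r)-\hat W(t))^m$ binomially inside the defining integral $\mathbb{W}^m(t,\cdot)$ and collects the pieces into the $\mathbb{W}^k(s,\cdot)$; the (deterministic) coefficients produced are exactly those prescribed by the multiplicative action of $\Gamma_{st}$, closing the identity.

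The structure group bound \eqref{eq:ModelDefinition3} reduces, upon unwinding multiplicativity, to control of polynomials of prescribed degree in the Gaussian increment $\hat W(t)-\hat W(s)$. Since $\hat W$ is centred Gaussian with $E[|\hat W(t)-\hat W(s)|^2] \lesssim |t-s|^{2H}$, the Kolmogorov continuity theorem gives a.s.\ $(H-\kappa)$-H\"older regularity of $\hat W$, and a symbol-by-symbol degree check then yields $|c_{\tau'}(s,t)| \lesssim |t-s|^{|\tau|-|\tau'|}$, uniformly in $(s,t)$ in compacts.

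The main obstacle is the analytic bound \eqref{eq:ModelDefinition2} at the distributional symbols $\Xi$ and $\Xi \mathcal{I}(\Xi)^m$, whose homogeneities are negative: here one cannot test against $\varphi_s^\lambda$ pointwise. I would instead compute $L^p$ moments via the Burkholder--Davis--Gundy inequality applied to the martingale defining $\mathbb{W}^m$ together with the Gaussian estimate $E[|\hat W(r)-\hat W(s)|^{2m}] \lesssim |r-s|^{2mH}$, obtaining
\begin{equation*}
  E\bigl[\,|\Pi_s(\Xi\mathcal{I}(\Xi)^m)(\varphi_s^\lambda)|^p\,\bigr]^{1/p} \lesssim \lambda^{-1/2 - \kappa + m(H-\kappa)}
\end{equation*}
for all $p \in [1,\infty)$, uniformly in $s$ and $\lambda \in (0,1]$; the case $m=0$ (symbol $\Xi$) is analogous but simpler. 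One then invokes the Kolmogorov-type criterion for models (cf.\ Thm.~10.7 in \cite{hairer2014theory}) to upgrade these moment bounds into the almost-sure pathwise estimate, at the cost of a small loss in $\lambda$-exponent. The delicate point is that this upgrade has to be performed \emph{simultaneously} for all symbols up to $\Xi\mathcal{I}(\Xi)^M$, and the admissibility condition \eqref{eq:condonkappa} on $\kappa$ is precisely what guarantees that the accumulated losses remain compatible with the positive regularity budget $(M+1)H-1/2$ at the top-order symbol. With all bounds in place on the generators, linear extension to $\mathcal{T}$ concludes the proof.
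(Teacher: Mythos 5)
The paper does not actually prove this lemma; it simply cites \cite{bayer2017regularity}, where the construction of the It\^o model is carried out in full, so there is no in-paper argument to compare against. Your sketch, however, follows precisely the strategy of the cited proof: verify the algebraic (Chen-type) identity $\Pi_t=\Pi_s\Gamma_{st}$ on the generating set $S$ by binomial expansion; establish $L^p$ moment bounds for the stochastic integrals $\mathbb{W}^m(s,t)$ via BDG (equivalently, via Gaussian hypercontractivity, since each $\mathbb{W}^m(s,t)$ lives in a fixed finite Wiener chaos); and then invoke a Kolmogorov-type criterion for models to upgrade the moment bounds to almost-sure pathwise bounds, with the choice of $\kappa$ in \eqref{eq:condonkappa} providing the regularity budget that absorbs the $\kappa$-losses uniformly over all symbols up to $\Xi\mathcal{I}(\Xi)^M$. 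One technical point your sketch glosses over and that requires some care in \cite{bayer2017regularity}: the bound \eqref{eq:ModelDefinition2} tests $\Pi_s(\Xi\mathcal{I}(\Xi)^m)$ against $\varphi^\lambda_s$ supported on \emph{both} sides of $s$, whereas $\mathbb{W}^m(s,t)$ in \eqref{eq:DefinitionWW} is written only for $t\ge s$; for $t<s$ the integrand $(\hat W(r)-\hat W(s))^m$ is anticipative and the extension of the distribution to two-sided test functions has to be made precise before BDG applies. This is handled in the cited reference but is not addressed in your plan; apart from that, the outline is sound.
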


\subsubsection{Full regularity structure for rough volatility}
\label{sec:full-pricing-setup}
Rough volatility is specified in terms of two independent Brownians $(W,\bar W)$. Writing $\bar{\Xi}$ for the abstract symbol that corresponds to (distributional) derivative of $\bar W$ this leads to 
%
\begin{align}
\bar{\mathcal{T}} = 
 \mathcal{T} + \left\langle \{ \bar{\Xi}, \bar{\Xi} \mathcal{%
I}({\Xi}),\ldots,\bar{\Xi} \mathcal{I}(\Xi)^M \}\right\rangle .
\end{align}
Again we fix $|\bar{\Xi}| =-1/2-\kappa$ and the homogeneity of the other
symbols are defined multiplicatively as before. We extend the It\^o model $(\Pi,\Gamma)$ to this regularity structure by
defining 
\begin{equation*}
\Pi_s \bar{\Xi} \mathcal{I}(\Xi)^m  =
\left\{ t \mapsto 
 \frac{\mathrm{d} }{\mathrm{d}  t}
\left(\int_s^t \left(\hat{W}(u) - \hat{W}(s)\right)^m d\bar{W}(u) \right) \right\}
\end{equation*}
(the above integral being in It\^o sense), and \footnote{Upon setting $\Gamma_{ts}\left( \bar{\Xi} \right) = \bar{\Xi}$, the given relation is precisely implied by multiplicativity of $\Gamma$.}
\begin{equation*}
\Gamma_{ts}\left( \bar{\Xi} \mathcal{I}(\Xi)^m\right) = \bar{\Xi}
\Gamma_{ts}\left( \mathcal{I}(\Xi)^m\right).
\end{equation*}
Lemma \ref{lem:LimitingModel} extends and yields a random model (``It\^o model'') on $\bar{\mathcal{T}}$.
Hairer's reconstruction theorem (\cite{hairer2014theory}, \cite[Sec. 13]{friz2014course}) then yields a robust view on It\^o-integration and, in particular, identifies objects such as $\int \sigma ( \hat W) d \tilde W$ and asset price give as stochastic exponential of $\int \sigma ( \hat W) d \tilde W$, with $ \tilde W = \rho W + \bar \rho \bar W$ as continuous function $\Phi$ of the It\^o model.

%

\subsubsection{Rough path like formalism} \label{ss}

We note that the It\^o model $(\Pi^{\text{It\^o}},\Gamma^{\text{It\^o}})$ constructed above is in one-one correspondence with objects of the form
$$ \WW (\omega) =  \left(W, \bar W,  \int \hat W dW , \int \hat W d \bar W,  \int \hat W^2 dW, .... , \int \hat W^M d \bar W\right)  \ .
$$
The {\it model norm} on $[0,T]$, usually taken as smallest (uniform) constant implicit in the estimates  (\ref{eq:ModelDefinition2}),(\ref{eq:ModelDefinition3}),
can now simply be written as in terms of generalized H\"older norms,
\begin{equation} \label{def:RPnorm}
        \| \WW \| :=    \| W \|_{1/2-\kappa} + \| \bar W \|_{1/2-\kappa} + \Big\|  \int \hat W d \bar W \Big\|_{H +1/2 - 2\kappa}  ... + \Big\|  \int \hat W^M d \bar W \Big\|_{M(H-\kappa)+1/2 - \kappa} \ ,
\end{equation} 
where, e.g. $ \| W \|_{1/2-\kappa} = \sup_{0\le s < t \le T} |W_{s,t}|/|t-s|^{1/2-\kappa}$ and the final summand 
 is spelled out as
$$
              \sup_{0\le s < t \le T} \frac{\Big|  \int_s^t (\hat{W}_r-\hat{W}_s)^M d \bar W_r \Big|}{|t-s|^{M(H-\kappa)+1/2 - \kappa}} \ .
$$
The {\it model metric} is given by $\| \WW ; \VV \| := \| \WW - \VV \|$.  We insist that, from an rough analysis perspective, $\int \hat W^m dW$, denoted by $\mathbb{W}^{m}$ in \cite{bayer2019regularity}, is a priori supplied as analytic object, even though what we have in mind is a typical realisation of the corresponding stochastic objects: as shown in \cite{bayer2019regularity} and reviewed above (Lemma \ref{lem:LimitingModel}),  a.s. $\WW (\omega)$ has the correct regularity, i.e. $ \| \WW  (\omega)\| < \infty$ a.s.

The reconstruction theorem \cite{hairer2014theory,bayer2019regularity} shows that
It\^o integration (and much more ...) is locally Lipschitz in this metric. More precisely, for sufficiently smooth $f$,
$$
            \Big(\int f ( \hat W) d \W \Big) (\omega) = \int f ( \hat W (\omega)) d \WW(\omega) \text{ \ a.s. } 
$$
where the left-hand side is a classical It\^o integral, and the right-hand side defined (thanks to reconstruction) in a robust pathwise fashion, as locally Lipschitz function of $\WW = \WW (\omega)$  .

\subsection{Homogenous model norms and model translation} \label{app:HMN}  \label{app:mt}


From a Gaussian concentration perspective, it is preferable to work with the  {\it homogenous model norm} given by 
\begin{equation}\label{def:hom:norm}
        ||| \WW |||  :=    \| W \|_{1/2-\kappa} + \| \bar W \|_{1/2-\kappa} + \Big\|  \int \hat W d \bar W \Big\|^{1/2}_{H +1/2 - 2\kappa} ... + \Big\|  \int \hat W^M d \bar W \Big\|^{1/(M+1)}_{M(H-\kappa)+1/2 - \kappa} \ ,
\end{equation}


We also need a generalization of the translation map known from rough paths (see e.g. \cite{friz2014course} and \cite{cannizzaro2017malliavin} for the case of model translation in the context of singular SPDEs) to the rough volatility regularity structure \cite{bayer2019regularity}. Indeed, one has to be careful in ``lifting'' the meaning of ``$\W  + \h$''  to a rough path or model. Formally,
%
with $\h = (h,\bh)$,
$$
         T_{\h}  (\WW) =  \left(W+h, \bar W+ \bh, \hat W+ \hat h, \int (\hat W +  \hat h) d(W+\bh), ... \right) \ .
$$
Note that $T_{-\h} (\WW) \ne   %
         \WW - \mathbf{h} =   (W- h, \bar W- \bh, \hat W- \hat h, \int \hat W  dW  - \int \hat h d\bh, ...)$,
which appears in the model distance between $\WW$ and $\mathcal{L}(\h)= \mathbf{h}$, for $\h \in \CM$.

\medskip

The following lemma can be seen as variation of rough path results found in \cite[Sec 11.1]{friz2014course}. In particular, we take $T_\h$ as in the formal discussion above, and check below that all cross integrals involving $\h$ are (analytically) well-defined. Write $\M$ for the model space, the (complete metric space) of all $\WW$ with $\| \WW \| < \infty$, subject to algebraic constraints (generalized Chen relations), imposed by the additivity of all integrals used in defining $\WW$ over smooth $\W$, see \cite[Lem. 3.2.]{bayer2019regularity}.

%
%
\begin{lemma} \label{lem:RVMtranslation}
 (i) We have a continuous map $\H \times \M \to \M$, $(h,\MM) \mapsto T_h \MM$ such that 
$T_h\mathcal{L}(k) = \mathcal{L}(h+k)$ for all $h,k \in \H$ and $T_h \circ T_{-h}$ is the identity map on $\M$. \\
(ii) There exists $C>0$ s.t. for all $ (h,\MM) \in \H \times \M$,
\begin{equation} 
|||T_h \MM ||| \leq C |||\MM|||+ C \|\h\|_{\H} \;.
\end{equation}
\end{lemma} 
\begin{proof} 
(i) Recall $\h \in \CM \subset C^{1/2}$, so that $W- h, \bar W- \bh$ are again $(1/2-\kappa)$-H\"older. More interestingly,
$$
      \hat W - \hat h = \{ t \mapsto \hat W_t - \int_0^t K^H(t,s) \dot h_s ds \}
$$
is again $C^{H-\kappa}$ using that $K^H$ convolution not only maps $C^{-1/2-\kappa} \to C^{H-\kappa}$ (which explains $\hat W \in C^{H-\kappa}$), but also $L^2 = W^{0,2 } \to  W^{1/2+H,2} \subset C^H$, so that $\hat h \in C^H$.

We now move to higher level objects, such as $  \int \hat W^k  dW $, with translate given by 
$$
       \int (\hat W + \hat h)^k d(W + h) =  \sum_{j=0}^k {k\choose j} \left( \int \hat W^j \hat{h}^{k-j}  dW + \int \hat W^j \hat{h}^{k-j}  dh \right).
$$
We need to check a certain H\"older type regularity on the arising integrals. Let us explain for example how to prove
$$\int_s^t \hat h_{s,r}^k \hat{W}_{s,r}^l dW_r \lesssim \left|t-s\right|^{(k+l)H + 1/2 -(k+l+1) \kappa}$$ 
for positive integers $k$ and $l$ and small fixed $\kappa$.
%

We first prove that 
\begin{equation} \label{eq:pvar}
\left\|\int_s^\cdot \hat{W}_{s,r}^l dW_r \right\|_{p-var,[s,t]} \lesssim |t-s|^{l H + \frac 1 2 - (l+1) \kappa}
\end{equation}
for $p \geq (1/2-\kappa)^{-1}$.

Indeed, for a given partition $\{t_i\}$ of $[s,t]$, one has
\begin{align*}
\left|\int_{t_i}^{t_{i+1}} \hat{W}_{s,r}^l dW_r \right|^p &\lesssim \left|\int_{t_i}^{t_{i+1}} \hat{W}_{t_i,r}^l dW_r \right|^p + \left(\sup_{r \in [s,t]} \left|\hat{W}_{s,r}^l- \hat{W}_{t_i,r}^l\right|\right)^p \left| \int_{t_i}^{t_{i+1}} dW_r\right|^p \\
&\lesssim \left|t_{i+1}-t_i\right|^{p(lH+\frac{1}{2}-(l+1)\kappa)} + \left|t-s\right|^{p(lH-l\kappa)}  \left|t_{i+1}-t_i\right|^{p(\frac{1}{2}-\kappa)} 
\end{align*}
and summing over $i$ we obtain \eqref{eq:pvar}. We then use the fact that, by the Besov variation embedding of \cite{friz2006variation}, there exists $q$ with $q+p>1$ such that 
$$ \| \hat h  \|_{q-{\rm var};[s,t]} \lesssim |t-s|^{H-\kappa}  \| h  \|_{H^1([0,T])}$$
so that by Young's inequality it holds that 
$$
              \left| \int_s^t \hat h_{s,r}^k \hat{W}_{s,r}^l dW_r \right| \lesssim \| \hat h^k_{s,\cdot}  \|_{q-{\rm var};[s,t]} \left\| \int_s^\cdot \hat{W}_{s,r}^l dW_r \right\|_{p-{\rm var},[s,t]} \lesssim \left|t-s\right|^{(k+l)H +\frac{1}{2} - (k+1)\kappa}. $$

Continuity on model distance of the translation operator is an easy consequence of the above analysis. 

(ii) is straightforward by keeping track of which powers of $\|h\|$ and $||| \WW |||$ appear in the computation, for instance in the case studied above one has 
$$\left(\frac{\left| \int_s^t \hat h_{s,r}^k \hat{W}_{s,r}^l dW_r \right|}{\left|t-s\right|^{(k+l)H +\frac{1}{2} - (k+1)\kappa}}\right)^{\frac{1}{k+l+1}} \lesssim \| \h \|_{\CM}^{\frac{k}{k+l+1}} |||\WW|||^{\frac{l}{k+l+1}} \lesssim  \| \h \|_{\CM}+ |||\WW|||.$$

%
\end{proof}

%

\begin{lemma}[Robust Cameron--Martin shift]
Let $\WW (\omega)$ be the It\^o model built in Section \ref{sec:full-pricing-setup} above two-dimensional Brownian motion. Then, for all $\h \in \H$, with probability one, $\WW (\omega + \h) = T_{\h} \WW (\omega)$.
\end{lemma} 

\begin{proof} 
This is similar to Theorem \cite[Thm 11.5]{friz2014course} in the rough path setting or \cite{cannizzaro2017malliavin} in the setting of the gPAM model.
\end{proof}

%
%

\begin{proposition} \label{prop:B4}Write as before $\mathcal{L}(\h) = \mathbf{h}$. For all $R>0$, 
$$ \left\{ \mathbf{h}: \;\; \|h\|_{\H} \leq R \right\} \mbox{ is compact in } \M. $$ 
\end{proposition}
\begin{proof} This is clear from the interpretation as good rate function for $\WW (\omega)$, 
cf. \cite{bayer2019regularity}.
\end{proof}

\subsection{Renormalized Wong-Zakai type result} \label{app:Renorm}

Let $W^\eta$ be the mollification of $W$ obtained by convolutions with a mollifier function at scale $\eta >0$ and similar for $\bar W$. There is a canonical lift to a model $\WW^\eta$, which does {\it not} converge when $H <1/2$
(cf. \cite{bayer2019regularity}) and converges to the Stratonovich Brownian rough path when $H=1/2$. Increments $\WW^\eta_{s,t}$ take values in certain (truncated) Hopf algebra, which - as vector space - we can identify with $\R^{M(H)}$ for some integer $M(H)$. 

\begin{theorem} There exists a family of linear maps on $M^\eta: \R^{M(H)} \to \R^{M(H)}$ so that $\hat \WW$, with pointwise definition
$$
             \hat \WW^\eta_{s,t} := M^\eta \WW^\eta_{s,t}
$$
converges (in probability and model topogy) to the It\^o model $\WW$. Moreover the action of $M^\eta$ commutes with the translation operator from Section \ref{app:mt}.
\end{theorem}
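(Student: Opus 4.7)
The plan is to construct $M^\eta$ explicitly as a triangular (upper-unipotent) map on the vector space identification of the model increments, chosen precisely to cancel the divergent expectations of the singular products. Concretely, for each $m\ge 1$ I would define
\[
  M^\eta \big( \Xi\,\mathcal I(\Xi)^m \big) \;=\; \Xi\,\mathcal I(\Xi)^m \;-\; c_m^\eta \,\mathcal I(\Xi)^{m-1},
\]
with $c_m^\eta$ chosen so that $\mathbb{E}\big[\int_s^t (\hat W^\eta_{s,r})^m\,dW^\eta_r - c_m^\eta\int_s^t (\hat W^\eta_{s,r})^{m-1}\,dr\big]\to 0$ (the second integral pulled from the $c_m^\eta$ subtraction by the structure of the regularity structure, via the standard BPHZ-style extraction of the ``contracted'' contribution). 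Symbols involving $\bar\Xi$ require \emph{no} renormalization, since $\bar W$ is independent of $W$ (and hence of $\hat W$), so the corresponding It\^o integrals converge directly; $M^\eta$ acts as the identity on those components. Convergence of $\hat\WW^\eta=M^\eta \WW^\eta$ to the It\^o model $\WW$ in probability and in the model topology is then the content of the analysis in \cite{bayer2017regularity}, to which I would refer, the renormalized second-moment estimates on $\hat\WW^\eta_{s,t}$ combined with a Kolmogorov criterion for models yielding the upgrade from pointwise to topological convergence.

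Once $M^\eta$ is set up in this form, the commutation $M^\eta T_\h = T_\h M^\eta$ reduces to a pointwise algebraic check on each generator. I would exploit the fact that $T_\h$, as described in Lemma \ref{lem:RVMtranslation}, acts polynomially: on $\int \hat W^m\,dW$ the translated model is an explicit sum
\[
  \sum_{j=0}^m \binom{m}{j}\Big( \int \hat W^j\,\hat h^{m-j}\,dW \;+\; \int \hat W^j\,\hat h^{m-j}\,dh\Big),
\]
and analogously for the $d\bar W$ components. The key observation is that $\h$ is deterministic and Cameron--Martin, hence smooth enough that all mixed terms involving at least one factor of $\hat h$ or $dh$ require \emph{no} renormalization; they are classical integrals that converge as $\eta\to 0$ without correction. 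Therefore, in either ordering of $T_\h$ and $M^\eta$, the subtracted constants $c_m^\eta$ appear only in front of the pure-$W$ iterated integral terms, and they are the same constants in both orderings because $c_m^\eta$ depends only on the covariance of the underlying noise and not on $\h$.

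The main obstacle, as I see it, is not the commutation itself (which follows from the above transparency argument once the correct form of $M^\eta$ is identified), but rather verifying that the specific combinatorial form of $M^\eta$ chosen at the symbol level is simultaneously (i) sufficient to produce convergence of $\hat\WW^\eta$ to the It\^o limit at every homogeneity $\le M(H-\kappa)+1/2-\kappa$, and (ii) compatible with the polynomial expansion of $T_\h$ in the sense that the $h$-dependent cross terms do not hide any additional divergences. Step (i) is handled by \cite{bayer2017regularity}; for step (ii) I would verify, level by level in $m$, that each cross term $\int \hat W^j \hat h^{m-j}\,dW$ with $j<m$ is a genuine (non-renormalized) It\^o integral with finite-variance increments of the correct H\"older order, using the Besov--variation embedding and Young's inequality exactly as in the proof of Lemma \ref{lem:RVMtranslation}. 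Putting the two ingredients together then yields both convergence and the commutation identity $M^\eta T_\h \WW^\eta = T_\h M^\eta \WW^\eta$ on the nose, which passes to the limit as $\eta\downarrow 0$ by the continuity of $T_\h$ in the model topology established in Lemma \ref{lem:RVMtranslation}.
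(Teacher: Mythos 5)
Your proposal is aligned with the paper's proof in its overall structure: the paper likewise treats the convergence part as a citation to \cite[Thm 3.14]{bayer2017regularity} and declares the commutation ``easy to check by hand'' while appealing to the identification of renormalization with higher-order translation (and hence the abelian renormalization group) from \cite{bruned2017rough}. Your observation that only the $\Xi$-family of symbols needs correction (the $\bar\Xi$-components being untouched by independence of $\bar W$ from $W$) and that the counterterm constants are $\h$-independent because Cameron--Martin shifts do not change the covariance of the underlying Gaussian are both correct and essential.

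There is, however, a genuine gap in your commutation argument. The claim that ``all mixed terms involving at least one factor of $\hat h$ or $dh$ require no renormalization'' is false for the $dW$-mixed terms. For $1\le j<m$ the quantity $\int_s^t (\hat W^\eta_{s,r})^j \hat h_{s,r}^{m-j}\,dW^\eta_r$ is, in the translated model, the Young integral $\int \hat h^{m-j}\,dY^{j,\eta}$ with $Y^{j,\eta}_r=\int_s^r(\hat W^\eta_{s,u})^j\,dW^\eta_u$; since $\mathbb E[Y^{j,\eta}]$ diverges as $\eta\to 0$ (this is precisely the divergence that necessitates $M^\eta$ in the first place), these mixed terms do \emph{not} converge without correction. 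What actually makes the commutation work is not that the subtractions appear only in front of the pure-$W$ term, but that they appear in front of \emph{all} $j\ge 1$ terms and recombine: applying $T_\h$ after $M^\eta$, the corrections are $j\,C^\eta\int \hat h^{m-j}(\hat W^\eta)^{j-1}\,dr$ for each $j$, and the binomial identity $\sum_{j\ge 1}\binom{m}{j} j\, \hat h^{m-j}\hat W^{j-1} = m(\hat W+\hat h)^{m-1}$ shows their sum equals $m\,C^\eta\int(\hat W^\eta+\hat h)^{m-1}\,dr$, which is exactly the counterterm $M^\eta$ subtracts from the translated model's $m$-th component. So either this concrete binomial computation, or (as the paper suggests) the algebraic route via \cite{bruned2017rough}, is needed; the ``mixed terms are harmless'' shortcut does not go through.
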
 
\begin{proof} 
This is a (non-quantitative) formulation of \cite[Thm 3.14]{bayer2019regularity}. The commutation relation is easy to check by hand and fully consistent with \cite{bruned2017rough}, which identifies (in a general branched rough path context) renormalization with higher order translation, with resulting {\it abelian} renormalization group.
\end{proof} 

%

\subsection{``Stochastic'' Taylor remainder estimates via model norms} \label{sec:RoughRemainder} 

We need a variation of the rough path results \cite{aida2007semi, inahama2007asymptotic, inahama2008laplace, inahama2010stochastic} in the 
setting of regularity structures for rough volatility, as recalled in section \ref{subsec:rsrv}. Recall from \cite{bayer2019regularity} that there is a well-defined dilation $\delta_\eps$ acting on the relevant models. Formally, it is obtained by replacing each occurance of $W, \bar W, \hat W$ with $\eps$ times that quantity. As a consequence, dilation works well with homogenous model norms,
$$
 ||| \delta_\eps \WW ||| = \eps ||| \WW ||| \ .
$$

The following theorem is purely deterministic. 

\begin{theorem}[Stochastic Taylor-like expansion] \label{thm:STLE}
Let $f$ be a (sufficiently
) smooth function. Fix $\h \in H^1$ and $ \eps >0$. If $\WW$ is a model (as in the previous section), then so is $T_{\h} ( \delta_{\eps} \WW)$. The pathwise ``rough/model'' integral 
$$
\Psi(\eps) :=  \int_{0}^1 f \left( \eps \hat W_t + \hat{h}_t \right) 
d (T_{\h} ( \delta_{\eps} \WW))_t \ 
$$
is well-defined, continuously differentiable in $\eps$ and we have the remainder estimate
$$  | \Psi(\eps) - \Psi(0) - \eps \Psi'(0) - (1/2) \eps^2 \Psi''(0) | = O (\eps^3 ||| \WW |||^3) \ , $$ valid on bounded sets of $\eps ||| \WW |||$. 
\end{theorem}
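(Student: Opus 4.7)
The plan is to Taylor-expand both the integrand $f(\eps \hat W + \h)$ and the integrator $T_\h(\delta_\eps \WW)$ in $\eps$, combine them via the reconstruction-based integration on model space, and bound the residual.

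\emph{Well-posedness.} By Lemma \ref{lem:RVMtranslation}(i)-(ii) together with the identity $|||\delta_\eps \WW||| = \eps |||\WW|||$, the object $T_\h(\delta_\eps \WW)$ is a genuine model with $|||T_\h(\delta_\eps \WW)||| \lesssim \eps |||\WW||| + \|\h\|_\CM$. Hence $\Psi(\eps)$ is well-defined via the reconstruction theorem (Section \ref{ss}, \cite{bayer2017regularity}) as a locally Lipschitz functional of the model. Since $f$ is smooth and each component of $T_\h(\delta_\eps \WW)$ depends polynomially on $\eps$, the map $\eps \mapsto \Psi(\eps)$ is in fact smooth, which yields the $C^1$ claim.

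\emph{Expansion.} Taylor-expanding the integrand,
\begin{equation*}
f(\eps \hat W_t + \h_t) = f(\h_t) + \eps \hat W_t f'(\h_t) + \tfrac{1}{2}\eps^2 \hat W_t^2 f''(\h_t) + O(\eps^3 |\hat W_t|^3),
\end{equation*}
and expanding each integrator component by Newton's binomial formula, e.g.
\begin{equation*}
\int (\hat h + \eps \hat W)^j d(h + \eps W) = \sum_{\ell=0}^{j} \binom{j}{\ell} \eps^\ell \Bigl(\int \hat h^{j-\ell} \hat W^\ell dh + \eps \int \hat h^{j-\ell} \hat W^\ell dW \Bigr),
\end{equation*}
and substituting both into $\Psi(\eps)$, the $\eps^0, \eps^1, \eps^2$ contributions identify $\Psi(0), \Psi'(0), \tfrac{1}{2}\Psi''(0)$ respectively; by construction each is a polynomial of degree $\le 3$ in the components of $\WW$, hence continuous on model space.

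\emph{Remainder.} Every term not absorbed above carries a factor $\eps^N$ with $N \ge 3$, multiplying either a monomial in the iterated integrals of $\WW$ of matching total homogeneity (hence $\lesssim |||\WW|||^N$ by the reconstruction theorem), or a Taylor remainder involving $\hat W^3 f'''(\cdot)$ (also $\lesssim |||\WW|||^3$ after pairing with the integrator). Summing yields the bound $|R(\eps)| \lesssim \eps^3 |||\WW|||^3$, uniformly on bounded sets of $\eps |||\WW|||$. The main obstacle is the bookkeeping across all iterated-integral levels: each $\int \hat W^j dW$ term in $\delta_\eps \WW$ carries $\eps^{j+1}$ while the $k$-th Taylor coefficient of $f$ contributes $\eps^k \hat W^k$, so one must systematically enumerate the $(j,k)$-pairs and confirm that the only contributions of $\eps$-order $\le 2$ are exactly those assembled into $\Psi(0), \Psi'(0), \Psi''(0)$. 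That this works cleanly is ultimately a reflection of the multiplicativity of the homogeneities in the regularity structure together with the robust continuity estimates of Lemma \ref{lem:RVMtranslation}.
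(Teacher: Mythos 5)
Your approach is genuinely different from the paper's. The paper's (sketch) proof differentiates $\Psi$ three times in $\eps$, observes that the $\eps$-derivatives have explicit forms as reconstruction integrals involving at most cubic powers of $\hat W$ paired against $d\WW$ or $d(T_\h \delta_\eps \WW)$, and then deduces the remainder estimate from Taylor's theorem with integral (Lagrange) remainder once one checks $\Psi'''(\eps) = O(|||\WW|||^3)$ uniformly over $\eps$ on bounded sets of $\eps|||\WW|||$. You instead perform a direct algebraic expansion: Taylor-expand $f$ to second order, binomially expand each level of the translated--dilated model, collect coefficients of $\eps^0,\eps^1,\eps^2$, and absorb every leftover $\eps^N|||\WW|||^N$ ($N\ge 3$) term using $\eps^N|||\WW|||^N = (\eps|||\WW|||)^{N-3}\eps^3|||\WW|||^3$. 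Both strategies are at the same level of sketchiness (neither fully spells out how the reconstruction handles the lifted integrand), and both hinge on the multiplicativity of homogeneities and the bounded-set hypothesis on $\eps|||\WW|||$. The paper's route is tighter bookkeeping-wise --- one only has to estimate a single object, $\Psi'''$, rather than enumerate all $(j,k)$-pairs across the $M$ model levels --- whereas your route makes the algebraic structure of the expansion more explicit and directly exhibits $\Psi(0),\Psi'(0),\Psi''(0)$ as identifiable coefficients.

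One claim in your well-posedness step is imprecise: you write that each component of $T_\h(\delta_\eps\WW)$ depends polynomially on $\eps$ and deduce that $\Psi$ is smooth. The model components are indeed polynomial in $\eps$, but the integrand $f(\eps\hat W + \h)$ is not (for non-polynomial $f$), and the reconstruction operation is not literally a finite polynomial pairing; smoothness of $\eps\mapsto\Psi(\eps)$ additionally uses that reconstruction is a smooth map of the data $(f$-coefficients, model$)$, which is implicit in both your proposal and the paper's sketch. Also be a bit careful with the Taylor remainder for $f$: in the form you quote, the $O(\eps^3|\hat W_t|^3)$ bound uses boundedness of $f'''$ on the range of $\eps\hat W + \h$, which is guaranteed exactly by the standing restriction to bounded $\eps|||\WW|||$ (and boundedness of $\h$); it is worth saying so, since that is where the hypothesis of the theorem is invoked.
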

\begin{proof}  Because of the similarities with e.g. \cite{inahama2010stochastic} we shall be brief. Formal (but justifiable) differentiation with respect to $\eps$ yields
$$\Psi'(\eps) = \int_0^1 f(\eps \hat{W}_t + \hat{h}_t) d \WW_t  +  f'(\eps \hat{W}_t + \hat{h}_t) \hat{W}_t d( T_\h \delta_\eps \WW_t)$$
and 
$$\Psi''(\eps) = \int_0^1 2 f'(\eps \hat{W}_t + \hat{h}_t) \hat{W}_t d \WW_t  +  f''(\eps \hat{W}_t + \hat{h}_t) \hat{W}_t^2 d( T_\h \delta_\eps \WW_t)$$
and similarly for $\Psi'''$ which we need not spell out.  All integrals here are defined by Hairer's reconstruction (rough integration in case $H=1/2$) i.e. as limit of suitable Riemann-sum approximations involving the "elementary" objects in the model e.g. $\int \hat{W}^k dW$; one also needs to use the regularity of $\h$ as in the proof of Lemma \ref{lem:RVMtranslation}. One then just needs to check that $\Psi'''(\eps) = O (||| \WW |||^3)$, uniformly over $\eps \in [0,1]$, and here one uses precisely the assumption that $ \eps ||| \WW ||| $ remains bounded. 

\end{proof}


\section{Local analysis around the minimizer}

Recall that we are interested in $\Phi: (\eps, \MM) \mapsto \Phi^\eps(\MM)$, from $(0,1]\times  \M \to C[0,1]$. Here we restrict to $\eps = 0$ and $\CM \subset \MM$, where elements $\h \in \CM$ are identified with their canonical lift $\mathcal{L}(\h) = \hh \in \M$. Recall that compactness of $\CM$-level sets in $\MM$ is a standing assumption (Appendix A, see also Proposition \ref{prop:B4}). This entails
a useful regularity property of $\Phi^0$ restricted to Cameron--Martin elements.
\begin{lemma} \label{lem:WeakCont} In the setting of Appendix \ref{sec:Appendix3A},
 let $\Phi^0 $ be a continuous map on $\mathcal{M}$. Then
 $$
      \Phi (\h) :=  \Phi^0 (\hh),
 $$
 seen as map defined on $\H$, in mildly abusive notation, is weakly continuous on bounded sets.
\end{lemma}
\begin{proof}
Let $\h^n$ be a bounded sequence with weak limit $\h \in \H$. We show $\mathcal{L}(\h^n) = \hh^n \to \mathcal{L}(\h) = \hh$ in $\M$. Indeed, by compactness of $\{ \|\h\|_H \leq R \}$ in $\M$, we have along a subsequence $\hh^n \to \kk = \mathcal{L}(\k) $ in $\M$, for some $\k \in \H$. By the structure of $\M$ put forward in Appendix \ref{sec:Appendix3A}, with first level $\C^{\alpha}$ , we have (strong) convergence of $\h^n$ to $\k$ in $\C^{\alpha}$  ($\supset H$). 
 This implies that $\k = \h$, so that any subsequence of $\hh^n$ admits a further subsequence converging to $\hh$, and the claim follows. Weak continuity of $\Phi$ on $\H$ is then immediate.
\end{proof}

From here on we consider $\Phi_1: \H \to \R$, and more specifically $\mathcal{K}^{x} \subset \CM$, the space of $x$-{\it admissible controls}, i.e. elements $\h \in H^1: \Phi_1 (\h) = x$.
 Whenever $\mathcal{K}^{x}$ is non-empty the energy
\begin{equation*}
\I\left(x\right) = \inf_{\h \in \CM}\left\{ \tfrac{1}{2}\int_{0}^{1}|\dot{\h}|^2 dt :\Phi_1  (\h)  =x\right\} =  \inf_{{\rm h} \in \mathcal{K}^{x}} \tfrac{1}{2} \| \h \|^2_{\CM}
\end{equation*}
is finite.

%

\subsection{First order optimality} \label{app:FOO}

In this section we make the standing assumption that $\h \mapsto  \Phi _{1} (\h)$ is (Fr\'echet) $C^1$, and consider a minimizing $x$-admissible control $\h^x$ such that
$$
                      D \Phi _{1} ({\h^x}) \in L ( \CM \to \R ) \ 
$$
is surjective. This entails (cf. \cite[p.25]{bismut1984large}) that $\mathcal{K}^{x}$ to be a Hilbert manifold near $\h%
^{x}$ with tangent space
\begin{equation} \label{equ:tan}
\mathfrak{Ker}D\Phi _{1}\left( \h^x\right) =T_{
\h^{x}}\mathcal{K}^{x}=\left\{ \h\in \CM:
\left\langle D\Phi _{1}\left( \h%
^{x}\right) ,\h\right\rangle
=0\right\} =: {\TH} \ .
\end{equation}%

\begin{lemma}[First order optimality, Lagrange multiplier] \label{lem:appendix-2pre} For each such optimal control $\h^x$ there exists a unique $q^x=q(\h^x) \in \R$ (think: tangent space at $x \in \R$) such that 
\begin{equation*}
\h^x = D \Phi_1 (\h^x)^*q^x
\end{equation*}
where we recall that $D \Phi_1 (\h^x) : \H \to \R$ so that its adjoint maps $\R \to \CM$ where we identify $\R^*, \CM^*$ with $\R,\CM$ respectively. 
 \end{lemma}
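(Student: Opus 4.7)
The plan is to apply the standard Lagrange multiplier argument in the Hilbert setting, exploiting that (\ref{equ:tan}) already identifies the tangent space $\TH = \mathfrak{Ker}\, D\Phi_1(\h^x)$ as a closed subspace of $\CM$ of codimension one.

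First I would record the variational inequality coming from optimality. Since $\h^x$ minimizes the smooth functional $I(\h) = \tfrac{1}{2}\|\h\|^2_{\CM}$ on the Hilbert manifold $\mathcal{K}^x$, its derivative must annihilate the tangent space, i.e.\ $\langle \h^x, \k\rangle_{\CM} = 0$ for every $\k \in \TH$. Identifying $\CM$ with its dual via the Riesz map (the inner product is the $H^1$ inner product on Cameron--Martin space), this says precisely that $\h^x \in (\TH)^{\perp}$.

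Next I would invoke the closed range theorem. Because $D\Phi_1(\h^x): \CM \to \R$ is assumed surjective, its range $\R$ is (trivially) closed, and thus
\[
(\TH)^{\perp} \;=\; \bigl(\mathfrak{Ker}\, D\Phi_1(\h^x)\bigr)^{\perp} \;=\; \mathrm{Range}\bigl(D\Phi_1(\h^x)^*\bigr).
\]
Combining with the previous step yields the existence of $q^x \in \R$ with $\h^x = D\Phi_1(\h^x)^* q^x$. For uniqueness, note that surjectivity of $D\Phi_1(\h^x)$ forces injectivity of its adjoint $D\Phi_1(\h^x)^*: \R \to \CM$, so $q^x$ is determined by $\h^x$.

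There is no real obstacle here; the only mild subtlety is to verify that the first order condition $\h^x \perp \TH$ is legitimate, which uses precisely the fact that (by surjectivity of $D\Phi_1(\h^x)$) $\mathcal{K}^x$ is locally a codimension-one Hilbert submanifold near $\h^x$ with tangent space exactly $\TH$, as recalled in (\ref{equ:tan}). Everything else is elementary Hilbert space geometry.
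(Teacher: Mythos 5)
Your proof is correct and takes essentially the same approach as the paper: both derive $\h^x \in \TH^\perp$ from the first-order optimality condition ($\langle \h^x,\k\rangle=0$ for $\k\in\TH$) and then identify $\TH^\perp$ with the range of $D\Phi_1(\h^x)^*$. The only stylistic difference is that you invoke the closed range theorem explicitly, whereas the paper simply observes that $D\Phi_1(\h^x)^*\colon\R\to\TH^\perp$ is injective and tacitly uses that both spaces are one-dimensional to obtain the bijection.
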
 

\begin{proof} 
The map $D\Phi _{1}\left( \h^x\right)^*: \R \to \TH^\perp$ is one-one. On the other hand, because $\h^x$ is a minimizer, the differential of $\h \mapsto \frac{1}{2}\|\h\|_{\H}^2$ at $\h^x$ must be zero on $ \TH$, i.e.
$$
           \left\langle \h^x  ,\k \right\rangle = 0 \text{ for all $\k \in \TH$} \ 
$$
so that $\h^x \in \TH^\perp$. We conclude that there exists a (unique) value $q^x \in \R$ s.t. $D\Phi _{1}\left( \h^x\right)^*q^x = \h^x$.
\end{proof} 

Whenever the energy $\I$ is $C^1$ near $x$, we can see that $q^x = \I'(x)$. 

\begin{lemma}[First order optimality and energy] \label{lem:appendix-2} Assume $\I$ is $C^1$ near $x$. \\
 (i) any optimal control $\h^x \in \mathcal{K}^{x}$ is a critical point of 
\begin{equation*}
\h \mapsto - \I\left( \Phi _{1}(\h)\right) +\frac{1}{2}\left\Vert \h\right\Vert _{\CM}^{2}
\end{equation*}
so that for all $\h \in \CM$, 
\begin{equation} \label{equ:FOOnew}
 \left\langle \h^x,%
\h\right\rangle = \I^{\prime }\left( x\right) \left\langle D\Phi _{1}\left( \h%
^{x}\right) ,\h\right\rangle \ .
\end{equation}
(ii) with $g_1 (\omega) = G_1 (\WW) = \partial_\eps |_{\eps = 0 } \Phi _{1} (\h^x + \eps \W)$, we have
\begin{equation} \label{equ:relateStochIntwithIprime}
\int_0^1 \dot{\h}^{x}d\W = \I^{\prime }\left( x\right) g_{1} \ .
\end{equation}
\end{lemma}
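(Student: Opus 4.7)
The plan is to prove (i) by a direct ``envelope'' argument and then pass from deterministic to stochastic integrals to get (ii).

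For (i), the key observation is that the functional $F(\h) := -\I(\Phi_1(\h)) + \tfrac{1}{2}\|\h\|_{\CM}^2$ is everywhere nonnegative on a neighbourhood of $\h^x$, and vanishes at $\h^x$. Indeed, since $D\Phi_1(\h^x)$ is surjective by (A5b), the implicit function theorem yields a neighbourhood $U \subset \CM$ of $\h^x$ such that $\Phi_1(U)$ is an open neighbourhood of $x$; on $U$ we may write $y=\Phi_1(\h)$ and observe that $\h \in \mathcal{K}^{y}$, so by definition of the infimum
\[
    \I(y) \le \tfrac{1}{2}\|\h\|_{\CM}^2,
\]
with equality at $\h=\h^x$. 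Hence $\h^x$ is a local minimizer of $F$, and since $\I$ is $C^1$ near $x$ and $\Phi_1$ is $C^1$ on $\CM$ by (A4a), $F$ is differentiable at $\h^x$ with $DF(\h^x)=0$. Expanding this vanishing differential in direction $\h$ gives exactly \eqref{equ:FOOnew}.

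For (ii), note that \eqref{equ:FOOnew} is an equality of continuous linear functionals on the Hilbert space $\CM$. Both sides may be rewritten as Wiener-type integrals: for $\k \in \CM$ one has $\langle \h^x, \k\rangle_{\CM} = \int_0^1 \dot{\h}^x_t \dot{\k}_t\,dt$, and identifying $D\Phi_1(\h^x) \in \CM^* \simeq \CM$ with a Cameron--Martin element (as done in Lemma \ref{lem:gv}(i)), we may write $\langle D\Phi_1(\h^x), \k\rangle = \int_0^1 \dot{v}^x_t \dot{\k}_t\,dt$ for a suitable $\v^x \in \CM$. Substituting a finite-dimensional Gaussian approximation $\k=\W^\eta$ of Brownian noise (say via mollification) into \eqref{equ:FOOnew}, the two sides converge in $L^2(\Omega)$ to the Wiener integrals $\int_0^1 \dot{\h}^x d\W$ and $\I'(x)\int_0^1 \dot{v}^x d\W$ respectively. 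The latter coincides with $\I'(x) g_1$ by Lemma \ref{lem:gv}(i), which yields \eqref{equ:relateStochIntwithIprime}.

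There is no real technical obstacle: part (i) is just a classical Lagrange-multiplier computation made transparent by the envelope identity $F\ge 0$ with equality at $\h^x$, and part (ii) is just continuity of the Wiener integral with respect to the Cameron--Martin norm. The only point to be slightly careful about is that $\I'(x)$ is correctly identified as the Lagrange multiplier $q^x$ from Lemma \ref{lem:appendix-2pre}, which is automatic here because the $C^1$ assumption on $\I$ turns the derivative of the value function into the derivative of the constraint parameter.
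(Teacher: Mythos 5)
Your proof is correct and follows essentially the same route as the paper: the envelope inequality $F(\h)\ge 0$ with equality at $\h^x$ is exactly the paper's observation $u(t)\ge 0$, $u(0)=0$, and part (ii) is the paper's one-line appeal to Lemma~\ref{lem:gv}(i) unpacked via mollification. The only (harmless) extra step is your invocation of surjectivity (A5b) and the implicit function theorem in part (i); this is not needed, since $\h\in\mathcal{K}^{\Phi_1(\h)}$ holds trivially for every $\h$ and the image $\Phi_1(\h)$ stays near $x$ for $\h$ near $\h^x$ by mere continuity, which is all that is required to use the $C^1$ regularity of $\I$.
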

\begin{proof}
Write $\Phi^{\h} \equiv \Phi(\h)$. For fixed $\h \in \CM$, define
\begin{equation*}
u\left( t\right) :=-\I\left( \Phi _{1}^{\h^x+t\h%
}\right) +\frac{1}{2}\left\Vert \h^x+t\h\right\Vert _{%
\CM}^{2}\geq 0
\end{equation*}%
with equality at $t=0$ (since $x=\Phi _{1}^{\h^x}$ and $%
\I\left( x\right) =\frac{1}{2}\left\Vert \h^x\right\Vert _{%
\CM}^{2}$) and non-negativity for all $t$ because $\h%
^{x}+t\h$ is an admissible control for reaching $\tilde{x}=\Phi
_{1}^{\h^x+t\h}$ (so that $\I\left( \tilde{x}\right)
=\inf \left\{ ...\right\} \leq \frac{1}{2}\left\Vert \h^x+t%
\h\right\Vert _{\CM}^{2}$.)\newline
By assumption, $\I$ is $C^1$ in a neighbourhood of $x$ and since $\Phi$ is also (Fr\'echet) $C^1$ it follows that $u=u(t)$ is $C^1$ near $t=0$. Since $u(0)=0$ and $u \ge 0$ we must have $\dot{u}\left( 0\right) =0$. 
In other words, $%
\h^x$ is a critical point for 
\begin{equation*}
\CM\ni \h\mapsto -\I\left( \Phi _{1}^{\h%
}\right) +\frac{1}{2}\left\Vert \h\right\Vert _{\CM}^{2}.
\end{equation*}%
The functional derivative of this map at $\h^x$ must
hence be zero. In particular, for all $\h \in \CM$, 
\begin{eqnarray*}
0 &\equiv &-\I^{\prime }\left( \Phi _{1}^{\h^x}\right)
\left\langle D\Phi _{1}\left( \h^x\right) ,\h%
\right\rangle +\left\langle \h^x,\h\right\rangle \\
&=&-\I^{\prime }\left( x\right) \left\langle D\Phi _{1}\left( \h%
^{x}\right) ,\h\right\rangle +\left\langle \h^x,%
\h\right\rangle .
\end{eqnarray*}%
{
Point (ii) is then a consequence of (i) and Lemma \ref{lem:gv}.
}

\end{proof}

\begin{remark}[Tangent space of admissible controls at $\h^x$] Combination of (\ref{equ:tan}) with (\ref{equ:FOOnew}) shows that
\begin{equation}\label{def:H0}
     T_{\h^{x}}\mathcal{K}^{x} = \{ \h^x \}^\perp =: \TH \, .
\end{equation}


\end{remark}

\subsection{Non-degeneracy} \label{app:SOO}

To this, recall that $\h^x$ was (by assumption) an energy minimizer which led to the first order 
optimality condition (\ref{equ:FOOnew}). As in calculus, being a minimizer tells us something about the
sign of the second derivative. To this end, consider the Hessian of \[I : \h \mapsto \frac{1}{2}\|\h\|_{\CM}^2\]
at $\h^x \in \mathcal{K}^{x} \subset \H$ as quadratic form on as a quadratic form on $H_0$. 
For every $\k \in \H_0 = T_{\h^{x}}\mathcal{K}^{x}$ we can find a $\mathcal{K}^{x}$-valued $\C^1$-curve $\varepsilon \mapsto \kappa^\varepsilon$, defined near $\varepsilon=0$, s.t.
$\kappa^0=\h^x$ and $\tfrac{d \kappa^\varepsilon}{d \varepsilon} _{| \eps=0} = \k$, and
$$ I''(\h^x)[\k,\k] = \frac{d^2}{d \varepsilon^2}_{|\epsilon=0} I(\kappa^\varepsilon)= \lim_{\varepsilon \to 0} \frac{1}{\varepsilon^2} \left(\|\kappa^\eps\|_{\CM}^2 -\|\h^x\|_{\CM}^2\right) \ge 0,$$
justified by criticality, i.e. $\frac{d}{d \varepsilon}_{|\epsilon=0} I(\kappa^\varepsilon)= \langle \h^x, \k \rangle = 0$, and then minimality of $\h^x$. 
For an explicit expression of $I'' (\h^x)$ we introduce $\A=\A^x$, a bilinear operator on $\H$ given by
$$\A[ \k , \l] = D^2\Phi_1(\h^x)\left[P_{\h^x}^\perp \k, P_{\h^x}^\perp \l\right],$$
where $P_{\h^x}^\perp$ is the projection on $\TH = \{ \h^x \}^\perp$. 
In other words, like in differential geometry, the Hessian $\A=\A^x$ is a quadratic form on the tangent-space $T_{\h^{x}}\mathcal{K}^{x} = \TH$.

We then have a more explicit description for  $I''(\h^x)$ in terms of $\A$, cf. assumption (A5c).

\begin{lemma} \label{lem:HessianExplicit} 
Let $\k \in \TH = T_{\h^x} \mathcal{K}^x$, then one has
$$I''(\h^x)[\k,\k] = \|\k\|^2 - q^x A[\k,\k]$$
where $q^x$ is given by Lemma \ref{lem:appendix-2pre}, and equals if $\Lambda'(x)$ when $\Lambda$ is $C^1$ at $x$. 
\end{lemma}

%

\begin{proof}
Fix $\k \in H'$ and $\kappa^\varepsilon \in  \mathcal{K}^x$ with 
$$\kappa^\varepsilon = \h^x + \eps \k +  r^\varepsilon$$
with $\|r^\eps\| = o(\eps)$. Then the constraint $\Phi(\kappa^\varepsilon) = x$ implies by Taylor expansion at order $2$ that
$$D\Phi(\h^x)[r^\varepsilon] + \eps^2 \frac{1}{2} D^2 \Phi_1(\h^x)[\k,\k]=o(\eps^2)$$
and by Lemma  \ref{lem:appendix-2pre}
$$ \eps^{-2} \left\langle \h^x, r^\eps \right\rangle = - \frac{q^x}{2}  D^2 \Phi_1(\h^x)[\k,\k] + o(1).$$
Hence we have that
$$I''(\h^x)[\k,\k] =\lim_{\eps \to 0} \eps^{-2} \left(\|\eps \k + r^\eps\|_{\H}^2  + 2 \left\langle \h^x,  \eps \k + r^\eps \right\rangle\right) = \|\k\|^2_{\H} - q^x D^2 \Phi_1(\h^x)[\k,\k].$$
\end{proof}


%
%
%
%
%

The following lemma not only provides conditions for the minimizer $\h^x$, and therefore the energy function $\Lambda(x)$, to be locally regular in $x$, but also shows that the relevant non-degeneracy conditions are ``open'' in the precise sense given below. (As was seen in the example of RoughVol, Corollary \ref{thm:main44}, this is very convenient in checking condition (A5) in a concrete situation.) 
%

\begin{lemma} \label{lem:hx} Assume that $\Phi_1: \H \to \R$ is weakly continuous on bounded sets.
Assume that $\h^x$ is the unique minimizer in $\mathcal{K}^x$, that $\Phi_1$ is $C^{k+1}$-Fr\'echet differentiable for some $k\ge1$ in a neighbourhood of $\h^x$, that $D \Phi(\h^x) \neq 0$, and that $\h^x$ is nondegenerate in the sense that $I''(\h^x)[\k,\k]>0$, for all $\k \ne 0$. Then there exists a neighborhood $V$ of $x$ and $C^{k}$ maps $\hat{h} : V \to \H$, $\hat{q} : V \to \R$ such that for all $y \in V$
\[ \hat{h}(y) = \h^y = \arg\min \{\| \h \|^2_{\H} , \; \; \Phi_1(\h) = y \} \]
is the unique minimizer in $\mathcal{K}^y$, such that $D \Phi(\hat{h}(y)) \neq 0$, $\hat{q}(y) = q^y $ is the (unique) associated Lagrange multiplier from Lemma \ref{lem:appendix-2pre}, and the minimizer $\hat{h}(y) = \h^y$ is again non-degenerate. Finally, the energy function $\Lambda$ is $C^{k+1}$ on $V$.
\end{lemma}

\begin{proof} We start with the remark that, for $y$ sufficiently close to $x$, we have $\mathcal{K}^{y} \ne \emptyset$. Indeed, $\langle D\Phi_1 (\h^x), \k \rangle \ne 0$ for some $\k \in \H$, which we may take of length one, the inverse function theorem applied to $\eps \mapsto \Phi_1 (\h^x + \eps k)$ at $\eps = 0$, then guarantees local controllability together with local boundedness of the energy function, $\Lambda(y) =\Lambda(\Phi_1 (\h^x + \eps (y) k)) \le \tfrac{1}{2} \| h^x + \eps (y) k \|^2_{\H} \lesssim \Lambda (x) + \eps(y)^2$. A $C^k$-map $\Psi : \H \times \mathbb{R} \rightarrow \H \times \mathbb{R}$ is defined, for fixed $x$ and noting $\Psi (\h^x, q^x) = (0, x)$, by
\[ \Psi : (\h, q) \mapsto (\h - q D \Phi_1 (\h), \Phi_1 (\h)). \]
One checks, as in \cite[Prop. 5.1]{kusuoka2008remark} that the
Fr\'echet derivative  $D \Psi (\h^x, q^x)$ of $D \Psi$ at $(\h^x, q^x)$ is
non-degenerate. The inverse function theorem then yields local invertibility of $\Psi$ and more precisely $C^k$-functions
$\hat{h}, \hat{q}$ such that for $y$ in neighbourhood of $x$,
\[ \Psi (\hat{h} (y), \hat{q} (y)) = (0, y), \quad \text{and} \quad (\hat{h}
   (x), \hat{q} (x) ) = (\h^x, q^x) . \]
 One sees moreover, exactly as in the proof of \cite[Prop. 5.1]{kusuoka2008remark}, that $
 \Lambda (y) = \tfrac{1}{2} \| \hat{h} (y) \|^2_\H$ which identifies $\hat{h} (y)$ as minimizer in $
 \mathcal{K}^y$, but uniqueness was not dealt with in \cite{kusuoka2008remark}. Let $\h^y$ be a (possibly 
 different) minimizer in $\mathcal{K}^y$. 
 Provided $D \Phi _{1} ({\h^y}) \ne 0$, Lemma \ref{lem:appendix-2pre} yields a Lagrange multiplier 
 $q^y$ such that, using further Lemma  \ref{lem:appendix-2}, we have $\Psi (\h^y,q^y) = (0,y)$, 
 which identifies $(\h^y,q^y)=(\hat{h} (y), \hat{q} (y))$ by local invertibility. To see that $D \Phi _{1} (
 {\h^y})$ is indeed non-zero, for $y$ close enough to $x$, assume to the contrary that $D \Phi _{1} 
 ({\h^{y_n}}) =0$ for some sequence $y_n \to x$ and minimizers $\h^{y_n} \in \mathcal{K}^{y_n}$. 
 
Since $\| \h^{y_n} \|^2_{\H} = 2 \Lambda (y^n)$  is bounded, by passing to a subsequence, we 
 may assume that $\h^{y_n} \to \bar \h$ weakly. In turn, weak continuity ensures that
 $\Phi(\h^{y_n}) \to \Phi(\bar{\h})$, so that $\bar{\h} \in \mathcal{K}^x$. Since the norm is weakly l.s.c. we have that 
\[ \|\bar{\h}\|_{\H} \leq \liminf _n \|\h^{y_n}\|_{\H} =  \liminf _n \|\hat{h}(y^n)\|_{\H} = \|\hat{h}(x)\|_{\H} =\|\h^x\|_{\H}, \] 
hence $\bar{\h} =h^x $, and the inequalities above are equalities. But this further implies that $\h^{y_n} \to \h^x$ strongly in $H$, hence $D \Phi _{1} (\h^{y_n}) \to D \Phi _{1} ({\h^x}) \ne 0$, 
contradiction.
We thus have shown that $\hat{h} (y) = \h^y$ is the unique minimizer in $\mathcal{K}^y$, (by IFT 
construction) strongly continuous from $y \in \R$ into $\H$. Using also continuity of $\hat{q}$ and, 
consequence of our Fr\'echet regularity assumption, continuity of $D^2 \Phi_1$, Lemma 
\ref{lem:HessianExplicit} shows that $I''(\h^y)$ is again non-degenerate. 
Finally, $C^k$-regularity of $\hat{h}$, $k \ge 1$, implies the same regularity for $\Lambda (y) = \frac{1}{2} \| \h^y \|^2_{\H}$. 
But then Lemma \ref{lem:appendix-2} shows that $\Lambda'  =  \hat{q} \in C^k$ so that in fact 
$\Lambda \in C^{k+1}$, always in some neighbourhood of $x$.
\end{proof}

\section{From option prices to implied volatilities}
%
%
%

\begin{proposition} Set $\eps = t^{1/2}, \bar \eps = t^{H}, k_\eps = x \eps^{1-2H} >0$ and consider a call price expansion of the form, with $\sigma_x^2 \equiv 2\Lambda(x) / (\Lambda'(x)^2)$,
\begin{equation} 
       c(\eps^2,k_\eps) \sim \exp \left( {  - \frac{\I(x)}{\beps^{2}}  } \right)  \eps \, \beps^2 
       \frac{A(x)} 
       { (\I'(x))^2  \sigma_x \sqrt{2\pi}}\ \ \ \text{as $\eps \downarrow0$} .
\end{equation}
Then the implied (squared) volatility expansion (\ref{e:IVI}) holds. The same expansion is implied by a put price expansion of the form
(\ref{equ:main0000}).
\end{proposition}

%

\begin{proof}Write $L_t = - \log c(t,k_t)$ and apply Gao--Lee  \cite[Corollary 7.1 - Equation (7.2)]{gao2014asymptotics}. Then
\begin{equation}\label{limit:gaolee2}
\Bigg|
\frac{1}{t}G^2_-(k_t,L_t-\frac{3}{2}\log(L_t)+\log(\frac{k_t}{4\sqrt{\pi}}))
-\sigma_I^2(k_t)
\Bigg|
=o\bigg(\frac{k_t^2}{L_t^2 t }\bigg)
\end{equation} 
where $G_-(k,u)$ denotes $\sqrt{2}(\sqrt{u+k}-\sqrt{u})$ and Gao-Lee $V$ is $\sqrt{t}\sigma_I(k_t)$.  We compute
\[
\begin{split}
&L_t-\frac{3}{2}\log(L_t)+\log(\frac{k_t}{4\sqrt{\pi}})\\
&=
\frac{\Lambda(x)}{t^{2H}}  - \log \frac{A(x)}{ (\Lambda'(x))^2  \sigma_x \sqrt{2\pi}}
-\frac{3}{2}\log(L_t)+\log(\frac{k_t}{4\sqrt{\pi}})
- (1/2+2H)\log t +o(1)
\end{split}
\]
and take care of the logarithmic terms in $t$:
\[
\begin{split}
&-\frac{3}{2}\log(L_t)+\log(\frac{k_t}{4\sqrt{\pi}})
- (1/2+2H)\log t\\
&= (- (1/2+2H)+3H+(1/2-H)) \log t  -\frac{3}{2}\log \Lambda(x)
+\log(\frac{x}{4\sqrt{\pi}}) +o(1)
\\
&= -\frac{3}{2}\log \Lambda(x)
+\log\big(\frac{x}{4\sqrt{\pi}}\big) +o(1) \;.
\end{split}
\]
So
\[
L_t-\frac{3}{2}\log(L_t)+\log(\frac{k_t}{4\sqrt{\pi}})-
\frac{\Lambda(x)}{t^{2H}}
\sim  - \log \frac{ A(x)(2\Lambda(x))^{3/2} }{ (\Lambda'(x))^2  \sigma_x x}  \;.
\]
Now, we note that
\begin{equation}\label{exp:G:gaolee}
G^2_-(k,u)= 
\frac{k^2}{2u}
-
\frac{k^3}{4u^2}
+
O\bigg(\frac{k^4}{u^3}\bigg)
\end{equation}
for $k/u \downarrow 0$. The second term in the right-hand side will be asymptotically equivalent to
\begin{equation}\label{second_term_gaolee}
-
\frac{k_t^3}{4 t L_t^2}
=
\begin{cases}
 o(t^{2H}) & \mbox{ if } H<1/2 \\ 
\frac{x^2}{2\Lambda(x)^2} \log(\exp( -
\frac{ x
}{ 2}))t+o(t)
 & \mbox{ if } H=1/2 \;.
\end{cases} 
\end{equation}
In both cases, the next term $\frac{k_t^4}{t L_t^3} = O(t^{1+2H})$ will be negligible.

\noindent If $H<1/2$, the equations above tell us that 
\[
\begin{split}
\frac{1}{t}G^2_-(k_t,L_t-\frac{3}{2}\log(L_t)+\log(\frac{k_t}{4\sqrt{\pi}}))
&=
\frac{x^2t^{1-2H}}{2t \big(\frac{\Lambda(x)}{t^{2H}}
- \log \frac{ A(x)(2\Lambda(x))^{3/2} }{ (\Lambda'(x))^2  \sigma_x x} +o(1)
\big) } +o(t^{2H})\\
&=
\frac{x^2}{2\Lambda(x)}
+
t^{2H}
\frac{x^2}{2\Lambda(x)^2}\log \frac{2 A(x) \Lambda(x)  }{ x \Lambda'(x)}+o(t^{2H}). 
\end{split}
\]
We apply (\ref{limit:gaolee2}), noticing that 
$o(k_t^2/(L_t^2 t))=o(t^{2H})$, and the claimed expansion follows for $H<1/2$.

In case $H=1/2$, the term \eqref{second_term_gaolee}  is also relevant. In this case the statement follows the same way, once we take this term into account in \eqref{exp:G:gaolee}.
\end{proof}

\bibliographystyle{plain}
\bibliography{roughvol}

\begin{thebibliography}{10}

\bibitem{aida2007semi}
Shigeki Aida.
\newblock Semi-classical limit of the bottom of spectrum of a schr{\"o}dinger
  operator on a path space over a compact riemannian manifold.
\newblock {\em Journal of Functional Analysis}, 251(1):59--121, 2007.

\bibitem{alos2007short}
Elisa Al{\`o}s, Jorge~A Le{\'o}n, and Josep Vives.
\newblock On the short-time behavior of the implied volatility for
  jump-diffusion models with stochastic volatility.
\newblock {\em Finance and Stochastics}, 11(4):571--589, 2007.

\bibitem{arous1988methods}
G{\'e}rard~Ben Arous.
\newblock Methods de laplace et de la phase stationnaire sur l'espace de
  wiener.
\newblock {\em Stochastics}, 25(3):125--153, 1988.

\bibitem{azencott1982formule}
Robert Azencott.
\newblock Formule de {T}aylor stochastique et d\'eveloppement asymptotique
  d'int\'egrales de {F}eynman.
\newblock In {\em Seminar on {P}robability, {XVI}, {S}upplement}, volume 921 of
  {\em Lecture Notes in Math.}, pages 237--285. Springer, Berlin-New York,
  1982.

\bibitem{azencott1985petites}
Robert Azencott.
\newblock Petites perturbations al{\'e}atoires des systemes dynamiques:
  d{\'e}veloppements asymptotiques.
\newblock {\em Bulletin des sciences math{\'e}matiques}, 109(3):253--308, 1985.

\bibitem{bayer2017short}
C.~Bayer, P.~K. Friz, A.~Gulisashvili, B.~Horvath, and B.~Stemper.
\newblock Short-time near-the-money skew in rough fractional volatility models.
\newblock {\em Quantitative Finance}, pages 1--20, 2018.

\bibitem{bayer2016pricing}
Christian Bayer, Peter Friz, and Jim Gatheral.
\newblock Pricing under rough volatility.
\newblock {\em Quantitative Finance}, 16(6):887--904, 2016.

\bibitem{bayer2019regularity}
Christian Bayer, Peter~K Friz, Paul Gassiat, Jorg Martin, and Benjamin Stemper.
\newblock A regularity structure for rough volatility.
\newblock {\em Mathematical Finance}, 30(3):782--832, 2020.

\bibitem{bellingeri2020singular}
Carlo Bellingeri, Peter~K. Friz, and M\'at\'e Gerencs\'er.
\newblock Singular paths spaces and applications.
\newblock {\em Preprint arXiv:2003.03352}, 2020.

\bibitem{ben1988developpement}
G~Ben~Arous.
\newblock D{\'e}veloppement asymptotique du noyau de la chaleur hypoelliptique
  hors du cut-locus.
\newblock In {\em Annales scientifiques de l'Ecole normale sup{\'e}rieure},
  volume~21, pages 307--331, 1988.

\bibitem{berestycki2004computing}
Henri Berestycki, J{\'e}r{\^o}me Busca, and Igor Florent.
\newblock Computing the implied volatility in stochastic volatility models.
\newblock {\em Communications on Pure and Applied Mathematics: A Journal Issued
  by the Courant Institute of Mathematical Sciences}, 57(10):1352--1373, 2004.

\bibitem{bismut1984large}
Jean-Michel Bismut.
\newblock Large deviations and the malliavin calculus.
\newblock {\em Birkhauser Prog. Math.}, 45, 1984.

\bibitem{bruned2017rough}
Y.~Bruned, I.~Chevyrev, P.K. Friz, and R.~Preiss.
\newblock A rough path perspective on renormalization.
\newblock {\em Journal of Functional Analysis}, 277(11):108283, 2019.

\bibitem{cannizzaro2017malliavin}
Giuseppe Cannizzaro, Peter~K Friz, and Paul Gassiat.
\newblock Malliavin calculus for regularity structures: The case of gpam.
\newblock {\em Journal of Functional Analysis}, 272(1):363--419, 2017.

\bibitem{de2018local}
Stefano De~Marco and Peter~K Friz.
\newblock Local volatility, conditioned diffusions, and varadhan's formula.
\newblock {\em SIAM Journal on Financial Mathematics}, 9(2):835--874, 2018.

\bibitem{deuschel2014marginala}
Jean-Dominique Deuschel, Peter~K. Friz, Antoine Jacquier, and Sean Violante.
\newblock Marginal density expansions for diffusions and stochastic volatility
  {I}: {T}heoretical foundations.
\newblock {\em Comm. Pure Appl. Math.}, 67(1):40--82, 2014.

\bibitem{deuschel2014marginalb}
Jean-Dominique Deuschel, Peter~K. Friz, Antoine Jacquier, and Sean Violante.
\newblock Marginal density expansions for diffusions and stochastic volatility
  {II}: {A}pplications.
\newblock {\em Comm. Pure Appl. Math.}, 67(2):321--350, 2014.

\bibitem{euch2019short}
Omar El~Euch, Masaaki Fukasawa, Jim Gatheral, and Mathieu Rosenbaum.
\newblock Short-term at-the-money asymptotics under stochastic volatility
  models.
\newblock {\em SIAM Journal on Financial Mathematics}, 10(2):491--511, 2019.

\bibitem{el2019characteristic}
Omar El~Euch and Mathieu Rosenbaum.
\newblock The characteristic function of rough heston models.
\newblock {\em Mathematical Finance}, 29(1):3--38, 2019.

\bibitem{forde2019small}
Martin Forde, Stefan Gerhold, and Benjamin Smith.
\newblock Small-time, large-time, and asymptotics for the rough heston model.
\newblock {\em Mathematical Finance}, October 2020.

\bibitem{forde2012small}
Martin Forde, Antoine Jacquier, and Roger Lee.
\newblock The small-time smile and term structure of implied volatility under
  the heston model.
\newblock {\em SIAM Journal on Financial Mathematics}, 3(1):690--708, 2012.

\bibitem{forde2017asymptotics}
Martin Forde and Hongzhong Zhang.
\newblock Asymptotics for rough stochastic volatility models.
\newblock {\em SIAM Journal on Financial Mathematics}, 8(1):114--145, 2017.

\bibitem{fouque2000derivatives}
Jean-Pierre Fouque, George Papanicolaou, and K~Ronnie Sircar.
\newblock {\em Derivatives in financial markets with stochastic volatility}.
\newblock Cambridge University Press, 2000.

\bibitem{friz2017option}
Peter Friz, Stefan Gerhold, and Arpad Pinter.
\newblock Option pricing in the moderate deviations regime.
\newblock {\em Mathematical Finance}, 28(3):962--988, August 2017.

\bibitem{friz2010generalized}
Peter Friz and Harald Oberhauser.
\newblock A generalized fernique theorem and applications.
\newblock {\em Proceedings of the American Mathematical Society},
  138(10):3679--3688, 2010.

\bibitem{friz2006variation}
Peter Friz and Nicolas Victoir.
\newblock A variation embedding theorem and applications.
\newblock {\em Journal of Functional Analysis}, 239(2):631--637, 2006.

\bibitem{friz2019explicit}
Peter~K Friz, Paul Gassiat, and Paolo Pigato.
\newblock {Short dated smile under Rough Volatility: asymptotics and numerics}.
\newblock {\em arXiv preprint arXiv:2009.08814}, 2020.

\bibitem{friz2015large}
Peter~K Friz, Jim Gatheral, Archil Gulisashvili, Antoine Jacquier, and Josef
  Teichmann.
\newblock {\em Large deviations and asymptotic methods in finance}, volume 110.
\newblock Springer, 2015.

\bibitem{friz2014course}
P.K. Friz and M.~Hairer.
\newblock {\em A Course on Rough Paths: With an Introduction to Regularity
  Structures}.
\newblock Universitext. Springer International Publishing, 2014.

\bibitem{fukasawa2011asymptotic}
Masaaki Fukasawa.
\newblock Asymptotic analysis for stochastic volatility: martingale expansion.
\newblock {\em Finance and Stochastics}, 15(4):635--654, 2011.

\bibitem{fukasawa2017short}
Masaaki Fukasawa.
\newblock Short-time at-the-money skew and rough fractional volatility.
\newblock {\em Quantitative Finance}, 17(2):189--198, 2017.

\bibitem{gao2014asymptotics}
Kun Gao and Roger Lee.
\newblock Asymptotics of implied volatility to arbitrary order.
\newblock {\em Finance and Stochastics}, 18(2):349--392, 2014.

\bibitem{gassiat2018martingale}
Paul Gassiat.
\newblock On the martingale property in the rough bergomi model.
\newblock {\em Electron. Commun. Probab.}, 24:9 pp., 2019.

\bibitem{gatheral2011volatility}
Jim Gatheral.
\newblock {\em The volatility surface: a practitioner's guide}, volume 357.
\newblock John Wiley \& Sons, 2011.

\bibitem{gatheral2018volatility}
Jim Gatheral, Thibault Jaisson, and Mathieu Rosenbaum.
\newblock Volatility is rough.
\newblock {\em Quantitative Finance}, pages 1--17, 2018.

\bibitem{gulisashvili2017large}
Archil. Gulisashvili.
\newblock Large deviation principle for volterra type fractional stochastic
  volatility models.
\newblock {\em SIAM Journal on Financial Mathematics}, 9(3):1102--1136, 2018.

\bibitem{2018arXiv180800421G}
Archil Gulisashvili.
\newblock Gaussian stochastic volatility models: Scaling regimes, large
  deviations, and moment explosions.
\newblock {\em Stochastic Processes and their Applications}, 130(6):3648 --
  3686, 2020.

\bibitem{hagan2015probability}
Patrick Hagan, Andrew Lesniewski, and Diana Woodward.
\newblock Probability distribution in the sabr model of stochastic volatility.
\newblock In {\em Large deviations and asymptotic methods in finance}, pages
  1--35. Springer, 2015.

\bibitem{hagan2002managing}
Patrick~S Hagan, Deep Kumar, Andrew~S Lesniewski, and Diana~E Woodward.
\newblock Managing smile risk.
\newblock {\em The Best of Wilmott}, 1:249--296, 2002.

\bibitem{hairer2014theory}
Martin Hairer.
\newblock A theory of regularity structures.
\newblock {\em Inventiones mathematicae}, 198(2):269--504, 2014.

\bibitem{henry2008analysis}
Pierre Henry-Labordere.
\newblock {\em Analysis, geometry, and modeling in finance: Advanced methods in
  option pricing}.
\newblock Chapman and Hall/CRC, 2008.

\bibitem{heston1993closed}
Steven~L Heston.
\newblock A closed-form solution for options with stochastic volatility with
  applications to bond and currency options.
\newblock {\em The review of financial studies}, 6(2):327--343, 1993.

\bibitem{inahama2010stochastic}
Yuzuru Inahama.
\newblock A stochastic taylor-like expansion in the rough path theory.
\newblock {\em Journal of Theoretical Probability}, 23(3):671--714, 2010.

\bibitem{inahama2007asymptotic}
Yuzuru Inahama and Hiroshi Kawabi.
\newblock Asymptotic expansions for the laplace approximations for it{\^o}
  functionals of brownian rough paths.
\newblock {\em Journal of Functional Analysis}, 243(1):270--322, 2007.

\bibitem{inahama2008laplace}
Yuzuru Inahama and Hiroshi Kawabi.
\newblock On the laplace-type asymptotics and the stochastic taylor expansion
  for ito functionals of brownian rough paths (proceedings of rims workshop on
  stochastic analysis and applications).
\newblock In {\em Proceedings of RIMS Workshop on Stochastic Analysis and
  Applications}, pages 139--152. RIMS Kokyuroku Bessatsu. B6, 2008.

\bibitem{jacquier2018pathwise}
Antoine Jacquier, Mikko~S. Pakkanen, and Henry Stone.
\newblock Pathwise large deviations for the rough bergomi model.
\newblock {\em Journal of Applied Probability}, 55(4):1078--1092, 2018.

\bibitem{jourdain2004loss}
Benjamin Jourdain.
\newblock Loss of martingality in asset price models with lognormal stochastic
  volatility.
\newblock {\em preprint Cermics}, 267:2004, 2004.

\bibitem{kusuoka2008remark}
Shigeo Kusuoka and Yasufumi Osajima.
\newblock A remark on the asymptotic expansion of density function of wiener
  functionals.
\newblock {\em Journal of Functional Analysis}, 255(9):2545--2562, 2008.

\bibitem{kusuoka1991precise}
Shigeo Kusuoka and Daniel~W Stroock.
\newblock Precise asymptotics of certain wiener functionals.
\newblock {\em Journal of functional analysis}, 99(1):1--74, 1991.

\bibitem{lions2007correlations}
P-L Lions and Marek Musiela.
\newblock Correlations and bounds for stochastic volatility models.
\newblock In {\em Annales de l'Institut Henri Poincare (C) Non Linear
  Analysis}, volume~24, pages 1--16. Elsevier Masson, 2007.

\bibitem{neuman2018fractional}
Eyal Neuman, Mathieu Rosenbaum, et~al.
\newblock Fractional brownian motion with zero hurst parameter: a rough
  volatility viewpoint.
\newblock {\em Electronic Communications in Probability}, 23, 2018.

\bibitem{osajima2015general}
Yasufumi Osajima.
\newblock General asymptotics of wiener functionals and application to implied
  volatilities.
\newblock In {\em Large Deviations and Asymptotic Methods in Finance}, pages
  137--173. Springer, 2015.

\bibitem{pham2010large}
Huy{\^e}n Pham.
\newblock Large deviations in finance.
\newblock {\em Third SMAI European Summer School in Financial Mathematics},
  2010.

\bibitem{robertson2010sample}
Scott Robertson.
\newblock Sample path large deviations and optimal importance sampling for
  stochastic volatility models.
\newblock {\em Stochastic Processes and their applications}, 120(1):66--83,
  2010.

\bibitem{SKM}
Stefan~G. Samko, Anatoly~A. Kilbas, and Oleg~I. Marichev.
\newblock {\em Fractional integrals and derivatives}.
\newblock Gordon and Breach Science Publishers, Yverdon, 1993.
\newblock Theory and applications, Edited and with a foreword by S. M.
  Nikol$\prime$ski\u\i , Translated from the 1987 Russian original, Revised by
  the authors.

\bibitem{sin1998complications}
Carlos~A Sin.
\newblock Complications with stochastic volatility models.
\newblock {\em Advances in Applied Probability}, 30(1):256--268, 1998.

\bibitem{stein1991stock}
Elias~M Stein and Jeremy~C Stein.
\newblock Stock price distributions with stochastic volatility: an analytic
  approach.
\newblock {\em The review of financial studies}, 4(4):727--752, 1991.

\end{thebibliography}

\end{document}